\theoremstyle{plain}
\newtheorem{theorem}{Theorem}[section]
\newtheorem{lemma}[theorem]{Lemma}
\newtheorem{proposition}[theorem]{Proposition}
\newtheorem{corollary}[theorem]{Corollary}
\theoremstyle{definition}
\newtheorem{definition}[theorem]{Definition}
\newtheorem{remark}[theorem]{Remark}
\newcommand{\ua}{\mathord{\uparrow}}
\newcommand{\da}{\mathord{\downarrow}}
\newcommand{\rom}[1]{\rm{\uppercase\expandafter{\romannumeral #1}}}
\newcommand{\set}[2]{\{#1\mid#2\}}
\newcommand{\oneset}[1]{\{#1\}}
\newcommand{\R}{\overline{\mathbb R}_{+} }
\newcommand{\VV}{{\mathcal V}}
\newcommand{\J}{{\mathcal J}}
\newcommand{\N}{{\mathbb N}}
\newcommand{\Ncof}{{\mathbb N}_{\rm cof}}
\newcommand{\dcpo}{\mathbf{DCPO}}
\renewcommand{\O}{{\mathcal O}}
\newcommand{\SSS}{{\mathcal S}}
\newcommand{\MM}{{\mathcal M}}
\newcommand{\Val}{{\mathrm{Val}}}
\newcommand\upc{\ua}
\newcommand\dc{\da}
\newcommand{\real}{\mathbb{R}}
\newcommand\Rl{\real_\ell}
\newcommand\Rp{\real_+}
\newcommand\creal{\overline{\real}_+}
\newcommand\directed{\sideset{}{^{\,\makebox[0pt]{$\scriptstyle\uparrow$}\;}}}
\newcommand\filtered{\sideset{}{^{\,\makebox[0pt]{$\scriptstyle\downarrow$}\;}}}
\newcommand\dsup{\directed\sup}
\newcommand\finf{\filtered\inf}
\newcommand\fcap{\filtered\bigcap}
\newcommand\eqdef{\mathrel{\buildrel \text{def}\over=}}
\newcommand\diff{\smallsetminus}
\newcommand\Max{\mathop{\mathrm{Max}}}
\newcommand\Smyth{\mathcal{Q}}
\newcommand\img{\mathop{\mathrm{Im}}}
\newcommand\egame[1]{\mathfrak{e}_{#1}}
\title{Separating minimal valuations, point-continuous valuations and continuous valuations}
\author[1]{Jean Goubault-Larrecq
    \thanks{Email: \texttt{goubault@lsv.fr}}}
\author[2]{Xiaodong Jia
    \thanks{Email (corresponding author): \texttt{jiaxiaodong@hnu.edu.cn}}}
\affil[1]{Universit\'e Paris-Saclay, CNRS, ENS Paris-Saclay, Laboratoire M\'ethodes Formelles, 91190, Gif-sur-Yvette, France}
\affil[2]{School of Mathematics, Hunan University, Changsha, Hunan, 410082, China}
\begin{document}
\maketitle

\begin{abstract}

We give two concrete examples of continuous valuations on dcpo's to separate
minimal valuations, point-continuous valuations and continuous valuations:

\begin{enumerate}

\item  Let $\J$ be the Johnstone's non-sober dcpo, and $\mu$ be the continuous 
valuation on $\J$ with $\mu(U) =1$ for nonempty Scott opens $U$ and 
$\mu(U) = 0$ for $U=\emptyset$. Then $\mu$ is a point-continuous valuation on
$\J$ that is not minimal.

\item  Lebesgue measure extends to a measure on the Sorgenfrey line~$\Rl$.
  Its restriction to the open subsets of $\Rl$ is a continuous
  valuation~$\lambda$.  Then its image valuation
  $\overline\lambda$ through the embedding of $\Rl$ into its Smyth
  powerdomain $\Smyth\Rl$ in the Scott topology is a continuous valuation that is not
  point-continuous.
  \end{enumerate} 
 We believe that our construction $\overline\lambda$ might be useful in 
  giving counterexamples displaying the failure of the general Fubini-type equations on 
  dcpo's.
\end{abstract}

\section{Introduction}
\label{sec:introduction}

Continuous valuations on topological spaces are analogues of measures 
on measurable spaces. In domain theory, continuous valuations on dcpo's 
with the Scott topology are employed by computer scientists and 
mathematicians to give denotational meanings to probabilistic programming
 languages. This line of work dates back to Jones and Plotkin~\cite{jones89, jones90}. 
 Indeed, in her Ph.D. thesis, Jones developed the theory of valuations and used the valuations 
 monad $\VV$ on the category $\dcpo$ of dcpo's and Scott-continuous maps to 
 give denotational semantics to probabilistic programming languages. 

While the valuations monad on the category $\dcpo$ enjoys many nice 
properties, for example this monad is a strong monad and dcpo-enriched, 
it is unknown whether it is a commutative monad on the same category. 
As a result, it would be difficult, using the valuations monad,  to establish 
the so-called contextual equivalence between programs that only differ 
in the order of sampling random variables. To combat this problem, 
the authors in \cite{monad-m} constructed submonads of the valuations monad $\VV$
that are commutative on the category of dcpo's. Among their construction, 
there is a least submonad of the valuations monad that consists of which we 
call \emph{minimal valuations}. Minimal valuations are these continuous 
valuations that are in the d-closure of the simple valuations; precisely, they 
consist of directed suprema of simple valuations, directed suprema of directed 
suprema of simple valuations and so forth, transfinitely. Every minimal valuation 
is a point-continuous valuation in the sense of Heckmann~\cite{heckmann96, monad-m}. 
Heckmann\cite{heckmann96} proved that the class of 
point-continuous valuations on space $X$ form the sobrification of the space of simple
valuations on $X$, both in the so-called weak topology.
Every point-continuous valuation is a continuous valuation~\cite[Proposition~3.1]{heckmann95}.

It is relatively easy to see that on general topological spaces minimal
valuations form a strictly smaller class than that of point-continuous valuations.
However, it is unknown whether the same is true on dcpo's with the Scott topology. 
The first example in this paper clarifies the difference between minimal valuations and 
point-continuous valuations on dcpo's. Concretely,  we consider the 
well-known Johnstone's non-sober dcpo $ \J$ and the ``constant-1 valuation'' 
$\mu$ on $\J$  defined by $\mu (U) = 1$ if $U$ is nonempty and $\mu(\emptyset) = 0$. 
We show that every bounded continuous valuation on $\J$ can be written as a sum 
of some discrete valuation and a scalar multiple of $\mu$. This enables us to 
conclude that every continuous valuation on $\J$ is actually point-continuous. 
Moreover, we prove that the continuous valuation~$\mu$ is not in the d-closure 
of simple valuations, hence it serves as an example that separates minimal valuations 
from point-continuous valuations. This example is included in Section~\ref{sec:johnstone}.

Similar to the difference between minimal valuations and point-continuous valuations, continuous valuations 
that are not point-continuous can be easily found on topological spaces. For example, Lebesgue
measure, restricted to the usual opens of reals, is a continuous valuation that
is not point-continuous.  However,  it has been unknown whether point-continuous valuations differ from 
continuous valuation on dcpo's since 1996.
The second goal of this note is to give an example of a continuous valuation on a \emph{dcpo} that is not point-continuous. 
In order to find such an example, one is tempted to find the simplest
possible example, and typically to find a continuous valuation that
takes only two values, $0$ or $1$, and hoping that it would not be
point-continuous.  However, we notice that such a strategy
cannot work, as we will see in Section~\ref{sec:any-non-point}.
Hence, we will have to work a bit more.  We show how one can build
certain continuous valuation on the Sorgenfrey line $\Rl$ in
Section~\ref{sec:sorgenfrey}, including one based on Lebesgue measure $\lambda$.
We study the compact subsets of $\Rl$ in
Section~\ref{sec:compact-subsets-rl}, as a preparation to studying the
dcpo $\Smyth\Rl$ of compact subsets of $\Rl$ under reverse inclusion,
and showing that the natural map from $\Rl$ to $\Smyth\Rl$ is a
subspace embedding in Section~\ref{sec:dcpo-smythrl}.  We transport
Lebesgue measure $\lambda$ along this embedding, and we will show that the
resulting continuous valuation  $\overline\lambda$ on $\Smyth\Rl$ fails to be
point-continuous in Section~\ref{sec:continuous-non-point}.  We believe that our construction $\overline\lambda$ might be useful in 
  giving counterexamples displaying the failure of the general Fubini-type equations on 
  dcpo's, which is a longstanding open problem in domain theory. More detailed 
  discussion about this part is included in the concluding remarks.

\section{Preliminaries}

We use standard concepts and notations from topology, measure theory, 
and domain theory. The reader is referred to \cite{abramsky94, goubault13, gierz03}
for topology and domain theory, and to \cite{Royden88} for measure theory. 

\subsection{Valuations}

On a topological space $X$, a valuation $\nu$ is a  map 
from the set $\mathcal OX$  of opens of $X$ to the extended reals~$\R$, satisfying
 \emph{strictness} $(\nu(\emptyset) = 0)$, \emph{monotonicity} 
 $(U\subseteq V \Rightarrow  \nu(U) \leq \nu(V))$ and \emph{modularity} 
 $(\nu(U) + \nu(V) = \nu(U\cup V) + \nu(U\cap V))$. A valuation $\nu$ on $X$
 is called \emph{continuous} if it is Scott-continuous from $\mathcal OX$ to $\R$,
 and  it is called \emph{bounded} if $\nu(X) < \infty$.  Continuous valuations 
 are ordered in the \emph{stochastic order}: $\nu_{1} \leq \nu_{2}$ if and 
 only if $\nu_{1}(U) \leq \nu_{2}(U)$ for all opens $U$ of~$X$.  The set of 
 all continuous valuations on~$X$, which we denote as $\VV X$, is a dcpo 
 in the stochastic order. 
Canonical examples of continuous valuations on $X$ include 
\emph{Dirac valuations} $\delta_{x}$ for $x\in X$, where 
$\delta_{x}(U) = 1$ if $x\in U$ and $\delta_{x}(U) = 0$ if $x\notin U$.
As a dcpo, $\VV X$ is closed under suprema, it also is closed under scalar multiplication and sum, 
 for $\nu_{i}\in \VV X$ and $r_{i}\in [0, \infty[$, $i =1, \cdots, n$, 
 the sum  $\sum_{i=1}^{n} r_{i}\nu_{i}$ which is defined by 
 $(\sum_{i=1}^{n} r_{i}\nu_{i}) (U) = \sum_{i=1}^{n} r_{i}\nu_{i}(U)$ 
 also is in $\VV X$. For $r_{i}\in [0, \infty[$ and $x_{i}\in X$, $i=1, \cdots, n$, 
 the finite sum $\sum_{i=1}^{n} r_{i}\delta_{x_{i}}$ is called a 
\emph{simple valuation} on $X$. 
The set of all simple valuations on $X$ is denoted by $\SSS X$. 
Valuations of the form 
$\sum_{i}^{\infty}r_{i}\delta_{x_{i}} = \sup_{n\in \mathbb N} \sum_{i=1}^{n}r_{i}\delta_{x_{i}}$
are called \emph{discrete valuations}. 
The smallest sub-dcpo of $\VV X$ that contains $\SSS X$ (hence all discrete valuations on $X$)
 is denoted by $\MM X$, and every valuations in $\MM X$ is called \emph{minimal}. It is 
easy to see that for each minimal valuation $\nu$, $\nu$ is either a simple
 valuation, or a directed supremum of simple valuations, or a directed
 supremum of directed suprema of simple valuations $\cdots$, transfinitely. 
A valuation $\nu$ on a 
 space $X$ is \emph{point-continuous} if and only if
for every open subset $U$, for every real number $r$ such that
$0 \leq r < \nu (U)$, there is a finite subset $A$ of $U$ such that
$\nu (V) > r$ for every open neighborhood $V$ of $A$. 
Minimal valuations are point-continuous~\cite{monad-m}, and 
point-continuous valuations are continuous valuations~\cite[Proposition~3.1]{heckmann95}. 

\subsection{Ring of sets}

We will need the notion of \emph{Boolean ring of sets} (\emph{ring of sets} for short). On a 
set $X$, a ring of sets on $X$ is a lattice of sets consisting of subsets of $X$ that also is closed
under relative complements. For a topological space $X$, the set $\O X$ of all opens of $X$
is a lattice of sets, and the ring of sets \emph{generated} by $\O X$ is the intersection of all 
rings of sets on $X$ that contain $\O X$, and it is denoted by $\mathcal A(\O X)$.

\begin{lemma}{\rm \cite[Lemma IV-9.2]{gierz03}}
Let $X$ be a topological space. For each set $A$ in $\mathcal A(\O X)$, 
the ring of sets generated by open sets of $X$, $A$ is of the form of
 a finite disjoint union $ \coprod_{i=1}^{n} U_{i}\setminus V_{i}  $, 
 where $U_{i}$ and $V_{i}$ are open subsets of~$X$. One can 
 also stipulate that $V_{i}\subseteq U_{i}$ for each $i$.
\end{lemma}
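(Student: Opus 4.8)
The plan is to identify the ring $\mathcal A(\O X)$ concretely with the family
\[
  \mathcal D \eqdef \Bigl\{\, \coprod_{i=1}^{n} (U_i \setminus V_i) \;\Big|\; n \in \N,\ U_i, V_i \in \O X \,\Bigr\}
\]
of finite disjoint unions of differences of opens, and then to establish the two inclusions $\mathcal D \subseteq \mathcal A(\O X)$ and $\mathcal A(\O X) \subseteq \mathcal D$. The first is immediate: any ring of sets containing $\O X$ is closed under relative complement, hence contains every $U \setminus V$, and is closed under finite unions, hence contains every finite (a fortiori disjoint) union of such sets; applying this to the smallest such ring $\mathcal A(\O X)$ yields $\mathcal D \subseteq \mathcal A(\O X)$. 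The bulk of the work is the reverse inclusion, for which it suffices to show that $\mathcal D$ is itself a ring of sets containing $\O X$, since $\mathcal A(\O X)$ is by definition the intersection of all such rings.

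That $\mathcal D$ contains $\O X$ is clear, writing $U = U \setminus \emptyset$. To see that $\mathcal D$ is closed under finite intersections, I would first record the identity $(U \setminus V) \cap (U' \setminus V') = (U \cap U') \setminus (V \cup V')$, so that the intersection of two differences of opens is again one; distributing over the two disjoint unions then expresses the intersection of two members of $\mathcal D$ as a disjoint union of such differences, the required pairwise disjointness being inherited from that of the two original families.

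The delicate point, and the one I expect to be the main obstacle, is closure under relative complement. Here the key is the binary computation
\[
  (U \setminus V) \setminus (U' \setminus V') = \bigl(U \setminus (V \cup U')\bigr) \;\sqcup\; \bigl((U \cap V') \setminus V\bigr),
\]
whose two summands are both differences of opens and are genuinely disjoint once one arranges $V' \subseteq U'$, which is always possible by replacing $V'$ with the open set $U' \cap V'$ and using $U' \setminus V' = U' \setminus (U' \cap V')$. From this binary case the general case follows by distributing the relative complement over the disjoint unions: for $A = \coprod_i A_i$ and $B = \coprod_j B_j$ in $\mathcal D$ one has $A \setminus B = \coprod_i \bigcap_j (A_i \setminus B_j)$, where each $A_i \setminus B_j \in \mathcal D$ by the binary case, so $A \setminus B \in \mathcal D$ by the closure under finite intersections just established. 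Closure under finite unions is then a consequence via $A \cup B = (A \setminus B) \sqcup B$, completing the verification that $\mathcal D$ is a ring.

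Finally, the normalization $V_i \subseteq U_i$ claimed in the statement comes for free from the same device: given any representation $\coprod_{i=1}^{n}(U_i \setminus V_i)$, replace each $V_i$ by $U_i \cap V_i$, which is open, contained in $U_i$, and satisfies $U_i \setminus (U_i \cap V_i) = U_i \setminus V_i$, leaving the set unchanged term by term.
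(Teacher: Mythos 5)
The paper does not prove this lemma at all --- it is quoted verbatim from Gierz et al.\ \cite[Lemma~IV-9.2]{gierz03} --- so there is no in-paper argument to compare against. Your proof is correct and is essentially the standard one: show that the finite disjoint unions of crescents already form a ring of sets containing $\O X$, the only delicate point being closure under relative complements, which your binary identity $(U \setminus V) \setminus (U' \setminus V') = \bigl(U \setminus (V \cup U')\bigr) \sqcup \bigl((U \cap V') \setminus V\bigr)$ (after normalizing $V' \subseteq U'$) handles correctly; the only cosmetic gap is the degenerate case of an empty union for $B$ in the identity $A \setminus B = \coprod_i \bigcap_j (A_i \setminus B_j)$, which should be treated separately since an empty intersection is the whole space, but then $A \setminus \emptyset = A$ trivially.
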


For open subsets $U$ and $V$ of $X$, the set difference $U\setminus V$ is called a \emph{crescent}. 
Since $U\setminus V = U\setminus (U\cap V)$,
in the sequel, when we write a set $A\in \mathcal A(\O X)$ as $\coprod_{i=1}^{n} U_{i}\setminus V_{i}$,
we always assume that $V_{i}\subseteq U_{i}$ for each $i$.

\begin{lemma}{\rm \cite[Section 3.3]{heckmann96}}
\label{lemma:restritoring}
Let $X$ be a topological space and $\mu$ be a bounded continuous 
valuation on $X$.  For each set $A= \coprod_{i=1}^{n} U_{i}\setminus V_{i}$
 in $\mathcal A(\O X)$, define $\mu_{A} = \sum_{i=1}^{n}  \mu|_{U_{i}}^{V_{i} }$, 
 where for open sets $U, V$ and $W$, 
 $\mu| _{U}^{V}(W) = \mu(W\cap U)-\mu(W\cap V\cap U) $. 
 Then $\mu_{A}$ is a bounded continuous valuation for each 
 $A\in \mathcal A(\O X)$. In particular, $\mu_{U\setminus V} = \mu| _{U}^{V}$. 
 Note that $\mu_{U} = \mu|_{U}^{\emptyset}$ for every open subset $U$.
\end{lemma}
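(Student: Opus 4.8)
The plan is to build $\mu_A$ from its single-crescent pieces, verify those are bounded continuous valuations, and then spend the real effort on showing that $\mu_A$ does not depend on the chosen decomposition of $A$. Since $W\cap V\cap U\subseteq W\cap U$ and $\mu$ is bounded, each difference $\mu|_U^V(W)=\mu(W\cap U)-\mu(W\cap V\cap U)$ is a finite nonnegative real, so I may assume $V\subseteq U$ and write $\mu|_U^V(W)=\mu(W\cap U)-\mu(W\cap V)$. Strictness is immediate, and modularity of $\mu|_U^V$ follows by applying modularity of $\mu$ separately to $\{W\cap U\}$ and $\{W\cap V\}$ and subtracting, which is legitimate because every value is finite. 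For monotonicity, given $W_1\subseteq W_2$, I rewrite $\mu|_U^V(W_1)$ via modularity of $\mu$ applied to $W_1\cap U$ and $W_2\cap V$ (whose intersection is $W_1\cap V$) to get $\mu|_U^V(W_1)=\mu((W_1\cap U)\cup(W_2\cap V))-\mu(W_2\cap V)$, so that $\mu|_U^V(W_2)-\mu|_U^V(W_1)=\mu(W_2\cap U)-\mu((W_1\cap U)\cup(W_2\cap V))\ge 0$ by monotonicity of $\mu$.

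The one delicate point in this first step is Scott-continuity, since a pointwise difference of Scott-continuous maps into $\R$ need not be Scott-continuous; here boundedness is exactly what rescues the argument. For a directed family $W_i\nearrow W$ we have $\mu(W_i\cap U)\nearrow\mu(W\cap U)<\infty$ and $\mu(W_i\cap V)\nearrow\mu(W\cap V)<\infty$; given $\varepsilon>0$ I pick $i_0$ with $\mu(W_{i_0}\cap U)>\mu(W\cap U)-\varepsilon$, and because $\mu(W_{i_0}\cap V)\le\mu(W\cap V)$ is finite I obtain $\mu|_U^V(W_{i_0})>\mu|_U^V(W)-\varepsilon$, which with monotonicity gives $\sup_i\mu|_U^V(W_i)=\mu|_U^V(W)$. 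Boundedness also yields $\mu|_U^V(X)=\mu(U)-\mu(V)\le\mu(X)<\infty$, so each $\mu|_U^V$ is a bounded continuous valuation; a finite sum of such is again one since $\VV X$ is closed under sums and a finite sum of finite numbers is finite. Thus $\mu_A\in\VV X$ for any fixed decomposition, and the two ``in particular'' identities are immediate from the definitions.

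The real content, and the main obstacle, is that $\mu_A$ must depend only on the set $A$ and not on the decomposition $\coprod_i U_i\setminus V_i$, since otherwise the notation $\mu_A$ is unjustified. I would isolate the mass function $m(U\setminus V)=\mu(U)-\mu(V)$ on crescents and observe that the crescents form a semiring of sets: the intersection of two crescents is a crescent and the difference of two crescents is a finite disjoint union of crescents, by the direct computations $(U_1\setminus V_1)\cap(U_2\setminus V_2)=(U_1\cap U_2)\setminus(V_1\cup V_2)$ and $(U_1\setminus V_1)\setminus(U_2\setminus V_2)=(U_1\setminus(V_1\cup U_2))\sqcup((U_1\cap V_2)\setminus V_1)$. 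The technical heart is to prove that $m$ is finitely additive, i.e.\ that whenever a crescent is a finite disjoint union of crescents the value of $m$ on the whole equals the sum of the values on the pieces; I expect to prove this by induction on the number of pieces using modularity of $\mu$, and then to compare two arbitrary decompositions $\coprod_i C_i=\coprod_j D_j$ by passing to the common refinement $C_i\cap D_j$ and summing in both orders. Finite additivity then extends $m$ uniquely to a finitely additive $\hat\mu$ on $\mathcal A(\O X)$ with $\hat\mu(\coprod_i U_i\setminus V_i)=\sum_i(\mu(U_i)-\mu(V_i))$. Since, for each open $W$, $A\cap W=\coprod_i (W\cap U_i)\setminus(W\cap V_i)$ gives $\mu_A(W)=\sum_i[\mu(W\cap U_i)-\mu(W\cap V_i)]=\hat\mu(A\cap W)$, and $A\cap W$ is one and the same element of $\mathcal A(\O X)$ regardless of how $A$ is decomposed, the value $\mu_A(W)$ is decomposition-independent, and $\mu_A$ is a well-defined bounded continuous valuation.
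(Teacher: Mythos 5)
The paper offers no proof of this lemma: it is quoted from Heckmann \cite[Section 3.3]{heckmann96}, and even the independence of $\mu_A$ from the chosen decomposition is only asserted, without proof, in the remark that follows the lemma. So there is no proof in the paper to match your argument against, and I judge it on its own terms. Your first step --- that each $\mu|_U^V$ is strict, monotone, modular, Scott-continuous and bounded --- is complete and correct: the modularity computation goes through because boundedness keeps every term finite, the monotonicity trick of applying modularity to $W_1\cap U$ and $W_2\cap V$ is exactly right, and your $\varepsilon$-argument for Scott-continuity correctly isolates the place where boundedness is indispensable (a pointwise difference of Scott-continuous maps into $\creal$ need not be Scott-continuous). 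Closure of bounded continuous valuations under finite sums then yields the lemma's literal conclusion for any fixed decomposition of $A$.

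The part you rightly call ``the real content'' --- independence from the decomposition --- is where your proposal stops short of a proof. Reducing the problem to a finitely additive set function $\hat\mu$ on $\mathcal A(\O X)$ via the identity $\mu_A(W)=\hat\mu(A\cap W)$ is a clean and correct way to organize the argument, and your semiring identities are correct (note that disjointness of the two pieces of $(U_1\setminus V_1)\setminus(U_2\setminus V_2)$ already relies on the convention $V_2\subseteq U_2$). But the statement you label the technical heart --- that $m(U\setminus V)=\mu(U)-\mu(V)$ is additive over finite disjoint decompositions of a crescent into crescents --- is left at ``I expect to prove this by induction.'' That is precisely the nontrivial content; it is the Smiley--Horn--Tarski/Pettis-type extension of a modular function from a lattice of sets to the generated ring, and even the base case $n=1$, i.e.\ well-definedness of $m$ on a single crescent, is not free. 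One way to do it: if $U\setminus V=U'\setminus V'$ with $V\subseteq U$ and $V'\subseteq U'$, first check that $U\cap V'=V\cap V'$ (since $(U\cap V')\setminus V\subseteq(U\setminus V)\cap V'=(U'\setminus V')\cap V'=\emptyset$), then modularity gives $\mu(U)-\mu(V)=\mu(U\cup V')-\mu(V\cup V')$, symmetrically $\mu(U')-\mu(V')=\mu(U'\cup V)-\mu(V\cup V')$, and the observation $U\cup V'=U'\cup V$ closes the loop. Without this base case and the inductive step actually carried out, the well-definedness of $\mu_A$ remains unproved. To be fair, the paper asserts exactly the same fact with no argument at all, so your write-up is already more detailed than the source; but as a self-contained proof, this is the one genuine gap.
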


Note that it is possible that for each $A\in \mathcal A(\O X)$, 
$A$ can be written as a disjoint union $\coprod_{i=1}^{n} U_{i}\setminus V_{i}$ 
or $\coprod_{j=1}^{m} U_{j}\setminus V_{j}$. However, 
when $\coprod_{i=1}^{n} U_{i}\setminus V_{i} = \coprod_{i=j}^{m} U_{j}\setminus V_{j}$,  
we will always have that 
$\sum_{i=1}^{n}  \mu|_{U_{i}}^{V_{i} }= \sum_{j=1}^{m}  \mu|_{U_{j}}^{V_{j} }$. 
This validates the definition of $\mu_{A}$ in the previous lemma. 

\begin{lemma}\label{sum-of-musub}
For two disjoint sets $A, B\in \mathcal A(\O X)$, 
$\mu_{A\cup B} = \mu_{A}+\mu_{B}$. 
This implies that $\mu_{A}\leq \mu_{B}$ when $A\subseteq B$.
\end{lemma}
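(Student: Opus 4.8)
The plan is to derive both assertions directly from the definition of $\mu_{(-)}$ in Lemma~\ref{lemma:restritoring}, relying on the well-definedness noted in the discussion immediately following it. First I would fix decompositions $A = \coprod_{i=1}^{n} U_i \setminus V_i$ and $B = \coprod_{j=1}^{m} U'_j \setminus V'_j$ with $V_i \subseteq U_i$ and $V'_j \subseteq U'_j$. Since $\mathcal A(\O X)$ is a ring of sets, $A \cup B$ again lies in $\mathcal A(\O X)$; and because $A$ and $B$ are disjoint while each of the two families is internally disjoint (every $A$-crescent is contained in $A$ and every $B$-crescent in $B$), the concatenated family $\{U_i \setminus V_i\}_i \cup \{U'_j \setminus V'_j\}_j$ is a disjoint-union decomposition of $A \cup B$ still satisfying the containment convention. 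Computing $\mu_{A \cup B}$ from this particular decomposition --- which is legitimate precisely because the remark guarantees that the value of $\mu_{A \cup B}$ does not depend on the chosen decomposition --- gives $\mu_{A \cup B} = \sum_{i=1}^{n} \mu|_{U_i}^{V_i} + \sum_{j=1}^{m} \mu|_{U'_j}^{V'_j} = \mu_A + \mu_B$, which is the first claim.

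For the monotonicity statement, suppose $A \subseteq B$. Using again that $\mathcal A(\O X)$ is closed under relative complements, the set $B \setminus A$ belongs to $\mathcal A(\O X)$, and $B$ is the disjoint union of $A$ and $B \setminus A$. Applying the additivity just proved yields $\mu_B = \mu_A + \mu_{B \setminus A}$. Since Lemma~\ref{lemma:restritoring} ensures that $\mu_{B \setminus A}$ is itself a bounded continuous valuation, it takes values in $\R$, so $\mu_{B \setminus A}(W) \geq 0$ for every open subset $W$. Evaluating the additive identity at an arbitrary open $W$ then gives $\mu_B(W) = \mu_A(W) + \mu_{B \setminus A}(W) \geq \mu_A(W)$, i.e.\ $\mu_A \leq \mu_B$ in the stochastic order.

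The argument is essentially bookkeeping, so I do not expect a serious obstacle; the one point that requires care is to invoke the well-definedness of $\mu_{(-)}$ correctly, since the entire computation of $\mu_{A \cup B}$ rests on being free to evaluate it on the concatenated decomposition rather than on some a priori different one. The non-negativity used in the second part is likewise automatic, being built into the codomain $\R$ of valuations, so no separate positivity estimate is needed.
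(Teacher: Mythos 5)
Your proof is correct and is exactly the ``straightforward computation'' from the well-definedness remark that the paper's one-line proof alludes to: concatenate the two decompositions into a decomposition of $A\cup B$, invoke independence of the choice of decomposition, and for monotonicity use $B = A \sqcup (B\setminus A)$ together with nonnegativity of $\mu_{B\setminus A}$. No substantive difference from the paper's intended argument.
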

\begin{proof}
From the above remark and straightforward computation. 
\end{proof}

\begin{lemma}\label{restriction-on-points}
Let $X$ be a $T_{0}$ topological space and $\mu$ be a 
bounded continuous valuation on $X$. If $\oneset{a}$ is
in $\mathcal A(\O X)$, then there exist opens $U, V$ 
with $V\subseteq U$ and $U\setminus V= \oneset{a}$. 
Moreover, in this case $\mu_{\{a\}}= r_{a} \delta_{a}$, 
where $r_{a} = \mu_{\{a\}}(X)= \mu(U)-\mu(V)=\mu_{\{a\}}(U)$.
\end{lemma}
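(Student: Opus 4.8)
The plan is to prove the two assertions in turn: the existence of $U$ and $V$ comes from the decomposition lemma \cite[Lemma~IV-9.2]{gierz03}, while the identification $\mu_{\{a\}}=r_a\delta_a$ rests on the formula for $\mu|_U^V$ from Lemma~\ref{lemma:restritoring} together with modularity of $\mu$.

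For the existence of $U,V$, I would invoke the decomposition lemma to write $\{a\}\in\mathcal A(\O X)$ as a finite disjoint union $\coprod_{i=1}^n U_i\setminus V_i$ of crescents with $V_i\subseteq U_i$. Since the union is disjoint and its total is the single point $a$, at most one crescent can be nonempty; as $a$ lies in the union, exactly one crescent, say $U_{i_0}\setminus V_{i_0}$, equals $\{a\}$. Taking $U=U_{i_0}$ and $V=V_{i_0}$ yields opens with $V\subseteq U$ and $U\setminus V=\{a\}$, proving the first claim. This also records the facts $a\in U$, $a\notin V$, and $U=V\cup\{a\}$ that drive the rest.

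For the ``moreover'' part, Lemma~\ref{lemma:restritoring} gives $\mu_{\{a\}}=\mu_{U\setminus V}=\mu|_U^V$, and since $V\subseteq U$ the defining formula simplifies to $\mu|_U^V(W)=\mu(W\cap U)-\mu(W\cap V)$ for every open $W$. I would then show this equals $r_a\delta_a(W)$ by a case split on whether $a\in W$. If $a\notin W$, then $W\cap\{a\}=\emptyset$, so $W\cap U=W\cap(V\cup\{a\})=W\cap V$ and hence $\mu|_U^V(W)=0$. If $a\in W$, then $\{a\}\subseteq W$ gives $U\cup W=(V\cup\{a\})\cup W=V\cup W$ as sets; applying modularity to the pairs $(U,W)$ and $(V,W)$ and subtracting (all values are finite since $\mu$ is bounded) reduces the desired equality $\mu(W\cap U)-\mu(W\cap V)=\mu(U)-\mu(V)$ to the identity $\mu(U\cup W)=\mu(V\cup W)$, which now holds. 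Thus $\mu|_U^V(W)=\mu(U)-\mu(V)$ whenever $a\in W$.

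Combining the two cases gives $\mu_{\{a\}}(W)=(\mu(U)-\mu(V))\,\delta_a(W)$ for all opens $W$, so $\mu_{\{a\}}=r_a\delta_a$ with $r_a=\mu(U)-\mu(V)$; evaluating at $W=X$ and $W=U$ recovers the stated equalities $r_a=\mu_{\{a\}}(X)=\mu_{\{a\}}(U)$. I expect the only delicate step to be the case $a\in W$, where the computation hinges on the observation that adjoining the single point $a$ leaves $V\cup W$ unchanged once $a\in W$; the boundedness hypothesis is used precisely to make the differences appearing in modularity well-defined, and the remainder is routine bookkeeping with the valuation axioms.
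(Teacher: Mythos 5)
Your proof is correct, and its overall shape matches the paper's: both arguments reduce, via $T_0$-ness, to computing $\mu|_U^V(W)$ by a case split on whether $a\in W$, and the case $a\notin W$ is handled identically ($W\cap U=W\cap V$, hence the value is $0$). The difference lies in how the case $a\in W$ is justified. The paper observes that $(W\cap U)\setminus(W\cap V)=U\setminus V=\oneset{a}$ and then appeals to the representation-independence of $\mu_A$ (the remark following Lemma~\ref{lemma:restritoring}, which the paper does not itself prove) to conclude $\mu|_{U}^{V}(W)=\mu|_{U\cap W}^{V\cap W}(U)=\mu|_U^V(U)$. You instead give a direct two-line modularity computation: since $U=V\cup\oneset{a}$ and $a\in W$, one has $U\cup W=V\cup W$, and subtracting the modularity identities for the pairs $(U,W)$ and $(V,W)$ (legitimate because $\mu$ is bounded) yields $\mu(W\cap U)-\mu(W\cap V)=\mu(U)-\mu(V)$. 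Your route is more elementary and self-contained, since it does not lean on the unproved uniqueness remark; the paper's route is shorter given that the remark is taken for granted, and reuses machinery that is needed elsewhere anyway. Your treatment of the first assertion (extracting a single crescent from the disjoint-union decomposition of $\oneset{a}$) fills in what the paper dismisses as obvious, and is fine.
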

\begin{proof}
The first assertion is obvious. 

For the second assertion, 
since $X$ is $T_{0}$, we only need to prove that for each 
open $O$, $\mu_{a}(O)$ is equal to $r_{a}$ if $a\in O$ 
and to $0$ if $a\notin O$. Assume that $U$ and $V$ are 
open subsets of $X$ with $V\subseteq U$ and $U\setminus V = \{a\}$. 
For an open subset $O$, if $a\in O$,  
then $O\cap U\setminus O\cap V= U\setminus V=\oneset{a}$, 
which implies that $\mu_{a}(O) = \mu|_{U}^{V}(O) = \mu|_{ U\cap O}^{ V\cap O }(O)
= \mu|_{ U\cap O}^{ V\cap O }(U) = \mu_{a}(U)$. 
If $a$ is not in $O$, then $O\cap U = O\cap V$. Hence $\mu_{a}(O)= 0$.
\end{proof}

\section{Point-continuous valuations need not be minimal valuations}
\label{sec:johnstone}

In this section, we built a point-continuous valuation~$\mu$ on the well-known Johnstone's
non-sober dcpo, which is not a minimal valuation.

\subsection{Valuations on $\Ncof$}

Let $\Ncof$ be the topological space of natural numbers equipped with 
the co-finite topology. It is easy to verity that the the map 
$\beta \colon \O \Ncof \to \R$ defined by
$$ \beta(U)  = \begin{cases}
         1, & \text{$U \subseteq \Ncof$ is open and  nonempty}; \\
         0, & \text{$U$ is empty}         
        \end{cases}$$
is a bounded continuous valuation on $\Ncof$. 

\begin{lemma}\label{point-is-crescent}
For each $i\in \Ncof$, the set $\oneset{i}$ is closed. Hence it is in the ring 
of sets generated by the co-finite topology on $\Ncof$. 
\end{lemma}
\begin{proof}
Straightforward.
\end{proof}

\begin{proposition}\label{valuations-on-N}
Let $\nu$ be a bounded continuous valuation on $\Ncof$. Then there exists a 
discrete valuation $\alpha$ on $\Ncof$ and nonnegative real number 
$r$ such that $\nu = \alpha + r\beta$. 
\end{proposition}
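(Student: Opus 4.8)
The plan is to peel off the ``point masses'' of $\nu$ at each natural number and to show that the remaining mass sits uniformly on all nonempty opens, which is exactly the behaviour of $\beta$. By Lemma~\ref{point-is-crescent} each singleton $\oneset{i}$ is closed, and since every nonempty open of $\Ncof$ is cofinite, the expression $\oneset{i} = \Ncof\setminus(\Ncof\setminus\oneset{i})$ exhibits $\oneset{i}$ as a crescent with $\Ncof\setminus\oneset{i}\subseteq\Ncof$ both open. Hence Lemma~\ref{restriction-on-points} applies and gives $\nu_{\oneset{i}} = r_i\delta_i$, where $r_i = \nu(\Ncof) - \nu(\Ncof\setminus\oneset{i}) \geq 0$ and $\nu_{\oneset{i}}(\Ncof) = r_i$.

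Next I would evaluate $\nu$ on an arbitrary nonempty open, that is, on a cofinite set $\Ncof\setminus F$ with $F$ finite. Writing the disjoint decompositions $\Ncof = (\Ncof\setminus F)\sqcup F$ and $F = \coprod_{i\in F}\oneset{i}$ in $\mathcal A(\O\Ncof)$ and applying Lemma~\ref{sum-of-musub}, I get $\nu_{\Ncof} = \nu_{\Ncof\setminus F} + \sum_{i\in F}\nu_{\oneset{i}}$. Evaluating at $\Ncof$, and using $\nu_{U}(\Ncof) = \nu(U)$ for open $U$ together with $\nu_{\oneset{i}}(\Ncof) = r_i$, yields the key identity
\[
 \nu(\Ncof\setminus F) = \nu(\Ncof) - \sum_{i\in F} r_i .
\]
In particular $\sum_{i\in F} r_i \leq \nu(\Ncof)$ for every finite $F$, so boundedness of $\nu$ forces the series $S \eqdef \sum_{i\in\N} r_i$ to converge with $S \leq \nu(\Ncof) < \infty$. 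Consequently $\alpha \eqdef \sum_{i\in\N} r_i\delta_i$ is a well-defined discrete valuation, and $r \eqdef \nu(\Ncof) - S$ is a nonnegative real.

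It then remains only to verify $\nu = \alpha + r\beta$ by evaluating both sides on each open. On $\emptyset$ both sides vanish. On a nonempty open $U = \Ncof\setminus F$ we have $\alpha(U) = \sum_{i\notin F} r_i = S - \sum_{i\in F} r_i$ and $r\beta(U) = r = \nu(\Ncof) - S$, so their sum is $\nu(\Ncof) - \sum_{i\in F} r_i$, which equals $\nu(U)$ by the displayed identity. I do not expect a genuine obstacle: the restriction-to-points lemma does the heavy lifting, and the only delicate point is the bookkeeping that extracts the correct scalar $r$ and confirms convergence of $\sum_i r_i$. The one fact worth stating cleanly is that the leftover mass $\nu(\Ncof)-S$ attaches to \emph{every} nonempty open independently of which finite set is removed --- this is precisely what the identity above records, and it is what pins the remainder down as $r\beta$.
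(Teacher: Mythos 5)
Your proof is correct and follows essentially the same route as the paper: peel off the point masses $\nu_{\{i\}}=r_i\delta_i$ via Lemmas~\ref{point-is-crescent}, \ref{restriction-on-points} and \ref{sum-of-musub}, and identify the leftover mass as a multiple of $\beta$. The only difference is presentational --- where the paper defines the residual $\nu^*=\nu-\alpha$ and proves it is constant on nonempty opens, you derive the single identity $\nu(\Ncof\setminus F)=\nu(\Ncof)-\sum_{i\in F}r_i$ and read off both the convergence of $\sum_i r_i$ and the value of $r$ from it, which is equally valid.
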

\begin{proof}
We let $\alpha = \sum_{i\in \N} \nu_{\{i\}}$. By Lemma~\ref{point-is-crescent}
 and  Lemma~\ref{lemma:restritoring}, each $\nu_{\{ i\}}$ is a continuous valuation. 
Since $\Ncof$ is a $T_{0}$ topological space, it follows from 
Lemma~\ref{restriction-on-points} that $\alpha$ is a discrete valuation. 
Now we define the map 
$$\nu^{*} \colon \O \Ncof \to \mathbb R_{+} :: U\mapsto \nu (U) - \alpha (U).$$ 
We proceed to show that $\nu^{*}$ is a multiple of $\beta$, that is, 
there exists some $r\in [0,\infty[$ such that $\nu^{*} = r \beta$. 

First, for each $n\in \N$, 
$\nu(U) - \sum_{i=1}^{n} \nu_{\{i\}}(U)  = \nu_{ \N\setminus \{1, 2, ..., n \}}(U)$ 
is nonnegative, hence $\nu^{*}(U)$, which is the infimum of 
$\nu(U) - \sum_{i=1}^{n} \nu_{\{i\}}(U), n\in \N$, 
indeed takes values in $\mathbb R_{+}$.

Second, for nonempty open sets $U$ and $V$ with $V\subseteq U$, 
we prove that $\nu^{*}(U) = \nu^{*}(V)$. Because $V$ is co-finite, 
we know that $U\setminus V$ is a finite set, which we denote by $F$. 
Then we know 
\begin{align*}
\nu^{*}(U) & = \nu(U) - \sum_{i= 1}^{\infty} \nu_{\{i\}}(U)  							& \text{definition of $\nu^{*}$} \\
& = \nu_{U}(U) - \sum_{i= 1}^{\infty} \nu_{\{i\}}(U) 									& \text{$\nu_{U}(U) = \nu(U\cap U)$} \\
& = \nu_{V}(U) + \sum_{i\in F}\nu_{\{i\}}(U)  - \sum_{i= 1}^{\infty} \nu_{\{i\}}(U)		&\text{by Lemma~\ref{sum-of-musub} and $V\cup F = U$} \\
& = \nu_{V}(V)  - \sum_{i= 1}^{\infty} \nu_{\{i\}}(V)								&\text{by Lemma~\ref{restriction-on-points}} \\
& = \nu^{*}(V).															& \text{definition of $\nu^{*}$}
\end{align*}
Now for general nonempty opens $U$ and $V$, we have 
$\nu^{*}(U) = \nu^{*}(U\cap V) = \nu^{*}(V)$. Let $r = \nu^{*}(X)$, 
then we know that $r$ is nonnegative from above and that 
$\nu^{*} = r\beta$ (hence $\nu^{*}$ also is a continuous valuation).

Finally, we conclude the proof by the fact that 
$\nu = \alpha + \nu^{*} = \alpha + r \beta$. 
\end{proof}

\subsection{Valuations on Johnstone's non-sober dcpo $\J$}
\label{section-non-soberj}

In 1980, Johnstone gave the first dcpo which is not sober
in the Scott topology~\cite{johnstone81}. This dcpo, which we denote by $\J$, 
serves as a basic building block in several   counterexamples 
in domain theory~\cite{isbell82, ho18}. In this subsection, 
we will use it to construct a continuous valuation that is not minimal.

\begin{definition}[The dcpo $\J$] 
\label{def:johnstone}
Let $\mathbb N$ be the set of natural numbers and  $\J = \mathbb N\times (\mathbb N \cup \{\infty\} )$. 
The order on $\J$ is defined by $(a, b) \leq (c, d)$ if and only if either 
$a = c$ and $b\leq d$, or $d = \infty$ and $ b\leq c$. 
\end{definition}

The structure of $\J$ is depicted in Figure~\ref{fig:1}. 

\begin{figure}[h]
 \begin{center}
\includegraphics[width=.5\linewidth]{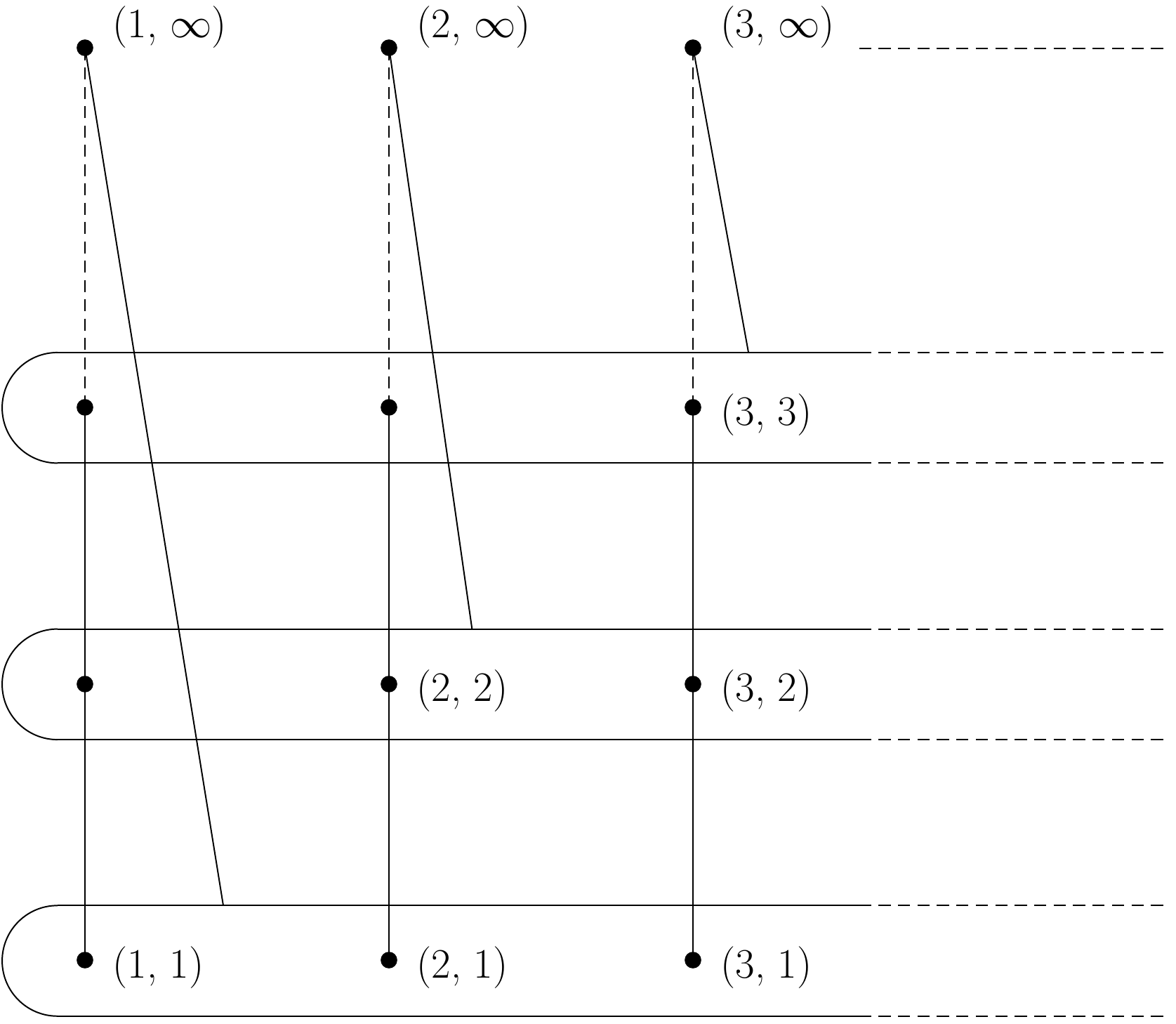}
\end{center}
 \caption{\label{fig:1} Johnstone's non-sober dcpo $\J$.}
\end{figure}

We will use the following convention throughout this subsection  
when we reason about $\J$. 

\begin{itemize}
\item $M$ denotes the set of all maximal points of $\J$, that is $M = \set{(i, \infty)}{i\in \mathbb N}$;
\item $M_{k} =  \set{(i, \infty)}{k< i}$, and $M_{k, l} =  \set{(i, \infty)}{k< i\leq l}$ for $k, l\in \N$ and $k< l$;
\item $N$ denotes the set $\J\setminus M$; elements in $N$ are of \emph{finite height};
\item $L_{i}$ denotes the set of points of $\J$ which are at \emph{Level} $i$, for each $i\in \N$, that is $L_{i} = \set{(j, i)}{j\in \mathbb N}$;
\item $C_{i}$ denotes the set of points of $\J$ which are in \emph{Column} $i$, that is $C_{i} = \set{(i, j)}{j\in \mathbb N\cup \oneset{\infty}}$. 
\item $D_{i}$ denotes the set $\bigcup_{j\leq i} C_{j}$, i.e., $D_{i}$ consists of elements in the first $i$-many columns.
\end{itemize}

The dcpo $\J$ in the Scott topology is a non-sober topological space, and the 
set $M$ of maximal points of $\J$ equipped with the relative
Scott topology is homeomorphic to $\Ncof$.

Let $\nu$ be an arbitrary\emph{bounded} continuous valuation on $\J$. 
We are going to show that $\nu$ can be written as a sum of a 
discrete valuation $\theta$ and $r\mu$, where $r$ is a nonnegative 
real number and in this section, $\mu$ is reserved for the fixed valuation on $\J$ that takes 
value $1$ on nonempty Scott-opens, and $0$ on the empty set:
$$ \mu (U) = \begin{cases}
1, & U \not= \emptyset;\\
0, & U = \emptyset.
\end{cases}$$
Note that $\mu$ on $\J$ is an analogue of $\beta$ on $\Ncof$. 
Indeed, the pushforward image of $\beta$ along the canonical 
topological embedding $n\mapsto (n, \infty) $ of $\Ncof$ into 
$\J$ is exactly the valuation $\mu$. 
Since $\theta$ and $r\mu$ are point-continuous (direct verification), by proving that $\nu$
is a sum of some discrete $\theta$
and $r\mu$, we infer that all (not necessarily bounded) continuous 
valuations on $\J$ are point-continuous by using a trick due to Heckmann. 

\begin{theorem}\label{main-theorem}
Every bounded continuous valuation $\nu$ on $\J$ is point-continuous. 
Moreover, there exist a discrete valuation 
$\theta$ and nonnegative real number $r$ such that $\nu = \theta +r\mu$. 
\end{theorem}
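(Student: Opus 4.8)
The plan is to mimic the structure of Proposition~\ref{valuations-on-N}, transporting the argument from $\Ncof$ up to $\J$ by exploiting that the maximal points $M$ carry the cofinite topology. First I would isolate the ``discrete part'' of $\nu$. Each singleton $\{(i,j)\}$ with $(i,j)\in N$ of finite height is locally closed (it is a crescent $U\setminus V$ for suitable Scott-opens), so by Lemma~\ref{restriction-on-points} the restrictions $\nu_{\{(i,j)\}}$ are scalar multiples of Dirac valuations. I would set $\theta = \sum_{(i,j)\in N}\nu_{\{(i,j)\}}$, a discrete valuation concentrated on the finite-height points, and define the ``remainder'' $\nu^{*}(U)=\nu(U)-\theta(U)$, checking as in the $\Ncof$ case that $\nu^{*}$ is a well-defined continuous valuation taking values in $\mathbb R_{+}$ (the key point being that $\nu(U)-\sum_{\text{finite}}\nu_{\{(i,j)\}}(U)$ equals $\nu$ restricted to the complement of those points, hence nonnegative, and $\nu^{*}$ is the directed infimum over finite subsets).

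The crux is then to show $\nu^{*}=r\mu$ for $r=\nu^{*}(\J)$. The natural strategy is to prove that $\nu^{*}$ vanishes on the finite-height part and is ``constant on nonempty opens'' via its behavior on $M$. Concretely, I expect $\nu^{*}$ to be the pushforward of some bounded continuous valuation on $M\cong\Ncof$: since all the finite-height Dirac mass has been subtracted off into $\theta$, the remaining valuation should see only the maximal points. I would argue that for a nonempty Scott-open $U$, the value $\nu^{*}(U)$ depends only on $U\cap M$, and then invoke the homeomorphism $M\cong\Ncof$ together with Proposition~\ref{valuations-on-N} to conclude that the induced valuation on $M$ has no discrete part (that part having already been absorbed), so it is exactly $r\beta$, whose pushforward is $r\mu$.

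The main obstacle will be the Scott topology of $\J$, which is genuinely subtler than the cofinite topology: a Scott-open $U$ is not determined by $U\cap M$ in the naive way, because of the ``diagonal'' opens generated by points $(k,\infty)$ requiring almost all of $M_k$ together with tails of columns. I would handle this by adapting the $\nu^{*}(U)=\nu^{*}(V)$ argument of Proposition~\ref{valuations-on-N}: for nonempty opens $V\subseteq U$, the set-theoretic difference $U\setminus V$, after removing the finitely many columns it can touch below a fixed level, reduces to a cofinite sliver inside $M$, so $\nu^{*}(U)-\nu^{*}(V)$ telescopes against the subtracted Dirac masses and vanishes. This will require care about which crescents $U\setminus V$ lie in $\mathcal A(\O\J)$ and a cofinality estimate showing every nonempty open meets $M$ in a cofinite set after intersecting down, which is where the homeomorphism $M\cong\Ncof$ does the real work.

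Finally, point-continuity of $\nu$ follows formally: $\theta$ is discrete hence minimal hence point-continuous, $r\mu$ is point-continuous by direct verification (its mass is supported ``at infinity'' but any open neighborhood of a single maximal point is cofinite, so the defining condition with a one-point set $A\subseteq U$ is met), and point-continuous valuations are closed under sum. The extension from bounded to arbitrary continuous valuations is then exactly the truncation trick of Heckmann alluded to in the text, writing an unbounded $\nu$ as a directed supremum of its bounded truncations $\min(\nu,c\mu)$ and using that point-continuity is preserved under directed suprema.
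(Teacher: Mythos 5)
Your overall architecture matches the paper's: subtract the Dirac masses at the finite-height points $N$, show the remainder $\nu^{*}$ depends only on the trace on $M$, and transport to $\Ncof$ via Proposition~\ref{valuations-on-N}. But there is a genuine gap at the crux. You claim that after subtracting $\sum_{a\in N}\nu_{\{a\}}$ the induced valuation on $M$ ``has no discrete part (that part having already been absorbed)'', hence equals $r\beta$ and $\nu^{*}=r\mu$. This is false: the singletons $\{(i,\infty)\}$ of \emph{maximal} points are not crescents (they are only countable intersections of crescents), so the restriction machinery of Lemma~\ref{restriction-on-points} never touches any atom that $\nu$ carries on $M$. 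Concretely, take $\nu=\delta_{(0,\infty)}$. Every $\nu_{\{a\}}$ with $a\in N$ vanishes, because both opens of the crescent $(\{a\}\cup\ua L_{n+1})\setminus\ua L_{n+1}$ contain $(0,\infty)$; so your $\theta=0$ and $\nu^{*}=\delta_{(0,\infty)}$, which is not a multiple of $\mu$ since it vanishes on the nonempty Scott-open $\J\setminus\da(0,\infty)$. For the same reason your telescoping argument that $\nu^{*}$ is constant on nonempty opens must break down: for nonempty Scott-opens $V\subseteq U$ the difference $U\setminus V$ meets $M$ in a finite (possibly nonempty) set, and the Dirac masses sitting there were never subtracted, so they do not cancel.

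The repair is exactly what the paper does: prove only the weaker Lemma~\ref{equal}, namely $\nu^{*}(U)=\nu^{*}(V)$ whenever $U\cap M=V\cap M$ (by exhausting $V$ through $V_{n}=U\cup(V\cap D_{n})$, so that each difference is a \emph{finite subset of $N$} and the telescoping is legitimate), then define $\nu^{\infty}(W)=\nu^{*}(U_{W})$ on $M\cong\Ncof$ and apply Proposition~\ref{valuations-on-N} to get $\nu^{\infty}=\alpha+r\beta$ with $\alpha$ a possibly \emph{nonzero} discrete valuation on $M$. That $\alpha$, pulled back to $\J$ as $\theta'(U)=\alpha(U\cap M)$, must be added to your $\theta$; the correct decomposition is $\nu=\bigl(\sum_{a\in N}\nu_{\{a\}}+\theta'\bigr)+r\mu$. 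Your concluding paragraph on point-continuity of the summands and on Heckmann's reduction from bounded to arbitrary continuous valuations is fine and agrees with the paper.
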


We prove this theorem by a series of results.  First, we give a lemma that will be
used a few times. It is a slight generalisation of \cite[Proposition 3.2]{heckmann96}.
\begin{lemma}
  \label{lemma:mu+nu}
  Let $\mu$ be a bounded continuous valuation on a space $X$, and
  $\nu$ be a monotonic map from $\O X$ to
  $\real \cup \{-\infty, +\infty\}$ such that $\nu (\emptyset)=0$.  If
  $\mu+\nu$ is a continuous valuation, then so is $\nu$.
\end{lemma}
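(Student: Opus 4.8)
The plan is to verify directly that $\nu$ satisfies the four defining properties of a continuous valuation --- strictness, monotonicity, modularity, and Scott-continuity --- by ``subtracting'' the finite-valued $\mu$ from the known continuous valuation $\mu+\nu$. Strictness ($\nu(\emptyset)=0$) and monotonicity are exactly the hypotheses. Moreover, monotonicity together with $\nu(\emptyset)=0$ already gives $0=\nu(\emptyset)\le\nu(U)$ for every open $U$, so $\nu$ in fact takes values in $\R$ and never attains $-\infty$. Writing $\rho\eqdef\mu+\nu$, which is a continuous valuation by assumption, we then have $\nu(U)=\rho(U)-\mu(U)\ge 0$, a difference that is well defined precisely because $\mu$ is bounded, so that each $\mu(U)$ is a finite real.

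For modularity, I would start from the modularity of the continuous valuation $\rho$ and of $\mu$, namely $\rho(U)+\rho(V)=\rho(U\cup V)+\rho(U\cap V)$ and $\mu(U)+\mu(V)=\mu(U\cup V)+\mu(U\cap V)$, and subtract the second from the first. Concretely, $\nu(U)+\nu(V)=(\rho(U)+\rho(V))-(\mu(U)+\mu(V))$, which by the two modularity identities equals $(\rho(U\cup V)+\rho(U\cap V))-(\mu(U\cup V)+\mu(U\cap V))=\nu(U\cup V)+\nu(U\cap V)$. The only point requiring care is the arithmetic in $\R$ when some value $\rho(W)$ equals $+\infty$: since each $\mu(W)$ is finite and $\rho(W)\ge\mu(W)$, the regroupings of the form $(\rho(U)+\rho(V))-(\mu(U)+\mu(V))=(\rho(U)-\mu(U))+(\rho(V)-\mu(V))$ are legitimate, and the identity survives.

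For Scott-continuity, let $(U_i)_{i\in I}$ be a directed family of opens with union $U$. Because $\mu$ and $\nu$ are monotone, the families $(\mu(U_i))_i$ and $(\nu(U_i))_i$ are directed in $\R$, and a routine argument on directed families gives the splitting $\sup_i(\mu(U_i)+\nu(U_i))=\sup_i\mu(U_i)+\sup_i\nu(U_i)$, valid here because all terms are nonnegative so no $\infty-\infty$ arises. The left-hand side equals $\rho(U)=\mu(U)+\nu(U)$ by Scott-continuity of $\rho$, while $\sup_i\mu(U_i)=\mu(U)$ by Scott-continuity of $\mu$; hence $\mu(U)+\nu(U)=\mu(U)+\sup_i\nu(U_i)$. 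Cancelling the finite quantity $\mu(U)$ yields $\nu(U)=\sup_i\nu(U_i)$, as required.

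The main obstacle, and the only place where boundedness of $\mu$ is genuinely used, is the extended-real arithmetic: the cancellations, regroupings, and the sup-splitting above are licensed exactly because every value $\mu(W)$ is a finite real, so such terms may be subtracted or cancelled even when the corresponding $\rho$- or $\nu$-values equal $+\infty$. Once these bookkeeping points are handled, collecting strictness, monotonicity, modularity and Scott-continuity shows that $\nu$ is a continuous valuation.
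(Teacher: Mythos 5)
Your proof is correct and follows essentially the same route as the paper: nonnegativity from monotonicity and strictness, modularity by subtracting the (finite-valued) modularity identity for $\mu$ from that of $\mu+\nu$, and Scott-continuity by exploiting boundedness of $\mu$ to cancel it. The only cosmetic difference is in the last step, where you split the supremum over the directed family and cancel $\mu(U)$, while the paper argues via an arbitrary $a<\nu(U)$ and the bound $\mu(U_i)\le\mu(U)$; both hinge on exactly the same use of boundedness.
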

\begin{proof}
  Since $\nu$ is monotonic and $\nu (\emptyset)=0$, $\nu$ actually
  takes its values in $\creal$.  It is clear that
  $\nu = (\mu+\nu) - \mu$ is modular, using the fact that $\mu$ is
  bounded for the subtraction to make sense.  The only challenge is
  Scott-continuity.  Let ${(U_i)}_{i \in I}$ be any directed family of
  open subsets of $X$, and $U$ be its union.  We have
  $\nu (U) \geq \dsup_{i \in I} \nu (U_i)$ by monotonicity.  In order
  to prove the reverse inequality, we consider any $a < \nu (U)$, and
  we show that $a \leq \nu (U_i)$ for some $i \in I$.  Since $\mu$ is
  bounded, $\mu (U) + a < (\mu+\nu) (U)$, and since $\mu+\nu$ is a
  continuous valuation, there is an $i \in I$ such that
  $\mu (U) + a < (\mu+\nu) (U_i)$.  Then
  $a < \mu (U_i) - \mu (U) + \nu (U_i) \leq \nu (U_i)$.
\end{proof}

\begin{lemma}\label{j-point-is-crescent}
For each element $a\in N$ (using the convention after Definition~\ref{def:johnstone}) , the singleton $\oneset{a}$ is a crescent. 
Thus, for each $a\in N$,  $\oneset{a}$ in the ring of sets 
generated by Scott-opens of $\J$.
\end{lemma}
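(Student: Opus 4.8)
The plan is to exhibit explicit Scott-open sets $U$ and $V$ with $U \setminus V = \{a\}$. The crucial observation, on which everything rests, is that for a finite-height point $a = (p,q) \in N$ (so $q \in \N$), the down-set $\da a$ is \emph{finite}. Indeed, from the definition of the order, $(c,d) \leq (p,q)$ forces either $c = p$ and $d \leq q$, or $q = \infty$; since $q$ is finite the second case is impossible, so $\da a = \{(p,d) : 0 \leq d \leq q\}$, a chain of $q+1$ elements inside the column $C_p$. The cross-relation ($d = \infty$ and $b \leq c$) only ever places a finite-height point \emph{below} a maximal point, and never inside the down-set of a finite-height point; this is precisely why the argument requires $a \in N$.

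Next I would record that every finite lower subset of $\J$ is Scott-closed: it is downward closed by assumption, and any directed subset of a finite poset has a greatest element, which is its supremum and lies in the subset, so such a set is closed under the directed suprema it contains. Applying this to the two finite lower sets $\da a$ and $\da a \setminus \{a\}$ (the latter equals $\da (p,q-1)$ when $q \geq 1$ and is empty when $q = 0$, and is downward closed in either case), both are Scott-closed. Hence their complements $U \eqdef \J \setminus (\da a \setminus \{a\})$ and $V \eqdef \J \setminus \da a$ are Scott-open.

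A direct set-theoretic computation then finishes the first assertion:
$$U \setminus V = U \cap \da a = (\da a) \setminus (\da a \setminus \{a\}) = \{a\},$$
so $\{a\}$ is the crescent $U \setminus V$ with $U$ and $V$ Scott-open. The second assertion is then immediate: a crescent belongs to the ring of sets $\mathcal A(\O \J)$ generated by the Scott-opens, since that ring contains all opens and is closed under relative complements. I do not expect a genuine obstacle here; the only point requiring care is the finiteness of $\da a$, together with the complementary observation that the same construction necessarily fails for $a \in M$, where $\da a$ comprises an entire column and is no longer finite.
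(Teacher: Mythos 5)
Your proof is correct, but it witnesses the crescent with different open sets than the paper does. The paper's proof goes ``upward'': writing $a=(j,n)$, it takes $U=\oneset{a}\cup\ua L_{n+1}$ and $V=\ua L_{n+1}$, the upward closure of the next level, and simply asserts these are Scott-open (which holds because every maximal point lies in $\ua L_{n+1}$, so no directed supremum can land in it from outside). You instead go ``downward'': you observe that $\da a$ is finite precisely because $a$ has finite height, that finite lower sets are Scott-closed, and you take $U$ and $V$ to be the complements of $\da a\setminus\oneset{a}$ and $\da a$. Both arguments are equally elementary and both correctly isolate why the lemma fails for maximal points ($\ua L_{n+1}$ would contain $(i,\infty)$, and $\da(i,\infty)$ is infinite, respectively). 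Your version has the mild advantage of being self-contained --- the Scott-closedness of the sets involved is verified rather than asserted, and the argument generalizes to any point of a dcpo with finite down-set --- at the cost of not exploiting the level structure that the paper reuses elsewhere (e.g.\ in the convention $L_i$, $D_i$). One cosmetic slip in your closing remark: for $a=(i,\infty)$ the down-set is not just the column $C_i$ but $C_i$ together with all points of height at most $i$; this does not affect your argument, which never uses that description.
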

\begin{proof}
For each $a\in N$, if $a$ is at Level $n$, that is $a = (j, n)$ for some $j\in \mathbb N$,
then $\oneset{a}$ can be written as $(\oneset{a}\cup \ua L_{n+1} )\setminus \ua L_{n+1}$. 
The proof is done since both $\oneset{a}\cup \ua L_{n+1}$ 
and $\ua L_{n+1}$ are Scott-open in $\J$. 
\end{proof}

\begin{proposition}
\label{minus}
For each Scott-open subset $U$ of $\J$, let 
$\nu^{*} (U)= \nu(U) - \sum_{a\in N} \nu_{\{a\}} (U)$. 
Then $\nu^*$ is a bounded continuous valuation on $\J$. 
\end{proposition}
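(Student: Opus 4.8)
The plan is to isolate the ``discrete part'' $\theta \eqdef \sum_{a \in N} \nu_{\{a\}}$ of $\nu$, to check that $\theta$ is a bounded continuous valuation dominated by $\nu$, and then to realise $\nu^{*}$ as the difference $\nu - \theta$ and invoke Lemma~\ref{lemma:mu+nu} to obtain its only delicate property, Scott-continuity. To make sense of $\theta$: by Lemma~\ref{j-point-is-crescent}, for each $a \in N$ the singleton $\{a\}$ is a crescent, so $\nu_{\{a\}}$ is a bounded continuous valuation by Lemma~\ref{lemma:restritoring}. For a finite subset $F \subseteq N$ the singletons $\{a\}$, $a \in F$, are pairwise disjoint, whence $\sum_{a \in F} \nu_{\{a\}} = \nu_{F}$ by Lemma~\ref{sum-of-musub}; and since $F \subseteq \J$, the same lemma gives $\nu_{F} \leq \nu_{\J} = \nu$ (recall $\nu_{\J} = \nu|_{\J}^{\emptyset} = \nu$). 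The family $(\nu_{F})_{F}$, with $F$ ranging over the finite subsets of $N$, is directed (indeed $\nu_{F} \leq \nu_{F'}$ when $F \subseteq F'$), bounded above by $\nu$, and consists of continuous valuations; as $\VV\J$ is a dcpo closed under directed suprema, $\theta = \dsup_{F} \nu_{F} = \sum_{a \in N} \nu_{\{a\}}$ is a continuous valuation, and $\theta \leq \nu$ makes it bounded. Consequently $\nu^{*}(U) = \nu(U) - \theta(U)$ is well-defined, finite and nonnegative for every open $U$, so $\nu^{*}$ maps $\O\J$ into $\Rp$ with $\nu^{*}(\emptyset) = 0$.

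Next I would verify that $\nu^{*}$ is monotonic. Since $\nu(U)$ is finite, $\nu^{*}(U) = \nu(U) - \dsup_{F} \nu_{F}(U) = \finf_{F}\bigl(\nu(U) - \nu_{F}(U)\bigr)$, and for each finite $F$ one has $\nu - \nu_{F} = \nu_{\J \setminus F}$: indeed $\J \setminus F \in \mathcal A(\O\J)$ (a relative complement of members of this ring), and $\nu = \nu_{\J} = \nu_{F} + \nu_{\J \setminus F}$ by Lemma~\ref{sum-of-musub}. Thus $\nu^{*}$ is the pointwise filtered infimum of the monotonic maps $\nu_{\J \setminus F}$, hence monotonic.

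The one genuinely nontrivial property is Scott-continuity, and this is exactly what Lemma~\ref{lemma:mu+nu} is built to supply. Taking $\theta$ for its bounded continuous valuation and $\nu^{*}$ for its monotonic map with $\nu^{*}(\emptyset) = 0$, their sum $\theta + \nu^{*} = \nu$ is a continuous valuation, so the lemma concludes that $\nu^{*}$ is a continuous valuation (modularity coming for free from the subtraction, since $\theta$ is bounded). Boundedness is then immediate from $\nu^{*}(\J) \leq \nu(\J) < \infty$. I expect the main obstacle to be precisely this continuity step: the filtered-infimum representation of $\nu^{*}$ yields monotonicity cheaply but is useless for continuity, because filtered infima of continuous valuations need not be continuous; it is exactly to sidestep this failure that one routes the argument through the additive identity $\theta + \nu^{*} = \nu$ and applies Lemma~\ref{lemma:mu+nu}.
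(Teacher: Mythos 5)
Your proposal is correct and follows essentially the same route as the paper: establish that each $\nu_{\{a\}}$ is a continuous valuation via Lemma~\ref{j-point-is-crescent} and Lemma~\ref{lemma:restritoring}, obtain monotonicity of $\nu^{*}$ by writing it as the filtered infimum of the maps $\nu_{\J\setminus F}$ over finite $F\subseteq N$, and then get Scott-continuity from Lemma~\ref{lemma:mu+nu} applied to the identity $\theta+\nu^{*}=\nu$. Your explicit verification that $\theta=\sum_{a\in N}\nu_{\{a\}}$ is itself a bounded continuous valuation (as a directed supremum of the $\nu_{F}$, bounded by $\nu$) is a detail the paper leaves implicit but needs for that lemma to apply, so it is a welcome addition rather than a divergence.
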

\begin{proof}

Since for each $a\in N$, $\{a\}$ is a crescent by 
Lemma~\ref{j-point-is-crescent}, $\nu_{\{a\}}$ is a 
continuous valuation. Hence $\nu_{\{a\}}(U)$ makes 
sense for each $a\in N$. 

By Lemma~\ref{lemma:mu+nu}, we only need to prove that $\nu^{*}$ is well-defined and 
order-preserving.
Note that $N$ is a countable set. We index elements in $N$ by 
natural numbers by letting $N=\oneset{a_{1}, a_{2}, ..., a_{n},... }$. 
Since for each open set $U$, $\sum_{i=1}^{n} \nu_{\{a_{i}\}}(U) = 
\nu_{\{a_{1},..., a_{n}\}}(U)\leq \nu_{X}(U)=\nu(U)$ and $\nu(U)$ 
is bounded, it means that for each $U$ the sequence 
$\nu_{\{a_{i}\}}(U), n=1, ..., n,...$ is commutatively summable. 
Hence $\nu^{*}$ is well-defined and takes values in $\mathbb R_{+}$.

For monotonicity of $\nu^{*}$, we let $U$ be Scott-open and compute as follows:
\begin{align*}
\nu^{*}(U)   &= \nu(U) - \sum_{a\in N} \nu_{\{a\}} (U)\\
		 & = \nu(U) - \sum_{i= 1}^{\infty} \nu_{\{a_{i}\}}(U)\\
		 & = \lim_{n\to \infty} (\nu(U)-  \sum_{i= 1}^{n} \nu_{\{a_{i}\}}(U) )\\
		 & = \inf_{n\in \N} ( \nu_{X}(U)  -  \nu_{\{a_{1},..., a_{n}\}}(U)  )\\
		 &=\inf_{n\in \N} \nu_{X\setminus \{a_{1},..., a_{n}\}}(U) .
\end{align*}
Since for each $n\in \N$, $ \nu_{X\setminus \{a_{1},..., a_{n}\}}$ 
is a continuous valuation therefore a nonnegative order-preserving map. Hence the pointwise 
infimum $\nu^{*}$ of $\nu_{X\setminus \{a_{1},..., a_{n} \}}, n\in \N$ 
also is order-preserving. 
\end{proof}

\begin{lemma}\label{equal}
For any two Scott-open subsets $U, V$ of $\J$ with 
$M\cap U = M\cap V$, $\nu^{*}(U) = \nu^{*}(V)$.
\end{lemma}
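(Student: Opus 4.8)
The plan is to show that $\nu^*$ ``sees only the maximal points'' $M$, so its value is unchanged when we modify an open set away from $M$. First I reduce to the case of nested opens: given $U,V$ with $M\cap U = M\cap V$, set $W = U\cap V$; then $W\subseteq U$, $W \subseteq V$, and $M\cap W = M\cap U = M\cap V$, so it suffices to prove $\nu^*(U)=\nu^*(W)$ whenever $W\subseteq U$ are Scott-open with $M\cap W = M\cap U$. For such a pair the crescent $C\eqdef U\setminus W$ is contained in $N$.

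The key computational step is that $\nu^*_{\{a\}}=0$ for every $a\in N$. Indeed, $\{a\}$ is a crescent by Lemma~\ref{j-point-is-crescent}, so writing $\{a\}=U_a\setminus V_a$ with $V_a\subseteq U_a$ open, one has $\nu^*_{\{a\}}(X)=\nu^*(U_a)-\nu^*(V_a)$; expanding the definition $\nu^*(O)=\nu(O)-\sum_{b\in N}\nu_{\{b\}}(O)$ and using that each $\nu_{\{b\}}=r_b\delta_b$ (Lemma~\ref{restriction-on-points}), all terms cancel except $\nu_{\{a\}}(X)-r_a=0$. Hence $\nu^*_{\{a\}}=r^*_a\delta_a$ with $r^*_a=0$.

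Next I analyse the shape of $C$. Inside each column $C_j$ the finite points form a chain, so $U\cap C_j$ and $W\cap C_j$ are upper segments of it; consequently $C\cap C_j$ is a \emph{finite} interval, although $C$ itself may be infinite across columns. Call a finite $G\subseteq C$ \emph{admissible} if, in each column, $G\cap C_j$ is an upper segment of $C\cap C_j$. For admissible $G$ the set $W\cup G$ is Scott-open: its complement $(\J\setminus W)\setminus G$ is downward closed and closed under directed suprema, the latter because a finite point is never the supremum of a directed set avoiding it, and because every maximal point above a point of $C$ already lies in $W$ (as $M\cap U=M\cap W$). The admissible sets form a directed family whose union is $C$, so $\{W\cup G\}_G$ is a directed family of opens with union $U$.

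Finally I combine these. For admissible $G$, since $G=(W\cup G)\setminus W$ is a finite disjoint union of singletons from $N$, Lemma~\ref{sum-of-musub} together with the vanishing $\nu^*_{\{a\}}=0$ gives $\nu^*(W\cup G)-\nu^*(W)=\nu^*_G(X)=\sum_{a\in G}\nu^*_{\{a\}}(X)=0$. As $\nu^*$ is a continuous valuation (Proposition~\ref{minus}), Scott-continuity along the directed family yields $\nu^*(U)=\sup_G \nu^*(W\cup G)=\nu^*(W)$, and the reduction above finishes the proof. The main obstacle is precisely the possibility that $C$ is infinite: the finite additivity of Lemma~\ref{sum-of-musub} does not apply directly, so the columnwise structure of $\J$ and the continuity of $\nu^*$ must be invoked to pass to the limit.
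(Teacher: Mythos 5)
Your proof is correct and follows essentially the same route as the paper's: both exhaust the difference of the two open sets by finite subsets of finite-height points (using the column structure of $\J$ to keep each approximant Scott-open), check that $\nu^{*}$ is unchanged by adjoining such a finite set, and conclude by Scott-continuity of $\nu^{*}$. The only differences are cosmetic --- the paper uses the truncations $U\cup(V\cap D_{n})$ in place of your admissible sets $W\cup G$, and verifies the invariance by a direct telescoping computation rather than via the identity $\nu^{*}_{\{a\}}=0$.
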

\begin{proof}

Without loss of generality, we assume that $U\subseteq V$. 
Let $V_{n}= U \cup (V\cap  D_{n})$ (see the convention after
Definition~\ref{def:johnstone}). 
Since $M\cap U = M\cap V$, $V_{n}$ is Scott-open for 
each $n$. Moreover, $V = \bigcup_{n=1}^{\infty} V_{n}$.\\
Note that for every $n$, $V_{n}\setminus U = (V\cap D_{n})\setminus U$ is a finite 
subset of $N$, again by the fact that $M\cap U = M\cap V$. Then we know that 
\begin{align*}
\nu^{*}(V_{n})
& = \nu^{*}(U) + \nu^{*} (V_{n}) - \nu^{*}(U) \\
& = \nu^{*}(U) + (\nu(V_{n}) - \nu(U)) - (\sum_{a\in N} \nu_{\{a\}} (V_{n}) - \sum_{a\in N} \nu_{\{a\}} (U))\\
& = \nu^{*}(U) + \nu_{V_{n}\setminus U }(V_{n} )- (\sum_{a\in N} \nu_{\{a\}} (V_{n}) - \sum_{a\in N} \nu_{\{a\}} (U))\\
& = \nu^{*}(U) + \sum_{a\in V_{n}\setminus U}\nu_{\{a\}} (V_{n}) - \sum_{a\in N\cap ( V_{n}\setminus U )} \nu_{\{a\}}(V_{n})\\
& = \nu^{*}(U).
\end{align*}
Hence by Scott-continuity of $\nu^{*}$ (Proposition~\ref{minus}), we know that 
$ \nu^{*}(V) = \nu^{*}(\bigcup_{n\in \N} V_{n}) = \sup_{n\in \N} \nu^{*}(V_{n}) =  \nu^{*}(U)$.
\end{proof}

We consider $M$ as a subspace of $\J$ with the Scott-topology, 
and define a continuous valuation $\nu^{\infty}$ on $M$ by 
stipulating for each open subset $W$ of $M$ that $\nu^{\infty}(W) = \nu^{*}(U_{W})$, 
where $U_{W}$ is the largest Scott-open subset of $\J$ with $M \cap U_{W}= W$. 
We now use Lemma~\ref{equal} to prove that $\nu^{\infty}$ is indeed a bounded continuous valuation 
on $M$. This is a consequence of \cite[Proposition~5.2]{goubault21}, we provide a direct proof here nevertheless. 

\begin{lemma}
The map $\nu^{\infty}$ is a bounded continuous valuation on $M$ equipped with the relative Scott topology from $\J$. 
\end{lemma}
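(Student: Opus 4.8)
The plan is to verify the four defining properties of a continuous valuation for $\nu^{\infty}$, together with boundedness, systematically reducing each one to the corresponding property of $\nu^{*}$. The single fact that makes this reduction work is Lemma~\ref{equal}: the value $\nu^{*}(U)$ depends only on the trace $M\cap U$. First I would record that $U_{W}$ is well defined: since $W$ is relatively open, $W=M\cap U_{0}$ for some Scott-open $U_{0}$, so the family of Scott-opens $U$ with $M\cap U=W$ is nonempty, and its union $U_{W}$ is again Scott-open with $M\cap U_{W}=W$ and is the largest such set. Lemma~\ref{equal} moreover guarantees that $\nu^{\infty}(W)=\nu^{*}(U_{W})$ does not depend on which Scott-open representative of $W$ one uses, which is what makes the definition meaningful.

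For strictness, I would observe that every nonempty Scott-open subset of $\J$ meets $M$: if $U$ is Scott-open and contains a finite-height point $(i,j)$, then by upward closure it contains $(i,\infty)\in M$; and $U$ cannot consist solely of maximal points, since for $(i,\infty)\in U$ the directed set $\{(i,n):n\in\N\}$ lies in the complement of $U$ while its supremum $(i,\infty)$ does not, contradicting Scott-openness. Hence $M\cap U=\emptyset$ forces $U=\emptyset$, so $U_{\emptyset}=\emptyset$ and $\nu^{\infty}(\emptyset)=\nu^{*}(\emptyset)=0$. Monotonicity is then immediate: if $W_{1}\subseteq W_{2}$ then $U_{W_{1}}\cup U_{W_{2}}$ is a Scott-open set with trace $W_{2}$, so $U_{W_{1}}\subseteq U_{W_{2}}$ by maximality, and $\nu^{*}$ is monotone.

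Modularity is where the $M$-trace trick does the real work. Given relatively open $W_{1},W_{2}$, the sets $U_{W_{1}}\cup U_{W_{2}}$ and $U_{W_{1}}\cap U_{W_{2}}$ are Scott-open with traces $W_{1}\cup W_{2}$ and $W_{1}\cap W_{2}$ respectively; by Lemma~\ref{equal} their $\nu^{*}$-values coincide with $\nu^{\infty}(W_{1}\cup W_{2})$ and $\nu^{\infty}(W_{1}\cap W_{2})$. Applying the modularity of $\nu^{*}$ (Proposition~\ref{minus}) to the pair $U_{W_{1}},U_{W_{2}}$ then yields $\nu^{\infty}(W_{1})+\nu^{\infty}(W_{2})=\nu^{\infty}(W_{1}\cup W_{2})+\nu^{\infty}(W_{1}\cap W_{2})$.

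For Scott-continuity, given a directed family $(W_{i})_{i\in I}$ with union $W$, monotonicity of $W\mapsto U_{W}$ makes $(U_{W_{i}})_{i\in I}$ directed, its union $\bigcup_{i}U_{W_{i}}$ has trace $W$, so Lemma~\ref{equal} gives $\nu^{*}(\bigcup_{i}U_{W_{i}})=\nu^{\infty}(W)$, while Scott-continuity of $\nu^{*}$ gives $\nu^{*}(\bigcup_{i}U_{W_{i}})=\sup_{i}\nu^{*}(U_{W_{i}})=\sup_{i}\nu^{\infty}(W_{i})$. Boundedness is clear since $U_{M}=\J$, whence $\nu^{\infty}(M)=\nu^{*}(\J)\leq\nu(\J)<\infty$. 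I expect the main obstacle to be essentially bookkeeping: ensuring that at each step one invokes Lemma~\ref{equal} on the correct pair of Scott-open sets with matching traces, rather than any genuine analytic difficulty; the strictness observation is the only place requiring a small dcpo-specific argument.
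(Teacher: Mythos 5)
Your proof is correct and follows essentially the same route as the paper's: every property of $\nu^{\infty}$ is reduced to the corresponding property of $\nu^{*}$ via Lemma~\ref{equal}, with identical treatments of modularity (applying modularity of $\nu^{*}$ to $U_{W_{1}},U_{W_{2}}$ and matching traces of the union and intersection) and of Scott-continuity (via the directed family $U_{W_{i}}$). The only difference is that you spell out the well-definedness of $U_{W}$ and the strictness step, which the paper dismisses as easy; your added details are accurate.
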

\begin{proof}
It is easy to see that $\nu^{\infty}(\emptyset) = 0$. 

For modularity, let $W_{1}$ and $W_{2}$ be two open subsets of $M$. Then we have:
\begin{align*}
\nu^{\infty}(W_{1}) + \nu^{\infty}(W_{2}) &=  \nu^{*}(U_{W_{1}}) + \nu^{*}(U_{W_{2}})   & \text{definition of $\nu^{\infty}$} \\
							   &= \nu^{*}(U_{W_{1}} \cup U_{W_{2}}) + \nu^{*}(U_{W_{1}} \cap U_{W_{2}}) &\text{$\nu^{*}$ is a valuation} \\
							   &= \nu^{*}(U_{W_{1}\cup W_{2}}) + \nu^{*}(U_{ W_{1} \cap W_{2} }) & \text{by Lemma~\ref{equal}}\\
							   &= \nu^{\infty}(W_{1} \cup W_{2}) + \nu^{\infty}(W_{1}\cap W_{2}).    & \text{definition of $\nu^{\infty}$}
\end{align*}
Now we prove that $\nu^{\infty}$ is Scott-continuous. Let $W_{i}, i\in I$ be a directed family of open subsets of $M$ and $W= \bigcup_{i\in I}W_{i} $. We first note that $U_{W_{i}}, i\in I$ also form a directed family of Scott-opens in $\J$. Hence we have 
\begin{align*}
\nu^{\infty}(W ) 		         &= \nu^{*}(U_{\bigcup_{i\in I}W_{i}})  & \text{definition of $\nu^{\infty}$}\\
		         &= \nu^{*}(\bigcup_{i\in I}U_{W_{i}})  & \text{by Lemma~\ref{equal}}\\
		         &= \sup_{i\in I} \nu^{*}(U_{W_{i}})  & \text{$\nu^{*}$ is Scott-continuous}\\
		         &= \sup_{i\in I} \nu^{\infty}(W_{i}).  & \text{definition of $\nu^{\infty}$}
\end{align*}
Finally, boundedness of $\nu^{\infty}$ is clear since $\nu^{*}$ is.
\end{proof}

We are now ready to prove Theorem~\ref{main-theorem}. 

\begin{proof}[{\bf Proof of Theorem~\ref{main-theorem}}]

The space $M$ in the relative Scott topology from $\J$ is 
homeomorphic to $\Ncof$. For example, one of the (infinitely many) 
homeomorphisms between them is the map sending $n\in \Ncof$ to $(n, \infty) \in M$. 
It then follows from Proposition~\ref{valuations-on-N} that $\nu^{\infty}$ 
can be written as $ \alpha + r \beta$ where $\alpha$ is a discrete valuation on 
$M$, $r$ is a nonnegative real number, and $\beta$ is the valuation on $M$ 
defined as $\beta(W) =1$ if $W \subseteq M$ is nonempty, and $\beta(W) = 0$ 
if $W = \emptyset$. For each Scott open subset $U$ of $\J$,  we have 
$$\nu^{*}(U) = \nu^{\infty}(U \cap M) = \alpha (U\cap M) + r \beta(U\cap M).$$
Now we define $\theta' (U) = \alpha(U\cap M)$. Since $\alpha$ is a discrete valuation on $M$, it is obvious that $\theta'$ is 
a discrete valuation on $\J$. Hence, by combining Proposition~\ref{minus} 
and the fact that $\mu(U) = \beta(U\cap M)$, we know that for each Scott 
open subset $U$ of $\J$,  
$$\nu(U) = \sum_{a\in N} \nu_{\{a\}} (U) + \nu^{*}(U) = \sum_{a\in N} \nu_{\{a\}} (U) + \theta'(U) +r\mu(U).$$ 
Finally, we let $\theta = \theta' + \sum_{a\in N} \nu_{\{a\}}$ 
and the proof is complete. 
\end{proof}

As a corollary to Theorem~\ref{main-theorem}, we have the following:

\begin{corollary}
Every continuous valuation on $\J$ is point-continuous. 
\end{corollary}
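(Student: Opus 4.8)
The plan is to bootstrap from Theorem~\ref{main-theorem}, which already disposes of the \emph{bounded} case: every bounded continuous valuation on $\J$ is point-continuous. So let $\nu$ be an arbitrary continuous valuation on $\J$, and fix a Scott-open $U$ together with a real number $r$ with $0 \le r < \nu(U)$. To verify point-continuity directly from the definition, I would first produce a \emph{bounded} continuous valuation $\beta$ on $\J$ with $\beta \le \nu$ (in the stochastic order) and $\beta(U) > r$. Granting such a $\beta$, the rest is immediate: by Theorem~\ref{main-theorem}, $\beta$ is point-continuous, so applying point-continuity of $\beta$ to the open $U$ and the real number $r < \beta(U)$ yields a finite subset $A \subseteq U$ with $\beta(V) > r$ for every open neighbourhood $V$ of $A$; since $\beta \le \nu$, we get $\nu(V) \ge \beta(V) > r$ for every such $V$, and $A$ is the finite witness required for $\nu$ at $(U,r)$. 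As $U$ and $r$ were arbitrary, $\nu$ is point-continuous.

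The whole argument therefore rests on the single approximation statement that for every open $U$ and every real $r < \nu(U)$ there is a bounded continuous valuation $\beta \le \nu$ with $\beta(U) > r$; equivalently, that $\nu$ is the pointwise supremum of the bounded continuous valuations lying below it. This is the content of the trick due to Heckmann that was advertised before the statement of Theorem~\ref{main-theorem}, and it is the step I expect to be the main obstacle. When $\nu(U) < \infty$ it is easy: the restriction $\beta = \nu(\,\cdot \cap U)$ is a continuous valuation bounded by $\nu(U)$, lies below $\nu$, and satisfies $\beta(U) = \nu(U) > r$. The difficulty is concentrated in the case $\nu(U) = \infty$.

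To handle $\nu(U) = \infty$ I would distinguish the two ways in which infinite mass can accumulate, and construct $\beta$ accordingly. Diffuse mass is recovered by restricting $\nu$ to an open $W$ of finite $\nu$-measure: if some open $W \subseteq U$ has $r < \nu(W) < \infty$, then $\beta = \nu(\,\cdot \cap W)$ works exactly as above. Such a $W$ need not exist, however, because mass may be concentrated at a single point: the valuation that assigns $+\infty$ to every open containing a fixed $a \in \J$ has $\nu(W) \in \{0,+\infty\}$ for every open $W$, so no finite-measure restriction sees it, yet the scaled Dirac valuations $c\delta_a$ (with $c < \infty$) lie below it and already give $c\delta_a(U) = c > r$. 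The general construction must combine these two mechanisms --- a finite-measure restriction for the diffuse part and scaled point masses for the concentrated part --- and the real work is to show that together they exhaust $\nu(U)$ from below for an arbitrary unbounded continuous valuation on $\J$. This is exactly where I would lean on Heckmann's trick (or, alternatively, re-derive it on $\J$ by extending the decomposition $\nu = \theta + r\mu$ of Theorem~\ref{main-theorem} to the unbounded setting, with $\theta$ now a possibly infinite discrete valuation), and it is the part I expect to require the most care.
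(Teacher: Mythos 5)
Your proposal takes essentially the same route as the paper's proof: the approximation statement your whole argument rests on --- that an arbitrary continuous valuation on $\J$ is approximated from below, open by open, by bounded continuous valuations --- is exactly what the paper imports from Heckmann (Theorem~4.2 of the 1996 paper: every continuous valuation on a $T_0$ space is a directed supremum of bounded continuous valuations), and your finite-witness transfer argument is just an unfolding, in the one instance needed, of the closure of point-continuous valuations under directed suprema that the paper also cites. The step you flag as the main obstacle is thus resolved by the same external reference the paper itself relies on, so there is no gap.
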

\begin{proof}
It is proved in \cite[Theorem 4.2]{heckmann96} that every continuous 
valuation on a $T_{0}$ topological space can be written as a directed 
supremum of bounded continuous valuations. By Theorem~\ref{main-theorem}, we know that on $\J$ 
every bounded continuous valuation is of the form $\theta + r\mu$, which is obviously point-continuous. Hence every 
continuous valuation on $\J$, as a directed supremum of point-continuous 
valuations, also is point-continuous~\cite[Section 3.2, Item~(5)]{heckmann96}.
\end{proof}

\subsection{Non-minimality of $\mu$ on $\J$}

In this subsection, we show that the valuation $\mu$ on $\J$, defined in the last
subsection, is not a minimal valuation. 

As we know from Theorem~\ref{main-theorem}, every bounded continuous 
 valuation $\nu$ on $\J$ is of the form $\theta + r\mu$. 
We first prove that that for any $r>0$, $\theta + r\mu$ can not be 
written as a supremum of discrete valuations. Without loss of 
generality, in the sequel we assume that the total mass of $\theta + r\mu$ is $1$, i.e., 
$\theta (\J) + r\mu(\J) = \theta (\J)+ r = 1$. 

We will need the following extension result for discrete 
valuations, which is a special case of \cite[Theorem 4.1]{alvarez99}. 
\begin{lemma}\label{discrete-extension}
Let $\theta$ be a discrete valuation on a dcpo $D$ with 
$\theta(D) < \infty$. Then $\theta$ has a unique extension to a 
measure, which we again denote as ${\theta}$, on the Borel 
$\sigma$-algebra of $(D, \sigma D)$, where $\sigma D$ is the Scott topology on $D$.
In fact, if $\theta = \sum_{i\in \N} r_{i}\delta_{x_{i}}$ with 
$\sum_{i\in \N}r_{i} < \infty$,  then the value of its measure 
extension on a Borel subset $B$ of $D$ is just $\sum_{i/x_{i}\in B}r_{i}$. \hfill $\Box$
\end{lemma}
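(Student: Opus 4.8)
The plan is to write down the extension by hand and then pin down uniqueness with a standard measure-theoretic argument, rather than appealing to \cite[Theorem 4.1]{alvarez99}. Writing $\theta=\sum_{i\in\N}r_i\delta_{x_i}$ with $\sum_{i\in\N}r_i=\theta(D)<\infty$, I would \emph{define} a set function on the Borel $\sigma$-algebra of $(D,\sigma D)$ by $\hat\theta(B)=\sum_{i\,:\,x_i\in B}r_i$ and check that it does the job. The point is that $\hat\theta=\sum_{i\in\N}r_i\delta_{x_i}$, where each $\delta_{x_i}$ is now read as the Dirac \emph{measure} at $x_i$ (equal to $1$ on Borel sets containing $x_i$ and to $0$ otherwise); this is the very membership rule already used to define the Dirac valuation, so no new data is introduced. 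Each $r_i\delta_{x_i}$ is manifestly a Borel measure, and a countable sum of Borel measures with nonnegative coefficients is again a Borel measure: countable additivity of $\hat\theta$ reduces to interchanging two nonnegative summations, which is legitimate by Tonelli's theorem (equivalently, monotone convergence). Since $\hat\theta(D)=\sum_{i\in\N}r_i<\infty$, the measure $\hat\theta$ is finite.

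The second step is to confirm that $\hat\theta$ extends $\theta$, i.e.\ that restricting $\hat\theta$ to $\sigma D$ returns $\theta$. This is immediate: for a Scott-open $U$ one has $\hat\theta(U)=\sum_{i\,:\,x_i\in U}r_i=\sum_{i\in\N}r_i\delta_{x_i}(U)=\theta(U)$, the last equality being the definition of the discrete valuation $\theta$ on opens. In particular the formula displayed in the statement is exactly the definition from which I started, so existence and the explicit formula come together.

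For uniqueness, let $m$ be any finite Borel measure on $(D,\sigma D)$ whose restriction to $\sigma D$ equals $\theta$. The Scott-open sets form a $\pi$-system (they are closed under finite intersection) and, by definition of the Borel $\sigma$-algebra, they generate it; moreover $m(D)=\theta(D)=\hat\theta(D)<\infty$ since $D$ is itself Scott-open. Two finite measures that agree on a generating $\pi$-system and on the whole space must coincide, by the $\pi$-$\lambda$ (Dynkin) theorem, whence $m=\hat\theta$ on all Borel sets. This yields both the existence of the extension and its uniqueness.

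I do not expect a genuine obstacle here; the result is elementary once phrased this way. The only two places needing care are the two standard invocations just made: countable additivity of $\hat\theta$, which uses nonnegativity of the $r_i$ to rearrange the double sum freely, and the uniqueness argument, which uses the finiteness $\theta(D)<\infty$ to apply the $\pi$-$\lambda$ theorem. Both are precisely where the finite-total-mass hypothesis enters; dropping it would, in general, destroy uniqueness.
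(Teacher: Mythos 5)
Your proof is correct, and it is genuinely different in character from what the paper does: the paper gives no argument at all, simply citing the lemma as a special case of the general extension theorem of Alvarez-Manilla, Edalat and Saheb-Djahromi \cite[Theorem 4.1]{alvarez99}, which produces measure extensions of a broad class of continuous valuations. You instead exploit the special shape of a discrete valuation to build the extension by hand: $\hat\theta(B)=\sum_{i:x_i\in B}r_i$ is a countable nonnegative combination of Dirac measures, countable additivity reduces to Tonelli, the restriction to Scott-opens visibly recovers $\theta$ (since $\sup_n\sum_{i=1}^n r_i\delta_{x_i}(U)=\sum_{i:x_i\in U}r_i$ for nonnegative $r_i$), and uniqueness follows from the $\pi$-$\lambda$ theorem because the Scott topology is a $\pi$-system generating the Borel $\sigma$-algebra and contains $D$ itself, forcing any extending measure to be finite. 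Your route is more elementary and self-contained, and it delivers the explicit formula $\sum_{i:x_i\in B}r_i$ as the definition rather than as an afterthought --- note that even under the citation approach one would still have to verify that this formula defines a measure extending $\theta$ and then invoke uniqueness to identify it with the extension, which is essentially your existence step. What the citation buys is brevity and uniformity with the rest of the valuation-extension literature; what your argument buys is independence from that machinery and a transparent reason why the finite-total-mass hypothesis is where both additivity (free rearrangement of nonnegative double sums) and uniqueness (applicability of Dynkin's theorem) come from.
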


Lemma~\ref{j-point-is-crescent} says that for each $a\in N\subseteq \J$, 
the singleton $\oneset{a}$ is a crescent, hence $\{a\}$ is a Borel subset 
of $\J$. Actually, every subset of $\J$ is a Borel subset. To see this, we only 
need to show that $\oneset{(i,\infty)}$ is a Borel subset in $\J$ for each $i\in \N$, 
since $\J$ is countable. Indeed, 
$\oneset{(i,\infty)} = \da (i,\infty) \setminus \bigcup_{i\in \N} \da L_{j}$, 
and for $i, j\in \N$, $\da (i,\infty)$ 
and $\da L_{j}$ are Scott-closed. Now if $\theta$ is a discrete 
valuation on $\J$, then it makes sense to apply $\theta$ to any subset of $\J$, 
viewing that $\theta$ is a measure defined on the Borel $\sigma$-algebra 
generated by Scott-opens on $\J$.

\begin{proposition}\label{cannot-be-a-sup}
Let $\theta$ be a discrete valuation on $\J$ with total mass $1-r$ and $r>0$.  
Then $\theta+r\mu$ is not a directed supremum of discrete valuations. 
\end{proposition}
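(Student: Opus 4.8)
\emph{Proof strategy.} The plan is to argue by contradiction and to transport the whole situation from $\J$ down to $\Ncof$ through the restriction-to-$M$ construction $\nu\mapsto\nu^\infty$ used for Theorem~\ref{main-theorem}, where the statement becomes elementary. So assume $\theta+r\mu=\dsup_{i\in I}\alpha_i$ for some directed family ${(\alpha_i)}_{i\in I}$ of discrete valuations on $\J$ with $r>0$; by Lemma~\ref{discrete-extension} I regard each $\alpha_i$ as a Borel measure of total mass $\le 1$, and write $\alpha_i(\{x\})$ for its atom at $x$.

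First I would dispose of the base case on $\Ncof$: a valuation $\alpha+r\beta$ with $r>0$ (writing $\alpha=\sum_k a_k\delta_k$) is never a directed supremum of discrete valuations. The point is that, testing against the co-finite opens $\N\setminus F$ and letting $F$ exhaust $\N\setminus\{p\}$, the stochastic order between discrete valuations on $\Ncof$ is exactly the coordinatewise order of their weight sequences. Hence if $\alpha+r\beta=\dsup_i d_i$ with $d_i=\sum_k c^{(i)}_k\delta_k$, then $c_k:=\dsup_i c^{(i)}_k$ is computed coordinatewise, and by monotone convergence $\sum_{k\in U}c_k=(\alpha+r\beta)(U)$ for every nonempty co-finite $U$. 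Taking $U=\N$ and $U=\N\setminus\{p\}$ and subtracting forces $c_p=a_p$ for all $p$, whence $\sum_k c_k=\alpha(\N)$; but the equation for $U=\N$ gives $\sum_k c_k=\alpha(\N)+r$, contradicting $r>0$.

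The heart of the reduction is the identity $(\theta+r\mu)^*=\dsup_i\alpha_i^*$, where $(\cdot)^*$ is the subtraction of all finite-height atoms from Proposition~\ref{minus}. Granting it, both sides depend only on the trace on $M$ (Lemma~\ref{equal}), so passing to $\nu^\infty$ yields $\alpha+r\beta=\dsup_i\alpha_i^\infty$ on $M\cong\Ncof$; here each $\alpha_i^\infty$ is the discrete valuation $\sum_k\alpha_i(\{(k,\infty)\})\delta_{(k,\infty)}$, and (as $\mu^*=\mu$ has trace $\beta$) the $\mu$-part contributes the genuine summand $r\beta$ with $r>0$, so the base case is contradicted. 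To obtain the identity I would first check that every atom mass is monotone along the directed family: for $x\in N$ this follows by applying Lemma~\ref{lemma:mu+nu} and Lemma~\ref{lemma:restritoring} to the continuous valuation $\alpha_j-\alpha_i$ (for $\alpha_i\le\alpha_j$), whose restriction to the crescent $\{x\}$ is a nonnegative continuous valuation; for $x=(k,\infty)\in M$, which is \emph{not} a crescent, I would instead use continuity from above of the finite measure $\alpha_i$ along a decreasing sequence of Scott-open neighborhoods $O^{(k)}_n\downarrow\{(k,\infty)\}$ (built, as in Lemma~\ref{j-point-is-crescent}, from the $n$-th tail of column $k$ together with all columns of index $>n$), which gives $\alpha_i(\{(k,\infty)\})=\inf_n\alpha_i(O^{(k)}_n)$ and hence monotonicity in $i$.

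With coordinatewise monotonicity of all atom masses in hand, the identity $(\theta+r\mu)^*=\dsup_i\alpha_i^*$ becomes an instance of monotone convergence: for a fixed open $U$ the quantities $\alpha_i(U\cap M)$ and $\sum_{a\in N\cap U}\alpha_i(\{a\})$ are each monotone in $i$, so the directed supremum distributes over their sum and over the countable sum defining $(\cdot)^*$, letting $\dsup_i$ commute with the infinite subtraction. I expect this interchange --- equivalently, the Scott-continuity of $(\cdot)^*$ along the given directed family --- to be the main obstacle, since $(\cdot)^*$ is an infimum of restrictions and is only upper semicontinuous in general; it is precisely the coordinatewise monotonicity coming from Lemma~\ref{lemma:restritoring} and the neighborhoods $O^{(k)}_n$ that rescues it here. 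Everything else (the elementary $\Ncof$ computation and the verification that the $O^{(k)}_n$ are Scott-open and shrink to $\{(k,\infty)\}$) is routine.
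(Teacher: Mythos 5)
Your reduction hinges on two claims that both fail. First, you assert that for $\alpha_i\le\alpha_j$ in the stochastic order the atom masses at finite-height points are monotone, justifying this by applying Lemma~\ref{lemma:mu+nu} to $\alpha_j-\alpha_i$. But monotonicity of $\alpha_j-\alpha_i$ is a \emph{hypothesis} of that lemma, not a conclusion, and it is false here: $\delta_{(0,0)}\le\delta_{(0,\infty)}$ in the stochastic order, yet writing $\{(0,0)\}=(\{(0,0)\}\cup\ua L_1)\setminus\ua L_1$ as in Lemma~\ref{j-point-is-crescent}, the atom at $(0,0)$ drops from $1$ to $0$; the stochastic order does not control differences $\nu(U)-\nu(V)$, so restriction to a crescent is not monotone in $\nu$. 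Second, and fatally, the interchange $(\dsup_i\alpha_i)^*=\dsup_i\alpha_i^*$ on which the whole reduction rests is false for directed families of discrete valuations: take $\alpha_n=\delta_{(0,n)}$, a chain of simple valuations with supremum $\delta_{(0,\infty)}$; then $\alpha_n^*=0$ for every $n$, while $(\delta_{(0,\infty)})^*=\delta_{(0,\infty)}$. Mass sitting at finite heights can ``escape up a column'' and materialize on $M$ only in the limit, so the passage $\nu\mapsto\nu^\infty$ to $M\cong\Ncof$ is not Scott-continuous along directed families, and the transport of the problem to $\Ncof$ does not go through. (Your $\Ncof$ base case is fine, as is the monotonicity of atoms at points of $M$ --- those singletons are indeed countable filtered intersections of Scott-opens; the problem lies entirely in the finite-height part.)

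This escape-to-infinity phenomenon is exactly what the paper's proof is built to control, and it does so quantitatively rather than by commuting $(\cdot)^*$ with the supremum. It first finds $\theta_a$ in the family with mass $>1-\frac r4$ concentrated on finitely many columns $D_k$; then, using the open set $U_k=\J\setminus\da D_k$ (on which $\theta+r\mu$ has mass $\ge r$) together with the fact that $\ua D_k$ is a filtered intersection of Scott-opens (so its mass is monotone along the family), it traps mass $>\frac r2$ of some later $\theta_b$ inside the Borel set $M_k=\ua D_k\cap U_k\subseteq M$, and then inside a finite piece $M_{k,l}$. Iterating produces pairwise disjoint finite subsets of $M$, each carrying mass $>\frac r2$ of a single member of the family, which eventually exceeds the total mass $1$. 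Any repair of your approach would need a substitute for the false interchange, and it is hard to see one that does not amount to this bookkeeping.
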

\begin{proof}
Suppose, for the sake of contradiction,  that there exists a directed family 
$\oneset{\theta_{a}}_{a\in A}$ of discrete valuations with supremum 
$\sup_{a\in A}\theta_{a} = \theta + r\mu$. 

Since $(\theta+r\mu)(\J) =1$, there exists a discrete valuation 
$\theta_{a}, a\in A$ such that $\theta_{a}(\J) > 1- \frac{r}{4}$. 
Note that $\sup_{i\in \N}D_{i} = \J$, where $D_{i} = \bigcup_{j\leq i}C_{j}$ 
consists of points in the first $i$-many columns (see Section~\ref{section-non-soberj}), 
and also that $D_{i}$ is a Borel subset of $\J$ for each $i\in \N$. Since $\theta_{a}$ 
extends to a measure (Lemma~\ref{discrete-extension}), we know that there 
exists a $k\in \N$ such that $\theta_{a}(D_{k})> 1-\frac{r}{4}$. 

Obviously, $\da D_{k}$ is Scott-closed and $U_{k}= \J\setminus \da D_{k}$ 
is a nonempty Scott-open subset of $\J$, hence
$(\theta + r\mu)(U_{k}) \geq r > \frac{3r}{4}$. Again since 
$\theta+r\mu = \sup_{a\in A} \theta_{a} $, there exists 
$\theta_{b} > \theta_{a}$ such that $\theta_{b}(U_{k})> \frac{3r}{4}$. 

Note  that the set $\ua D_{k}$ is a filtered intersection of countably many
Scott-open subsets of $\J$, for example one can write 
$\ua D_{k} = \bigcap_{i\in \N} O_{i}$, where for each $i$, 
$O_{i} = \J \setminus \da \{(k+1, i), (k+2, i), ...  \}$. Then we know 
that $$\theta_{b}(\ua D_{k}) = \inf_{i\in \N} \theta_{b}(O_{i}) 
\geq \inf_{i\in \N} \theta_{a}(O_{i}) = \theta_{a}(\ua D_{k}) 
\geq \theta_{a} (D_{k}) > 1- \frac{r}{4}.$$

Since $M_{k} =  \set{(i, \infty)}{k< i}$ is a Borel subset of $\J$ and it 
equals to $\ua D_{k} \cap U_{k}$, hence by inclusion-exclusion of $\theta_{b}$ we know 
$$\theta_{b}(M_{k}) + \theta_{b}(\J) \geq   \theta_{b}(U_{k})  + \theta_{b}(\ua D_{k}) > 1+\frac{r}{2},$$ 
from which it follows that $\theta_{b}(M_{k}) > \frac{r}{2}$. 
Since $\bigcup_{l/l > k} M_{k, l} = M_{k}$ ($M_{k, l} =\set{(i, \infty)}{k< i \leq l} $), there exists a big 
enough $l> k$ such that $\theta_{b}(M_{k, l}) >  \frac{r}{2}$. 

Now consider $U_{l}= \J\setminus \da D_{l}$. Since 
$(\theta+r\mu)(U_{l}) \geq r > \frac{3r}{4}$, we find a $\theta_{c}, c\in A$ 
such that $\theta_{c}\geq \theta_{b}$ and $\theta_{c}(U_{l})> \frac{3r}{4}$. 
Meanwhile, note that $D_{k} \subseteq D_{l}$, similar to the reasoning above 
we will have that 
$\theta_{c}(\ua D_{l}) \geq \theta_{b}(\ua D_{l}) \geq \theta_{b}(\ua D_{k})> 1-\frac{r}{4}$ and that $\theta_{c}(M_{l})> \frac{r}{2}$. 
Then there exists a large enough natural number $m> l$ such that 
$\theta_{c} (M_{l,m}) > \frac{r}{2}$. 

We claim that $\theta_{c}(\J) > r$. Indeed, 
$\theta_{c}(\J) \geq \theta_{c} (M_{k,m}) = \theta_{c} (M_{k,l}) + \theta_{c}(M_{l,m}) \geq \theta_{b}(M_{k, l}) + \theta_{c}(M_{l,m}) > r$. 
The second to last inequality comes from the fact that $\theta_{c}\geq \theta_{b}$ and 
$M_{k, l}$ is a filtered intersection of countably many Scott-open subsets of $\J$: 
one may take such a filtered family as $\set{ \J\setminus \da F }{ F\subseteq_{\mathrm{fin}} M\setminus M_{k, l}}$.

Next we consider the Scott-open set $U_{m} = \J\setminus \da D_{m}$ and then find $\theta_{d}\geq \theta_{c}$ with $\theta_{d}(\J) > \frac{3r}{2}$, and so forth. We proceed the above process $N$ times, where $N$ is a natural number 
satisfying that $N\times \frac{r}{2} > 1$, to find a discrete valuation 
$\theta_{z}$ in the directed family $\oneset{\theta_{a}}_{a\in A}$. 
By the construction, we would know that $\theta_{z}(\J) > 1$. However, this is 
impossible since $\theta_{z} \leq \theta+r\mu$ and $(\theta+r\mu)(\J) = 1$. 
So our assumption of the existence of the directed family  
$\oneset{\theta_{a}}_{a\in A}$ of discrete valuations with 
$\sup_{a\in A}\theta_{a} = \theta + r\mu$ mush have been wrong, 
and we finish the proof. 
\end{proof}

\begin{theorem}
Let $\theta$ be a discrete valuation on $\J$ with total mass $1-r$ and $r>0$.  
Then $\theta+r\mu$ is not a minimal valuation. 
\end{theorem}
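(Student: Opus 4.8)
The plan is to show that $\theta+r\mu$ lies outside a sub-dcpo of $\VV\J$ that still contains every simple valuation; since $\MM\J$ is \emph{by definition} the smallest such sub-dcpo, this forces $\theta+r\mu\notin\MM\J$, i.e.\ non-minimality. The two genuinely hard ingredients are already in hand: Theorem~\ref{main-theorem} (every bounded continuous valuation on $\J$ has the shape $\theta'+r'\mu$) and Proposition~\ref{cannot-be-a-sup} (such a valuation with $r'>0$ is not a directed supremum of discrete valuations). My remaining task is purely organizational, turning these into a closure statement.

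First I would record the engine of the argument: \emph{every bounded directed supremum of discrete valuations on $\J$ is itself discrete.} Indeed, such a supremum $\nu$ is a bounded continuous valuation, so by Theorem~\ref{main-theorem} it equals $\theta'+r'\mu$ with $\theta'$ discrete and $r'\geq 0$; after rescaling its total mass to $1$ (which affects neither discreteness nor being a directed supremum of discrete valuations) Proposition~\ref{cannot-be-a-sup} rules out $r'>0$, so $r'=0$ and $\nu=\theta'$ is discrete. Next I would set $\mathcal{C}$ to be the set of continuous valuations on $\J$ that are \emph{either unbounded or discrete}, and check that $\mathcal{C}$ is a sub-dcpo of $\VV\J$ containing $\SSS\J$. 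It contains all simple valuations trivially. Given a directed family in $\mathcal{C}$ with supremum $\nu$: if $\nu$ is unbounded then $\nu\in\mathcal{C}$; if $\nu$ is bounded, then every member of the family, lying below $\nu$ in the stochastic order, is also bounded and hence discrete, so $\nu$ is a bounded directed supremum of discrete valuations and the engine makes $\nu$ discrete. Thus $\MM\J\subseteq\mathcal{C}$.

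It then remains to verify $\theta+r\mu\notin\mathcal{C}$. Since it is bounded (total mass $1$), membership in $\mathcal{C}$ would force it to be discrete. But taking $U_k=\J\setminus\da D_k$ one has $U_k\supseteq U_{k+1}$ and $\bigcap_k U_k=\emptyset$, so $(\theta+r\mu)(U_k)=\theta(U_k)+r\to r>0$, whereas any finite discrete valuation $\alpha$ satisfies $\alpha(U_k)\to 0$ because each point of $\J$ lies in only finitely many $U_k$. Hence $\theta+r\mu$ is not discrete, so $\theta+r\mu\notin\mathcal{C}\supseteq\MM\J$, proving it is not minimal.

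The only load-bearing point is the directed-supremum case of the sub-dcpo check: I must know that the \emph{pieces} of a bounded supremum are bounded, so that the hypothesis of the engine is available, and only then invoke Theorem~\ref{main-theorem} together with Proposition~\ref{cannot-be-a-sup}. I expect a reader to scrutinize two things. First, that Proposition~\ref{cannot-be-a-sup} is stated only for total mass $1$, so I must make explicit that the rescaling step is harmless. Second, and more conceptually, that discreteness is \emph{not} preserved by a single directed supremum on a general space (e.g.\ Lebesgue measure is a directed supremum of simple valuations on $[0,1]$); the argument therefore genuinely relies on the $\J$-specific Theorem~\ref{main-theorem} to collapse bounded suprema of discrete valuations back to discrete ones, which is exactly what lets the sub-dcpo $\mathcal{C}$ close up.
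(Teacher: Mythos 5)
Your proof is correct and follows essentially the same route as the paper's: both arguments combine Theorem~\ref{main-theorem} with Proposition~\ref{cannot-be-a-sup} to conclude that the transfinite closure of the simple valuations under directed suprema never leaves the class of discrete valuations (in the bounded case), while $\theta+r\mu$ is bounded but not discrete. Your packaging of the paper's transfinite induction as a sub-dcpo $\mathcal{C}$ containing $\SSS\J$, together with the explicit handling of unbounded suprema and the explicit verification that $\theta+r\mu$ is not discrete, is if anything a slightly more careful write-up of the same argument.
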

\begin{proof}
We know that the set of minimal valuations (bounded by $1$) on $\J$ is obtained 
by taking directed suprema of simple valuations (bounded by $1$), directed suprema 
of directed suprema of simple valuations (bounded by $1$), and so forth, transfinitely.  
However, from Theorem~\ref{main-theorem} and Proposition~\ref{cannot-be-a-sup}, 
at each of these steps we only obtain discrete valuations.  Hence, by transfinite 
induction $\theta+r\mu$ is not a minimal valuation for $r>0$.
\end{proof}

The following result, as promised, is then obvious:

\begin{corollary}\label{coro:muisnotpointcontinuous}
The valuation $\mu$ on $\J$ is point-continuous but not minimal. 
\end{corollary}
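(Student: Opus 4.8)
The plan is to assemble the two halves of Corollary~\ref{coro:muisnotpointcontinuous} from results already established in this section. The statement asserts that $\mu$ on $\J$ is point-continuous but not minimal, so I would address each property in turn.

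First I would establish point-continuity of $\mu$. This is immediate from the corollary preceding this subsection, namely that every continuous valuation on $\J$ is point-continuous: since $\mu$ is by definition a continuous valuation on $\J$, it is in particular point-continuous. Alternatively, one can observe directly that $\mu = \theta + r\mu$ with $\theta = 0$ and $r = 1$, and invoke Theorem~\ref{main-theorem} together with the remark, made just before that theorem, that valuations of the form $\theta + r\mu$ are point-continuous by direct verification.

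Next I would handle non-minimality. The key observation is that $\mu$ is exactly the special case $\theta = 0$, $r = 1$ of the family treated in the immediately preceding Theorem (the one stating that $\theta + r\mu$ is not minimal whenever $\theta$ is discrete of total mass $1-r$ with $r>0$). Taking $\theta = 0$ (the zero valuation, which is vacuously discrete, with total mass $0 = 1-r$ for $r=1$) and $r=1$, that Theorem yields that $0 + 1\cdot\mu = \mu$ is not a minimal valuation. So non-minimality follows by a single specialization, with no new computation required.

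I do not anticipate a genuine obstacle here, since both halves are direct consequences of named results proved earlier; the only care needed is to confirm that $\mu$ legitimately fits the hypotheses of the non-minimality theorem, i.e.\ that the zero valuation counts as a discrete valuation so that the case $r=1$ is covered, and that $\mu$ has total mass $1$ as required by the normalization convention adopted at the start of the subsection (indeed $\mu(\J) = 1$ since $\J \neq \emptyset$). Combining the two parts gives the corollary.
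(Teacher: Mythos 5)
Your proposal is correct and matches the paper's intent exactly: the paper states the corollary is ``obvious'' and gives no separate argument, relying precisely on the specialization $\theta=0$, $r=1$ of the preceding non-minimality theorem together with the earlier point-continuity result. Your extra care in checking that the zero valuation qualifies as a discrete valuation of total mass $1-r$ is a sensible (and correct) verification of the only hypothesis that needs checking.
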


\section{Continuous valuations need not be point-continuous}

In the last section we have seen that \emph{on dcpo's} not all point-continuous valuations 
are minimal. In this section, we present another separation result: on dcpo's, not all continuous
valuations are point-continuous,  a result which is known on general topological spaces, but unknown
on dcpo's with the Scott topology. 

\subsection{What are non-point-continuous valuations like?}
\label{sec:any-non-point}

Tix showed that, on a sober space $X$, the bounded continuous
valuations that take only finitely many values are the simple
valuations, namely the finite linear combinations
$\sum_{i=1}^n a_i \delta_{x_i}$, where each coefficient $a_i$ is in
$\Rp$ and $x_i \in X$ \cite[Satz~2.2]{tix95}.  Hence they are all
point-continuous.

We will extend that result slightly in this subsection. We first recall that
a nonempty subset $A$ of a space $X$ is \emph{irreducible} if $A \subseteq B\cup C$,
for closed subsets $B$ and $C$ of $X$, implies that $A\subseteq B$ or $A\subseteq C$. $X$ 
is \emph{sober} if every irreducible closed subset $C$ of $X$ is the closure of some 
unique singleton subset of $X$.

\begin{lemma}
  \label{lemma:egame}
  Let $X$ be a topological space.  For every irreducible closed subset
  $C$ of $X$, let $\egame C \colon \O X \to \creal$ map every open
  subset $U$ of $X$ to $1$ if $U$ intersects $C$, to $0$ otherwise.
  Then $\egame C$ is a point-continuous valuation.  If $C = \dc x$ for
  some point $x$, then $\egame C = \delta_x$.
\end{lemma}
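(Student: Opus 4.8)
The plan is to verify, in order, the valuation axioms, Scott-continuity, point-continuity, and then the identification $\egame C = \delta_x$. The one tool I would isolate at the outset is the open-set reformulation of irreducibility: a nonempty closed set $C$ is irreducible if and only if, whenever two opens $U$ and $V$ each meet $C$, the intersection $U \cap V$ also meets $C$. Indeed, if $U \cap V \cap C = \emptyset$ then $C \subseteq (X \diff U) \cup (X \diff V)$, and irreducibility forces $C \subseteq X \diff U$ or $C \subseteq X \diff V$, that is, $C$ misses $U$ or misses $V$.

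First I would check that $\egame C$ is a valuation. Strictness holds because $\emptyset \cap C = \emptyset$, and monotonicity because $U \subseteq V$ together with $U \cap C \neq \emptyset$ gives $V \cap C \neq \emptyset$. For modularity I would argue by cases on the pair of values $\egame C(U), \egame C(V) \in \{0,1\}$; the only case needing an argument is $\egame C(U) = \egame C(V) = 1$, where the left-hand side equals $2$, $\egame C(U \cup V) = 1$ is immediate, and $\egame C(U \cap V) = 1$ follows precisely from the open-set form of irreducibility, so the right-hand side is also $2$. This is the single place where irreducibility is essential.

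Next, for Scott-continuity (which in any case follows from point-continuity, but is immediate here) I would take a directed family $(U_i)_{i \in I}$ of opens with union $U$: if $U \cap C \neq \emptyset$, any point of $U \cap C$ already lies in some $U_i$, so $\egame C(U_i) = 1 = \egame C(U)$, and otherwise all values are $0$. Point-continuity is then easy, since $\egame C$ takes only the values $0$ and $1$: the only nonvacuous instance is an open $U$ with $\egame C(U) = 1$ and $0 \leq r < 1$, and there I would choose any $x \in U \cap C$ and take the finite set $A = \{x\}$, observing that every open neighborhood $V$ of $x$ contains the point $x \in C$ and hence satisfies $\egame C(V) = 1 > r$.

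Finally, for the special case $C = \dc x$, I would prove that $U \cap \dc x \neq \emptyset$ is equivalent to $x \in U$. The reverse implication is immediate since $x \in U \cap \dc x$. For the forward one, any $y \in U \cap \dc x$ lies in the closure $\dc x = \overline{\{x\}}$, so the open neighborhood $U$ of $y$ must contain $x$. Hence $\egame C(U) = \delta_x(U)$ for every open $U$, giving $\egame C = \delta_x$. I expect no real obstacle in any of these steps: the argument is elementary throughout, and the only conceptually load-bearing point is the derivation of modularity from irreducibility.
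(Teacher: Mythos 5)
Your proposal is correct and follows essentially the same route as the paper: the load-bearing step in both is the open-set reformulation of irreducibility ($C$ meets $U$ and $V$ iff it meets $U\cap V$), which yields modularity (the paper phrases this via $\min+\max=a+b$ rather than your case split, a cosmetic difference), and the point-continuity and $\egame{C}=\delta_x$ arguments are identical. Your explicit check of Scott-continuity is a harmless extra; the paper omits it, implicitly relying on the cited fact that point-continuous valuations are continuous.
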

\begin{proof}
  It is clear that $\egame C$ is strict.  In order to show
  point-continuity, we assume $0 \leq r < \egame C (U)$.  We note that
  $C$ must intersect $U$, say at $x$, and that $0 \leq r < 1$.  Let
  $A \eqdef \{x\}$.  For every open neighborhood $V$ of $A$, $V$
  intersects $C$ (at $x$), so $\egame C (V)=1 > r$.

  In order to show modularity, we observe that, since $C$ is
  irreducible, for all open subsets $U$ and $V$ of $X$, $C$ intersects
  both $U$ and $V$ if and only if $C$ intersects $U \cap V$.  Hence
  $\egame C (U \cap V) = \min (\egame C (U), \egame C (V))$.  Since
  $C$ intersects $U$ or $V$ if and only if $C$ intersects $U \cup V$,
  we have $\egame C (U \cup V) = \max (\egame C (U), \egame C (V))$.
  Now for all real numbers $a$ and $b$, $\max (a,b)+\min (a,b) = a+b$,
  so
  $\egame C (U \cap V) + \egame C (U \cup V) = \egame C (U) + \egame C
  (V)$.

  When $C = \dc x$, for every $U \in \O X$ we have $\egame C (U)=1$ if
  and only if $C$ intersects $U$, if and only if $x \in U$, if and
  only if $\delta_x (U)=1$.
\end{proof}

For every continuous valuation $\nu$ on a space $X$, let $\Val (\nu)$
denote the set of non-trivial values $\{\nu (U) \mid U \in \O X, \nu
(U) \neq 0, +\infty\}$ taken by $\nu$.
\begin{lemma}
  \label{lemma:nu:min}
  Let $\nu$ be a continuous valuation on a space $X$, with the
  property that $\Val (\nu)$ has a least element $r$.  Then there is an
  irreducible closed subset $C$ of $X$ such that:
  \begin{enumerate}
  \item[(1)] $\nu' \eqdef \nu - r \egame C$ is a continuous valuation;
  \item[(2)] $\Val (\nu') \subseteq \{v-r \mid v \in \Val (\nu), v \neq r\}$.
  \end{enumerate}
\end{lemma}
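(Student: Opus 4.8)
The plan is to peel a single ``grain'' of mass $r$ off $\nu$ by producing an irreducible closed set $C$ that, seen through $\egame C$, detects exactly the opens carrying that grain. First I would fix an open $U_*$ with $\nu(U_*)=r$, which exists since $r\in\Val(\nu)$. The decisive consequence of $r$ being the \emph{least} element of $\Val(\nu)$ is that $U_*$ is \emph{indivisible}: for every open $W$ we have $0\le\nu(W\cap U_*)\le\nu(U_*)=r$, and since no nontrivial value lies strictly between $0$ and $r$, necessarily $\nu(W\cap U_*)\in\{0,r\}$. I then set $N_*\eqdef\bigcup\{W\in\O X\mid\nu(W\cap U_*)=0\}$ and $C\eqdef X\setminus N_*$, and aim to show that $C$ is irreducible closed with $\egame C(O)=1\iff\nu(O\cap U_*)=r$, from which (1) and (2) follow.

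Second, I would establish the basic properties of $C$. The family defining $N_*$ is directed, because $\nu(W_1\cap U_*)=\nu(W_2\cap U_*)=0$ forces $\nu((W_1\cup W_2)\cap U_*)\le\nu(W_1\cap U_*)+\nu(W_2\cap U_*)=0$, so Scott-continuity gives $\nu(N_*\cap U_*)=0$; thus $N_*$ is the largest open killing the grain. Consequently, for every open $O$, $C\cap O\neq\emptyset\iff O\not\subseteq N_*\iff\nu(O\cap U_*)=r$, i.e. $\egame C(O)=1\iff\nu(O\cap U_*)=r$; in particular $C\neq\emptyset$ since $\nu(U_*\cap U_*)=r$. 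For irreducibility, if $C$ meets opens $O_1$ and $O_2$ then $\nu(O_1\cap U_*)=\nu(O_2\cap U_*)=r$, and since $O_1\cap U_*$ and $O_2\cap U_*$ both sit inside $U_*$ their union has value $r$, so modularity forces $\nu(O_1\cap O_2\cap U_*)=r$, meaning $C$ meets $O_1\cap O_2$. This is exactly the condition that the closed set $C$ be irreducible.

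Third, I would prove (1). By Lemma~\ref{lemma:egame} $\egame C$ is a point-continuous valuation, and since $r<+\infty$ the valuation $r\egame C$ is bounded. I would then apply Lemma~\ref{lemma:mu+nu} with its $\mu$ taken to be $r\egame C$ and its map taken to be $\nu'\eqdef\nu-r\egame C$: because $r\egame C+\nu'=\nu$ is a continuous valuation, it suffices to check that $\nu'$ is monotonic and $\nu'(\emptyset)=0$. Monotonicity reduces to the single nontrivial case $U\subseteq V$ with $\egame C(U)=0$ and $\egame C(V)=1$; there $\nu(U\cap U_*)=0$ and $\nu(V\cap U_*)=r$, and modularity gives $\nu(V)\ge\nu(U\cup(V\cap U_*))=\nu(U)+r$, so $\nu'(V)-\nu'(U)=\nu(V)-\nu(U)-r\ge 0$.

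Finally I would prove (2) by evaluating $\nu'$ on an arbitrary open $U$ with $\nu'(U)$ nontrivial. If $\egame C(U)=1$, then $\nu'(U)=\nu(U)-r$ with $v\eqdef\nu(U)\ge r$, so $v\in\Val(\nu)$ and $v\neq r$. If $\egame C(U)=0$, then $\nu'(U)=\nu(U)$ and $\nu(U\cap U_*)=0$, so putting $v\eqdef\nu(U\cup U_*)$ modularity yields $v=\nu(U)+r$, whence $\nu'(U)=v-r$ with $v\in\Val(\nu)$ and $v\neq r$. The main obstacle is concentrated in the two modularity computations, namely the binary-intersection identity underlying irreducibility and the $\egame C(U)=0$ case of (2), together with the repeated use of the leastness of $r$ to confine every value $\nu(W\cap U_*)$ to $\{0,r\}$; once these are in hand, (1) follows formally from Lemma~\ref{lemma:mu+nu}.
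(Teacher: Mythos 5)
Your proposal is correct and follows essentially the same route as the paper: you construct $C$ as the complement of the largest open set killing the mass of $\nu$ on an open set of value $r$, prove irreducibility by the same modularity computation, obtain (1) by checking monotonicity and invoking Lemma~\ref{lemma:mu+nu}, and prove (2) by the same two-case analysis on whether $U$ meets $C$. The only cosmetic differences are your explicit $\{0,r\}$ dichotomy (which the paper states as ``$\nu(U\cap U_r)=0$ or equivalently $\nu(U\cap U_r)<r$'') and a slightly different but equivalent modularity manipulation in the monotonicity step.
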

\begin{proof}
  Let $U_r$ be any open subset of $X$ such that
  $\nu (U_r) = r \neq 0$.  We consider the family $\mathcal U$ of open
  subsets $U$ of $X$ such that $\nu (U \cap U_r)=0$, or equivalently
  $\nu (U \cap U_r) < r$.  That family contains the empty set, and for
  any two elements $U$, $V$ of $\mathcal U$, we have
  $\nu ((U \cup V) \cap U_r) \leq \nu (U \cap U_r) + \nu (V \cap U_r)
  = 0$, so $U \cup V$ is in $\mathcal U$.  It follows that
  $\mathcal U$ is directed.  Let $U_*$ be the union of all the
  elements of $\mathcal U$.  Then $\nu (U_* \cap U_r) = \dsup_{U
    \in \mathcal U} \nu (U \cap U_r) = 0$, so $U_*$ is in
  $\mathcal U$.  It follows that $U_*$ is the largest element of
  $\mathcal U$.  In particular, for every $U \in \O X$, $\nu (U \cap
  U_r) = 0$ if and only if $U \subseteq U_*$.

  We define $C$ as the complement of $U_*$.  Since
  $\nu (U_r) \neq 0$, $U_r$ is not included in $U_*$, so $C$
  intersects $U_r$.  In particular, $C$ is non-empty.  For any two
  open subsets $U$ and $V$ of $X$ that intersect $C$, we claim
  that $U \cap V$ also intersects $C$.  This will show that $C$ is
  irreducible.  By assumption, neither $U$ nor $V$ is included in
  $U_*$, so $\nu (U \cap U_r) \geq r$ and
  $\nu (V \cap U_r) \geq r$.  It follows that
  $\nu ((U \cap V) \cap U_r) = \nu (U \cap U_r) + \nu (V \cap U_r) -
  \nu ((U \cup V) \cap U_r) \geq 2r-r = r$, using modularity and the
  inequality $\nu ((U \cup V) \cap U_r) \leq \nu (U_r) = r$.  We
  conclude that $U \cap V$ is not included in $U_*$, hence
  intersects $C$.

  $(1)$ Let $\nu' \eqdef \nu - r \egame C$.  We first verify that $\nu'$
  is monotonic.  Let $U \subseteq V$ be two open subsets of $X$.  If
  $\egame C (U) = \egame C (V)$, then $\nu' (U) \leq \nu' (V)$.
  Therefore let us assume that $\egame C (U) \neq \egame C (V)$, hence
  necessarily $\egame C (U)=0$, $\egame C (V)=1$.  As a consequence,
  $\nu (U \cap U_r)=0$ and $\nu (V \cap U_r) \geq r$.  Then:
  \begin{align*}
    \nu' (U) = \nu (U) & = \nu (U \cup U_r) + \nu (U \cap U_r) - \nu
                         (U_r) \\
                       & \text{by modularity, and since }\nu (U_r) < +\infty \\
                       & = \nu (U \cup U_r) - r \\
                       & \leq \nu (V \cup U_r) - r \\
                       & = \nu (V) + \nu (U_r) - \nu (V \cap U_r) - r \\
                       & \text{by modularity, and since }\nu (V \cap U_r) \leq \nu (U_r)
                         < +\infty \\
                       & \leq \nu (V) + \nu (U_r) - r - r = \nu' (V).
  \end{align*}
  In particular, since $\nu' (\emptyset) = 0$, $\nu'$ takes its values
  in $\creal$. It then follows from Lemma~\ref{lemma:mu+nu} that
  $\nu'$ is a continuous valuation.

  $(2)$  Let $V' \eqdef \{v-r \mid v \in \Val (\nu), v\neq r\}$.  For
  every $U \in \O X$, if $U$ intersects $C$ then
  $\nu' (U) = \nu (U) - r$; hence if $\nu' (U) \neq 0, +\infty$, then
  $\nu' (U) \in V'$.  Otherwise, $U \subseteq U_*$, so
  $\nu (U \cap U_r) = 0$, and therefore
  $\nu' (U) = \nu (U) = \nu (U \cup U_r) + \nu (U \cap U_r) - \nu
  (U_r) = \nu (U \cup U_r) - r$, using modularity and the fact that
  $\nu (U_r) = r < +\infty$.  Hence if $\nu' (U) \neq 0, +\infty$,
  then $\nu' (U)$ is in $V'$.
\end{proof}

\begin{proposition}
  \label{prop:tix}
  Let $X$ be a topological space.  The bounded continuous valuations
  $\nu$ on $X$ that take only finitely many values are exactly the
  finite linear combinations $\sum_{i=1}^n a_i \egame {C_i}$, where
  each $C_i$ is irreducible closed and $a_i \in \Rp \diff \{0\}$.
\end{proposition}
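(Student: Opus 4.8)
The plan is to prove the two inclusions separately. For the direction that every $\sum_{i=1}^n a_i \egame{C_i}$ with each $C_i$ irreducible closed and $a_i \in \Rp \diff \{0\}$ is a bounded continuous valuation taking finitely many values, I would invoke Lemma~\ref{lemma:egame} to get that each $\egame{C_i}$ is a continuous valuation, and then use that $\VV X$ is closed under finite sums and nonnegative scalar multiples. Boundedness is immediate: since each $C_i$ is nonempty it intersects $X$, so $\egame{C_i}(X)=1$ and the total mass is $\sum_{i=1}^n a_i < \infty$. Finiteness of the value set is equally direct, since for any open $U$ the value $(\sum_i a_i \egame{C_i})(U) = \sum_{i\colon U \cap C_i \neq \emptyset} a_i$ ranges over the subsums indexed by subsets of $\{1, \dots, n\}$, hence takes at most $2^n$ distinct values.

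The substantive direction is the converse, which I would establish by induction on the cardinality of $\Val(\nu)$. First note that for bounded $\nu$ we have $\nu(U) \leq \nu(X) < \infty$, so no value equals $+\infty$; thus ``$\nu$ takes finitely many values'' is exactly the statement that $\Val(\nu)$ is finite (the value $0$ is always present and will contribute the empty subsum). In the base case $\Val(\nu) = \emptyset$, every open has value $0$, so $\nu$ is the empty linear combination ($n = 0$) and there is nothing more to prove.

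For the inductive step, suppose $\Val(\nu)$ is a nonempty finite set. Being a finite set of positive reals it has a least element $r > 0$, so Lemma~\ref{lemma:nu:min} applies and furnishes an irreducible closed set $C$ for which $\nu' \eqdef \nu - r\egame C$ is a continuous valuation with $\Val(\nu') \subseteq \{v - r \mid v \in \Val(\nu),\ v \neq r\}$. The right-hand side is the image of $\Val(\nu) \diff \{r\}$ under the injection $v \mapsto v - r$, hence has exactly $|\Val(\nu)| - 1$ elements, whence $|\Val(\nu')| \leq |\Val(\nu)| - 1$. Moreover $\nu' \leq \nu$ since $r\egame C \geq 0$, so $\nu'$ is again bounded and the induction hypothesis applies, writing $\nu' = \sum_{i=1}^m a_i \egame{C_i}$ in the required form. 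Adding back $r\egame C$ then gives $\nu = r\egame C + \sum_{i=1}^m a_i \egame{C_i}$, a linear combination of the desired shape with all coefficients positive.

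The genuine work has already been discharged in Lemma~\ref{lemma:nu:min}; the only points requiring care here are book-keeping ones, namely confirming that the inductive measure $|\Val(\nu)|$ strictly decreases at each peeling step (which rests on the injectivity of the shift $v \mapsto v - r$ together with the removal of $r$ itself) and checking that $\nu'$ stays bounded so the induction hypothesis is applicable. I expect no obstacle beyond maintaining these invariants.
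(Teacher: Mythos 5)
Your proof is correct and follows essentially the same route as the paper: the easy direction via Lemma~\ref{lemma:egame}, and the converse by induction on the number of non-zero values, peeling off $r\egame C$ with $r = \min \Val(\nu)$ via Lemma~\ref{lemma:nu:min}. Your extra book-keeping (boundedness of $\nu'$, strict decrease of $|\Val(\nu)|$) is sound and slightly more explicit than the paper's, but not a different argument.
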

\begin{proof}
  That $\sum_{i=1}^n a_i \egame {C_i}$ only takes finitely many values
  is obvious.  We prove the converse implication by induction on the
  number $n$ of non-zero values taken by $\nu$.  If $n=0$, then $\nu$
  is the zero valuation.  Otherwise, let $r$ be the least non-zero
  value taken by $\nu$.  We find $C$ and $\nu'$ as in
  Lemma~\ref{lemma:nu:min}.  Since $\Val (\nu')$ has one less element
  than $\Val (\nu)$, we can apply the induction hypothesis, allowing us
  to conclude.
\end{proof}

\begin{proposition}
  \label{prop:fin:pcont}
  Every continuous valuation $\nu$ on a topological space $X$ that
  takes only finitely many distinct values is point-continuous.
\end{proposition}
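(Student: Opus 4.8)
The plan is to reduce the general case to the bounded case handled by Proposition~\ref{prop:tix}, and then to invoke Proposition~\ref{prop:tix} together with Lemma~\ref{lemma:egame}. The key observation is that point-continuity is a property that behaves well under directed suprema, and that a valuation taking only finitely many values is automatically built from the pieces $\egame{C}$ identified in the previous results. So the first move is to distinguish whether $\nu$ is bounded or not.

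If $\nu$ is bounded, then it takes only finitely many values and is itself a finite linear combination $\sum_{i=1}^n a_i \egame{C_i}$ with each $C_i$ irreducible closed and each $a_i \in \Rp \diff \{0\}$, by Proposition~\ref{prop:tix}. Each $\egame{C_i}$ is point-continuous by Lemma~\ref{lemma:egame}, and I would then argue that any finite sum of point-continuous valuations is again point-continuous. This last closure fact is the small lemma I expect to need: given $0 \le r < (\sum_i a_i \egame{C_i})(U)$, I would choose, for each $i$ with $a_i \egame{C_i}(U) \neq 0$, a finite subset $A_i$ of $U$ witnessing point-continuity of $a_i\egame{C_i}$ at a value slightly below $a_i\egame{C_i}(U)$, so that the sum of these slightly-smaller values still exceeds $r$; then $A \eqdef \bigcup_i A_i$ is a finite subset of $U$ such that every open neighborhood $V$ of $A$ satisfies $(\sum_i a_i\egame{C_i})(V) > r$, because $V$ is simultaneously a neighborhood of each $A_i$.

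If $\nu$ is unbounded, I would handle it by truncation or by exhaustion, using the fact (cited in the proof of the Corollary, \cite[Theorem~4.2]{heckmann96}) that every continuous valuation on a $T_0$ space is a directed supremum of bounded continuous valuations. However, to stay within the values-based framework, a cleaner route is to note that since $\nu$ takes only finitely many distinct values, the family of opens on which $\nu$ is finite, namely $\{U \in \O X \mid \nu(U) < +\infty\}$, is a directed family whose union is some largest such open $U_\infty$; restricting $\nu$ appropriately gives bounded pieces. I would then verify that $\nu$ is the directed supremum of bounded continuous valuations each taking finitely many values, apply the bounded case to each, and conclude by the closure of point-continuous valuations under directed suprema~\cite[Section 3.2, Item~(5)]{heckmann96}.

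The main obstacle I anticipate is the unbounded case: extracting a directed family of bounded finitely-valued continuous valuations whose supremum is $\nu$ requires some care, since naive truncation of a valuation need not yield a valuation. The most robust argument is probably to avoid reconstructing $\nu$ from scratch and instead observe directly from point-continuity's definition that only the value $\nu(U)$ at the open $U$ in question matters, so if $\nu(U) < +\infty$ one reduces to a bounded restriction, and if $\nu(U) = +\infty$ one must produce finite subsets $A$ of $U$ with arbitrarily large valuation on all neighborhoods, which again follows from analysing the $\egame{C_i}$-decomposition on the finite part. I would therefore expect the cleanest proof to lean on Proposition~\ref{prop:tix} for the finite part and on the additivity argument above, deferring the unbounded bookkeeping to the directed-supremum machinery already cited in the Corollary.
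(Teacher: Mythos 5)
Your bounded case is exactly the paper's: invoke Proposition~\ref{prop:tix} to write $\nu$ as $\sum_{i=1}^n a_i\egame{C_i}$ and use closure of point-continuity under finite linear combinations (the paper simply cites Heckmann for that closure fact; your direct verification of it is fine). The gap is in the unbounded case. You offer three routes and commit to none, and the one you ultimately lean on is the one that cannot work: you claim that when $\nu(U)=+\infty$, finite subsets $A\subseteq U$ with arbitrarily large valuation on all neighborhoods can be produced ``by analysing the $\egame{C_i}$-decomposition on the finite part.'' But the finite part is, by construction, bounded by the greatest finite value $s$ that $\nu$ takes; its decomposition into $\egame{C_i}$'s can never certify that some neighborhood has valuation exceeding $s$, let alone exceeding an arbitrary $r$. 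The witness has to come from \emph{outside} the largest finite-measure open, and that step is missing.

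The repair is short, and it is what the paper does. Let $s$ be the greatest finite value of $\nu$, let $U_s$ be the union of the (directed, by modularity) family of opens with $\nu$-value at most $s$, so that $\nu(U_s)=s$ by Scott-continuity, and let $C_\infty=X\diff U_s$. If $\nu(U)=+\infty$ then $U\not\subseteq U_s$, so $U$ meets $C_\infty$ at some point $x$; every open $V\ni x$ satisfies $V\not\subseteq U_s$, hence $\nu(V)>s$, hence $\nu(V)=+\infty$ since $s$ is the largest finite value. Thus $A=\{x\}$ is the required witness for every $r$. (The paper packages this as $\nu=\nu_{|U_s}+\mu$, where $\mu$ is the directed supremum of all simple valuations supported on $C_\infty$ --- which, incidentally, is the concrete directed family of bounded, finitely-valued approximants whose existence you said ``requires some care''; adding $\nu_{|U_s}$ to each member realizes your second suggested route.) Your reduction for opens with $\nu(U)<+\infty$ via the bounded restriction $\nu_{|U}$ is correct as stated. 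Also note that your first suggested route, via \cite[Theorem 4.2]{heckmann96}, both assumes $T_0$ (not a hypothesis here) and produces bounded approximants that need not take finitely many values, so Proposition~\ref{prop:tix} would not apply to them.
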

\begin{proof}
  If $\nu$ is bounded, then by Proposition~\ref{prop:tix}, $\nu$ is of
  the form $\sum_{i=1} a_i \egame {C_i}$, where each $C_i$ is
  irreducible closed and $a_i \in \Rp$, and $\egame {C_i}$ is
  point-continuous.  Any linear combination of point-continuous
  valuations is point-continuous \cite[Section~3.2]{heckmann95}, so
  $\nu$ is point-continuous.

  Hence we concentrate on the case where $\nu (X)$ is equal
  to $+\infty$.  Let $s$ be the greatest finite value that $\nu$ takes.
  Consider the family $\mathcal S$ of all opens $U$ such that
  $\nu(U) \leq s$. For any $U, V\in \mathcal S$, $\nu(U\cup V) =
  \nu(U) + \nu(V) - \nu(U\cap V) $, which is obviously a 
  finite value. Hence $U\cup V$ is in $S$. This implies that the 
  family $S$ is directed. Let $U_{s}$ be the directed union of $\mathcal S$,
  and $C_{\infty}$ be the complement of $U_{s}$.  We notice 
  that $\nu(U_{s}) = \sup_{U\in \mathcal S}\nu(U) = s$, and hence
  that $U_{s}$ is a proper subset of $X$.  
  
  The collection $\mathcal F$ of simple
  valuations $\sum_{i=1}^n a_i \delta_{x_i}$ such that every $x_i$ is
  in $C_\infty$ is directed: it is non-empty because $C_\infty$ is
  non-empty, and any two elements $\sum_{i=1}^n a_i \delta_{x_i}$ and
  $\sum_{i=1}^n b_i \delta_{x_i}$ (which we take over the same set of
  points $x_i$, without loss of generality), have an upper bound, such
  as $\sum_{i=1}^n \max (a_i,b_i) \delta_{x_i}$.  Let $\mu$ be the
  supremum of $\mathcal F$.  Since the family of point-continuous
  valuations is closed under directed suprema
  \cite[Section~3.2~(5)]{heckmann95}, $\mu$ is point-continuous.

  We compute $\mu$ explicitly.  For every open subset $U$ of $X$, if
  $U \subseteq U_s$, namely if $U$ and $C_\infty$ are disjoint,
  then the value of any element $\sum_{i=1}^n a_i \delta_{x_i}$ of
  $\mathcal F$ on $U$ is zero, so $\mu (U)=0$.  Otherwise, let us pick
  an element $x$ from $U \cap C_\infty$.  Then $a \delta_x$ is in
  $\mathcal F$ for every $a \in \Rp$, so $\mu (U) \geq a \delta_x (U)
  = a$ for every $a \in \Rp$, from which it follows that $\mu (U) = +\infty$.

  Let $\nu_{|U_s}$ be the restriction of $\nu$ to $U_s$,
  namely the map $V \mapsto \nu (U_s \cap V)$.  This is a
  continuous valuation \cite[Section~3.3]{heckmann95}.  It is bounded
  by construction of $U_s$, and takes only finitely many values.
  Therefore, as we have already seen, Proposition~\ref{prop:tix}
  entails that $\nu_{|U_s}$ is point-continuous.

  We now observe that $\nu = \nu_{|U_s} + \mu$.  For every
  $U \in \O X$, if $U \subseteq U_s$, then
  $\nu_{|U_s} (U) + \mu (U) =  \nu (U \cap U_s) + 0 = \nu (U)$.
  Otherwise
  $\nu_{|U_s} (U) + \mu (U) = \nu_{|U_s} (U) + (+\infty) =
  +\infty = \nu (U)$.  Being a sum of two point-continuous valuations,
  $\nu$ is point-continuous.
\end{proof}

\subsection{The Sorgenfrey line}
\label{sec:sorgenfrey}

Let $\real$ be the set of real numbers, with its usual metric
topology.

The \emph{Sorgenfrey line} $\Rl$ has the same set of points as
$\real$, but its topology is generated by the half-open intervals
$[a, b[$, $a < b$ \cite{Sorgenfrey}.  The topology of $\Rl$ is finer
than that of $\real$.  $\Rl$ is a zero-dimensional, first-countable
space \cite[Exercices~4.1.34, 4.7.17]{goubault13a}.  It is paracompact
hence $T_4$, and Choquet-complete hence a Baire space
\cite[Exercises~6.3.32, 7.6.11]{goubault13a}.  $\Rl$ is not locally
compact, as every compact subset of $\Rl$ has empty interior
\cite[Exercise~4.8.5]{goubault13a}.  In fact, $\Rl$ is not even
consonant \cite{Bouziad:Borel,CW:dissonant}.  Although it is
first-countable, $\Rl$ is not second-countable
\cite[Exercise~6.3.10]{goubault13a}.

A \emph{hereditarily Lindel\"of space} is a space in which every
family ${(U_i)}_{i \in I}$ of open subsets has a countable subfamily
with the same union, or equivalently a space whose subspaces are all
Lindel\"of.  Every second-countable space is hereditarily Lindel\"of,
but $\Rl$ is a counterexample to the reverse implication, as the
following folklore result demonstrates.

\begin{proposition}
  \label{prop:Rl:hL}
  $\Rl$ is hereditarily Lindel\"of.
\end{proposition}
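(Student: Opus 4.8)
The plan is to reduce an arbitrary family of Sorgenfrey-open sets to a family of basic half-open intervals $[a,b[$, and then to exploit the fact that the usual metric topology on $\real$ is second-countable, hence hereditarily Lindel\"of. Let ${(U_i)}_{i \in I}$ be any family of Sorgenfrey-open subsets and let $U = \bigcup_{i \in I} U_i$. Since each $U_i$ is a union of basic intervals, I collect all basic intervals $[a,b[$ (with $a<b$) that are contained in some $U_i$ into a family $\mathcal B$, so that $U = \bigcup \mathcal B$. It then suffices to extract a countable subfamily of $\mathcal B$ with union $U$: for each chosen interval I pick one index $i$ with $[a,b[ \subseteq U_i$, and the corresponding countably many $U_i$ have union exactly $U$, since each is contained in $U$ while together they cover all the chosen intervals.

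Next I would split $U$ according to the usual topology. Let $W = \bigcup\{\,]a,b[\;:\;[a,b[\in\mathcal B\,\}$, the union of the metric-open interiors of the intervals in $\mathcal B$; this $W$ is open in the usual topology of $\real$ and $W \subseteq U$. The key step is to show that $U \setminus W$ is \emph{countable}. If $x \in U \setminus W$, then $x$ lies in some $[a,b[\in\mathcal B$, and since $x \notin W$ we cannot have $a < x$ (otherwise $x \in\,]a,b[\,\subseteq W$), so $x = a$; thus there is $b_x > x$ with $]x,b_x[\,\subseteq W$ and $[x,b_x[\in\mathcal B$. For $x < y$ in $U\setminus W$ we have $y \notin W$, hence $y \notin\,]x,b_x[$, forcing $b_x \leq y$; therefore the nonempty open intervals $]x,b_x[$, as $x$ ranges over $U\setminus W$, are pairwise disjoint. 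Choosing a rational in each shows that $U\setminus W$ is countable.

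To finish, I would use that the usual topology on $\real$ is second-countable and therefore hereditarily Lindel\"of, so the family $\{\,]a,b[\;:\;[a,b[\in\mathcal B\,\}$ of metric-open sets has a countable subfamily with the same union $W$; replacing each selected $]a,b[$ by the corresponding $[a,b[\in\mathcal B$ yields countably many members of $\mathcal B$ whose union contains $W$. Adjoining, for each of the countably many points $x\in U\setminus W$, one interval $[x,b_x[\in\mathcal B$, I obtain a countable subfamily of $\mathcal B$ whose union is all of $U$. Pulling back to the $U_i$ as described then gives the desired countable subfamily with union $U$. The only genuinely delicate point is the countability of $U\setminus W$ via the pairwise-disjoint-intervals argument; everything else is routine bookkeeping transferred from the well-known second-countability of the metric line.
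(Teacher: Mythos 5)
Your proof is correct and follows essentially the same route as the paper's: both compare the Sorgenfrey union with the union of the metric-open interiors (your $W$ equals the paper's $U'$), reduce to the hereditary Lindel\"of property of $\real$ for that part, and show the residual set is countable by producing pairwise disjoint nonempty intervals, one per residual point, each containing a rational. The only cosmetic difference is that you work with basic intervals $[a,b[$ and the open intervals $]x,b_x[$, while the paper works directly with the interiors $\mathring U_i$ and the half-open intervals $[x,x+\delta_x[$.
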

\begin{proof}
  Let us denote by $\mathring U$ the interior of any set $U$ in the
  topology of $\real$ (not $\Rl$).

  Let ${(U_i)}_{i \in I}$ be any family of open subsets of $\Rl$,
  $U \eqdef \bigcup_{i \in I} U_i$ and
  $U' \eqdef \bigcup_{i \in I} \mathring U_i$.  Since $\real$ is
  second-countable hence hereditarily Lindel\"of, there is a countable
  subset $J$ of $I$ such that $U' = \bigcup_{i \in J} \mathring U_j$.

  We claim that $U \diff U'$ is countable.  For each point
  $x \in U \diff U'$, $x$ is in some $U_i$ hence in some basic open
  set $[a, b[ \subseteq U_i$.  Then $x$ is also included in the
  smaller basic open set $[x, b[$.  Let us write $b$ as
  $x + \delta_x$, with $\delta_x > 0$.  We observe that
  $]x, x+\delta_x[$ is included in $\mathring U_i$, hence in $U'$.

  Let $x$ and $y$ be two points of $U \diff U'$, with $x < y$.  If
  $[x, x+\delta_x[$ and $[y, y+\delta_y[$ intersect, then $y$ is in
  $[x, x+\delta_x[$, and since $x \neq y$, it follows that $y$ is in
  $]x, x+\delta_x[$.  That is impossible, since $]x, x+\delta_x[$ is
  included in $U'$ and $y$ is not in $U'$.

  Therefore, for any two distinct points $x$ and $y$ of $U \diff U'$,
  $[x, x+\delta_x[$ and $[y, y+\delta_y[$ are disjoint.  We pick one
  rational number $q_x$ in each set $[x, x+\delta_x[$ with $x \in U
  \diff U'$: then $x \neq y$ implies $q_x \neq q_y$, and therefore $U
  \diff U'$ is countable.

  Let us now pick an index $i_x \in I$ such that $x \in U_{i_x}$, one
  for each $x \in U \diff U'$.  Then
  $U = U' \cup (U \diff U') \subseteq U' \cup \bigcup_{x \in U \diff
    U'} U_{i_x} \subseteq U$, so
  ${(U_i)}_{i \in J \cup \{i_x \mid x \in U \diff U'\}}$ is a
  countable subfamily with the same union as our original family
  ${(U_i)}_{i \in I}$.
\end{proof}

\begin{lemma}
  \label{lemma:Rl:Borel}
  $\real$ and $\Rl$ have the same Borel $\sigma$-algebra.
\end{lemma}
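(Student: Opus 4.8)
The plan is to prove the two inclusions $\mathcal B(\real) \subseteq \mathcal B(\Rl)$ and $\mathcal B(\Rl) \subseteq \mathcal B(\real)$ between the Borel $\sigma$-algebras separately, where by definition each is the smallest $\sigma$-algebra containing the open sets of the corresponding topology.

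The inclusion $\mathcal B(\real) \subseteq \mathcal B(\Rl)$ is the easy direction: since the Sorgenfrey topology is finer than the metric topology, every open subset of $\real$ is already open in $\Rl$, hence belongs to $\mathcal B(\Rl)$. As $\mathcal B(\real)$ is the least $\sigma$-algebra containing the opens of $\real$, it is contained in $\mathcal B(\Rl)$.

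For the reverse inclusion I would first record that each basic half-open interval $[a, b[$ of $\Rl$ is Borel in $\real$: for instance $[a, b[ = [a, +\infty[ \,\cap\, ]-\infty, b[$ exhibits it as the intersection of a closed set and an open set of $\real$ (or one may write $[a, b[ = \bigcap_{n \geq 1} \,]a - 1/n, b[$ as a $G_\delta$). Consequently the $\sigma$-algebra generated by the half-open intervals is contained in $\mathcal B(\real)$.

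The crux, and the only place genuine work is needed, is to pass from the generating intervals to \emph{all} open subsets of $\Rl$. Since $\Rl$ is not second-countable, an arbitrary open subset of $\Rl$ is a priori only an \emph{uncountable} union of half-open intervals, and such a union need not be Borel in $\real$; this is exactly the obstacle. To overcome it I would invoke Proposition~\ref{prop:Rl:hL}: because $\Rl$ is hereditarily Lindel\"of, any open subset $U$ of $\Rl$ is covered by half-open intervals contained in it, and hence equals the union of a \emph{countable} subfamily of them. Thus $U$ is a countable union of sets that are Borel in $\real$, and is therefore itself Borel in $\real$. This shows $\O(\Rl) \subseteq \mathcal B(\real)$, and since $\mathcal B(\real)$ is a $\sigma$-algebra we conclude $\mathcal B(\Rl) \subseteq \mathcal B(\real)$. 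Combining the two inclusions yields the stated equality.
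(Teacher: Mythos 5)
Your proof is correct, and its key step differs from the paper's. The paper handles the nontrivial inclusion $\mathcal B(\Rl)\subseteq\mathcal B(\real)$ by a direct structural decomposition: for each $x$ in a Sorgenfrey-open set $U$ it forms the union $I_x$ of all intervals contained in $U$ and containing $x$, shows $I_x$ is a (maximal) interval of the form $[a_x,b_x[$ or $]a_x,b_x[$, observes that distinct such intervals are disjoint and hence that there are only countably many of them, and concludes that $U$ is a countable disjoint union of intervals, each Borel in $\real$. You instead reduce the countability issue to Proposition~\ref{prop:Rl:hL}: since $\Rl$ is hereditarily Lindel\"of, the cover of $U$ by the basic half-open intervals it contains admits a countable subfamily with the same union, and each $[a,b[$ is Borel in $\real$. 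Both arguments are sound, and there is no circularity in yours, since Proposition~\ref{prop:Rl:hL} is established before this lemma and does not depend on it. Your route is shorter and more modular, reusing a result the paper has already paid for; the paper's route is self-contained and yields the slightly stronger structural fact that every Sorgenfrey-open set is a \emph{countable disjoint} union of intervals, which is of independent interest. One small point worth making explicit in your version: the half-open intervals do form a basis of $\Rl$ (the intersection of two of them is again one of them or empty), so every open $U$ really is the union of the family of all basic intervals contained in it, which is what licenses the appeal to the hereditarily Lindel\"of property.
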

\begin{proof}
  Let $\Sigma$ be the Borel $\sigma$-algebra of $\real$, and
  $\Sigma_\ell$ be that of $\Rl$.  Clearly,
  $\Sigma \subseteq \Sigma_\ell$.

  In the converse direction, we first claim that every open subset $U$
  of $\Rl$ is in $\Sigma$.  For every $x \in U$, the union of all the
  intervals included in $U$ and containing $x$ is convex, hence is an
  interval $I_x$.  Let $a_x$ be its lower end, $b_x$ be its upper end.
  We note that $b_x$ cannot be in $I_x$, otherwise $b_x$ would be in
  some basic open subset $[b_x, b_x+\delta_x[$ included in $U$, which
  would allow us to form a strictly larger interval included in $U$
  and containing $x$.  Hence $I_x$ is equal to $[a_x, b_x[$ or to
  $]a_x, b_x[$.  In any case, $a_x < b_x$, so $I_x$ contains a
  rational number $q_x$.  Since the intervals $I_x$ are pairwise
  disjoint, there are no more such intervals than there are rational
  numbers.  Moreover, $\bigcup_{x \in U} I_x$ is equal to $U$.  It
  follows that $U$ is a countable union of pairwise disjoint
  intervals, and is therefore in $\Sigma$.

  Since $\Sigma_\ell$ is the smallest $\sigma$-algebra containing the
  open subsets of $\Rl$, we conclude that
  $\Sigma_\ell \subseteq \Sigma$.
\end{proof}

Given a topological space $X$, a Borel measure on $X$ is
\emph{$\tau$-smooth} if and only if its restriction to the lattice
$\O X$ of open subsets of $X$ is a continuous valuation.  Adamski
showed that a space $X$ is hereditarily Lindel\"of if and only if
every Borel measure on $X$ is $\tau$-smooth
\cite[Theorem~3.1]{Adamski:measures}.
\begin{proposition}
  \label{prop:hL:contval}
  For every Borel measure $\mu$ on $\real$, the restriction of $\mu$
  to the open subsets of $\Rl$ is a continuous valuation.
\end{proposition}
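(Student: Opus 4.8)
The plan is to combine two results already established in the excerpt. First I would observe that $\real$ is a second-countable space (a standard fact), and that every second-countable space is hereditarily Lindel\"of. By Lemma~\ref{lemma:Rl:Borel}, $\real$ and $\Rl$ share the same Borel $\sigma$-algebra, so a Borel measure $\mu$ on $\real$ is literally the same object as a Borel measure on $\Rl$; there is no need to transport or re-extend it. The goal then reduces to showing that this measure, viewed as a Borel measure on $\Rl$, is $\tau$-smooth, i.e.\ that its restriction to $\O\Rl$ is a continuous valuation.

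The key step is to invoke Adamski's characterization, quoted just before the proposition: a space $X$ is hereditarily Lindel\"of if and only if every Borel measure on $X$ is $\tau$-smooth. So I would apply this with $X = \Rl$, using Proposition~\ref{prop:Rl:hL}, which establishes precisely that $\Rl$ is hereditarily Lindel\"of. Since $\Rl$ is hereditarily Lindel\"of, every Borel measure on $\Rl$ is $\tau$-smooth, and in particular $\mu$ is; by the definition of $\tau$-smoothness, the restriction of $\mu$ to $\O\Rl$ is a continuous valuation, which is exactly the claim.

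I expect the only real subtlety to be bookkeeping about which $\sigma$-algebra $\mu$ lives on. A Borel measure on $\real$ is defined on $\Sigma$, the Borel $\sigma$-algebra of $\real$, whereas Adamski's theorem speaks of Borel measures on $\Rl$, defined on $\Sigma_\ell$. The whole point of Lemma~\ref{lemma:Rl:Borel} is that $\Sigma = \Sigma_\ell$, so this distinction collapses and $\mu$ is automatically a Borel measure on $\Rl$ without any modification. Once this identification is made explicit, the proof is essentially a two-line citation of Proposition~\ref{prop:Rl:hL} and Adamski's theorem, with no further computation required.
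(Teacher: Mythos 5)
Your proposal is correct and follows exactly the paper's own argument: Lemma~\ref{lemma:Rl:Borel} to identify the Borel $\sigma$-algebras of $\real$ and $\Rl$, then Adamski's theorem applied to $\Rl$ via Proposition~\ref{prop:Rl:hL}. The extra bookkeeping you spell out about which $\sigma$-algebra $\mu$ lives on is precisely the content of the paper's one-line citation of Lemma~\ref{lemma:Rl:Borel}.
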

\begin{proof}
  By Lemma~\ref{lemma:Rl:Borel}, $\mu$ is also a Borel measure on
  $\Rl$.  We then apply Adamski's theorem, thanks to
  Proposition~\ref{prop:Rl:hL}.
\end{proof}

\begin{corollary}
  \label{corl:lambda}
  The restriction $\lambda$ of the Lebesgue measure on $\real$ to
  $\O \Rl$ is a continuous valuation.  \qed
\end{corollary}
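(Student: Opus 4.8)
The plan is to obtain the corollary as an immediate instance of Proposition~\ref{prop:hL:contval}. First I would recall that the Lebesgue measure on $\real$, when regarded as a set function on the Borel $\sigma$-algebra of $\real$, is a genuine Borel measure: it is strict and countably additive on Borel sets. Since Proposition~\ref{prop:hL:contval} is stated for \emph{every} Borel measure on $\real$, with no finiteness hypothesis, it applies verbatim with $\mu$ taken to be Lebesgue measure.

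Instantiating the proposition then yields directly that the restriction $\lambda$ of Lebesgue measure to $\O\Rl$ is a continuous valuation, which is precisely the assertion. I do not expect any genuine obstacle here. The two points worth a moment's checking are that (i) Lebesgue measure is indeed defined on the Borel sets, so that the hypothesis of the proposition is met, and (ii) the definition of a continuous valuation permits the value $+\infty$, so that the equality $\lambda (\Rl) = +\infty$ causes no difficulty, the valuation taking its values in $\R$ by convention. Both are immediate, so the corollary reduces to a single citation of Proposition~\ref{prop:hL:contval}.

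Should one prefer a self-contained argument avoiding the appeal to Adamski's theorem, one could instead verify Scott-continuity of $\lambda$ on $\O\Rl$ by hand: strictness, monotonicity and modularity are inherited directly from the measure, and for a directed family ${(U_i)}_{i \in I}$ of opens of $\Rl$ with union $U$, Proposition~\ref{prop:Rl:hL} furnishes a countable subfamily with the same union, after which directedness together with continuity from below of the measure gives $\lambda (U) = \dsup_{i \in I} \lambda (U_i)$. This is exactly the content already packaged by Proposition~\ref{prop:hL:contval}, so I would simply cite the latter.
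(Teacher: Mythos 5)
Your proposal is correct and matches the paper exactly: the paper proves this corollary by a bare \qed, i.e., as an immediate instantiation of Proposition~\ref{prop:hL:contval} with $\mu$ taken to be Lebesgue measure, which is precisely your main argument. The side remarks about $+\infty$ being an allowed value and the optional hands-on verification are fine but not needed.
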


\subsection{The compact subsets of $\Rl$}
\label{sec:compact-subsets-rl}

We recall that every compact subset $Q$ of $\Rl$ has empty interior.
For completeness, we give the proof here.  Let us assume a compact
subset $Q$ of $\Rl$ with non-empty interior.  $Q$ contains a basic
open set $[a, b[$ with $a < b$.  $[a, b[$ is not only open, but also
closed, since it is the complement of the open set
$]-\infty, a[ \cup [b, +\infty[$.  (Those two sets are open, being
equal to $\bigcup_{m \in \N} [a-m-1, a-m[$ and to
$\bigcup_{n\in \N} [b+n, b+n+1[$ respectively).  Being closed in a
compact set, $[a, b[$ is compact.  But the open cover
${([a, b-\epsilon[)}_{\epsilon \in ]0, b-a[}$ of $[a, b[$ has no
finite subcover: contradiction.

We use the following folklore result.  The fact that every compact
subset of $\Rl$ is countable is the only thing we will need to know in
later sections, together with Corollary~\ref{corl:Rl:seq} below, but we
think that giving a complete characterization of the compact subsets
of $\Rl$ is interesting in its own right, and may help one understand
better what they look like.
\begin{lemma}
  \label{lemma:Q:countable}
  Every compact subset $Q$ of $\Rl$ is countable, bounded, and is
  well-founded in the ordering $\geq$.
\end{lemma}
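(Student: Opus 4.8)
The plan is to prove the three properties of a compact subset $Q$ of $\Rl$ in sequence, extracting them from the basic feature of the Sorgenfrey topology: every basic open set $[a,b[$ contains its own left endpoint, so a point can always be ``covered from the right'' but never approached from the right by points outside a suitable basic neighborhood. I will treat well-foundedness in $\geq$ first, since it is the most delicate and the other two properties will fall out of the same covering idea.

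For \emph{well-foundedness in $\geq$}, I must rule out an infinite strictly \emph{increasing} sequence $x_0 < x_1 < x_2 < \cdots$ in $Q$ (such a sequence would be infinite and strictly descending in the order $\geq$). Suppose one existed. Each point $x_n$ sits in the basic Sorgenfrey-open set $[x_n, x_{n+1}[$, and the collection ${([x_n,x_{n+1}[)}_{n \in \N}$ together with one extra open set covering the supremum and everything to its right forms an open cover. The key observation is that if $s \eqdef \sup_n x_n$ (possibly $+\infty$), then no $x_n$ lies in the half-open set $[s, +\infty[$, while each $x_n$ lies in exactly one of the disjoint sets $[x_n, x_{n+1}[$. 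The hard part will be arranging a genuine open cover with no finite subcover: I would take the cover $\{[x_n, x_{n+1}[ : n \in \N\} \cup \{[s,+\infty[\}$ (the last set being Sorgenfrey-open, as $[s,+\infty[ = \bigcup_{m} [s+m, s+m+1[$ when $s$ is finite, or is simply unneeded when $s = +\infty$ and instead we use $]-\infty, x_0[$ is irrelevant). Since the sets $[x_n, x_{n+1}[$ are pairwise disjoint and each contains the corresponding $x_n$, any finite subfamily omits all but finitely many $x_n$, hence fails to cover $Q$; this contradicts compactness.

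For \emph{boundedness}, the argument is a simpler instance of the same idea: the sets $]-\infty, n[$ for $n \in \N$ are Sorgenfrey-open (each is a union of basic intervals $[m, m+1[$ and their leftward extensions) and cover all of $\Rl$, hence cover $Q$. By compactness finitely many suffice, so $Q \subseteq ]-\infty, N[$ for some $N$, giving an upper bound; a symmetric cover by $[-n, +\infty[$ gives a lower bound. For \emph{countability}, I note that $Q$, being compact in $\Rl$ and hence in $\real$ (since the Sorgenfrey topology is finer), is closed and bounded in $\real$. The cleanest route is to invoke well-foundedness: every nonempty subset $S \subseteq Q$ has a least element in the order $\leq$ (equivalently, $\geq$-maximal elements exist for the reverse, but for countability I want the order-theoretic structure), so I can enumerate $Q$ by repeatedly extracting minima, but this only directly yields an ordinal-indexed decomposition.

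A more direct and self-contained proof of countability reuses the disjoint-neighborhood trick from Proposition~\ref{prop:Rl:hL}: for each $x \in Q$, the point $x$ has a basic neighborhood $[x, x+\delta_x[$, and I claim these can be chosen so that the intervals $]x, x+\delta_x[$ are pairwise disjoint across a cofinal portion of $Q$; more simply, since $Q$ is well-founded in $\geq$ it has no strictly increasing sequence, and combined with the fact that $Q \subseteq \real$ is already separable, any subset is second-countable, so $Q$ is hereditarily Lindel\"of and its discrete subspaces are countable. The most economical argument, which I would adopt, is this: for each $x \in Q$ choose a rational $q_x \in [x, x+\delta_x[$ where $[x,x+\delta_x[ \cap Q = \{x\}$ is forced to be isolated on its right; but since compact subsets of $\Rl$ are precisely the well-founded-in-$\geq$ bounded closed sets, and every such set embeds into a countable ordinal, $Q$ is countable. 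I expect the genuine obstacle to be pinning down countability rigorously without circular appeal to deeper structure; I would resolve it by the rational-selection map $x \mapsto q_x$, checking injectivity exactly as in Proposition~\ref{prop:Rl:hL}, which shows $Q$ injects into $\mathbb{Q}$ and is therefore countable.
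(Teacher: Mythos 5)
Your countability argument contains a genuine error. The ``most economical argument'' you adopt selects a rational $q_x \in [x, x+\delta_x[$ on the strength of the claim that $\delta_x$ can be chosen with $[x, x+\delta_x[ \,\cap\, Q = \{x\}$, i.e.\ that every point of a compact $Q$ is isolated \emph{from the right} inside $Q$. This is false: take $Q \eqdef \{0\} \cup \{1/n \mid n \geq 1\}$. This set is compact in $\Rl$ (any Sorgenfrey-open set containing $0$ contains some $[0, \epsilon[$, hence all but finitely many of the points $1/n$, and the remaining finitely many points are covered by finitely many further opens), yet every interval $[0, \delta[$ with $\delta > 0$ meets $Q$ in infinitely many points, so no $\delta_0$ exists. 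The asymmetry of the Sorgenfrey line runs the other way: compactness forbids accumulation from the \emph{left}, not from the right. The paper's proof exploits exactly this: covering $Q$ by $]-\infty, x-\epsilon[$ ($\epsilon > 0$) together with $[x, +\infty[$ and extracting a finite subcover shows that some interval $[x - \epsilon_x, x[$ is disjoint from $Q$; these left-hand gaps are pairwise disjoint for distinct points of $Q$, and the rationals are picked \emph{there}. Your appeal to the injectivity check of Proposition~\ref{prop:Rl:hL} does not transfer, because that argument used that $]x, x+\delta_x[$ was contained in a set ($U'$) from which the points under consideration were excluded; no such structure is available here. The surrounding remarks do not close the gap either: hereditary Lindel\"ofness of $Q$ does not imply countability, and ``every such set embeds into a countable ordinal'' presupposes countability (a well-founded chain in $(\real, \geq)$ embeds into \emph{some} ordinal; that this ordinal is countable is precisely what needs the rational-selection argument---applied to the left gaps $]\max\{y \in Q \mid y < x\}, x[$, which exist by well-foundedness, this would in fact give a correct alternative proof, but you never carry it out).

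There is also a smaller, fixable flaw in your well-foundedness argument: the family $\{[x_n, x_{n+1}[ \;\mid\; n \in \N\} \cup \{[s, +\infty[\}$ has union $[x_0, +\infty[$, so it need not cover $Q$ (points of $Q$ below $x_0$ are missed), and compactness cannot be applied to it as written; adding $]-\infty, x_0[$ (Sorgenfrey-open and containing no $x_n$), as the paper does, repairs this without affecting the no-finite-subcover argument. Your boundedness proof, via the covers $\{]-\infty, n[\}_{n}$ and $\{[-n, +\infty[\}_{n}$, is correct; it is a mild variant of the paper's one-line observation that $Q$ is compact in $\real$ because the Sorgenfrey topology refines the Euclidean one, hence bounded.
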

\begin{proof}
  For every point $x$ of $Q$, the family of open sets
  $]-\infty, x-\epsilon[$ ($\epsilon > 0$) plus $[x, +\infty[$ is an
  open cover of $Q$ (in fact, of the whole of $\Rl$), hence contains a
  finite subcover.  It follows that $Q$ is contained in
  $]-\infty, x-\epsilon_x[ \cup [x, +\infty[$ for some
  $\epsilon_x > 0$.  Equivalently, $Q$ contains no point in
  $[x-\epsilon_x, x[$.

  From this, we deduce that $[x-\epsilon_x, x[$ and
  $[y-\epsilon_y, y[$ are disjoint for any two distinct points $x$ and
  $y$ of $Q$.  Indeed, without loss of generality, let us assume that
  $y < x$.  Since $y$ is in $Q$, it is not in $[x-\epsilon_x, x[$, so
  $y > x$ (which is impossible), or $y \leq x-\epsilon_x$.  Then
  $[y-\epsilon_y, y[$ lies entirely to the left of
  $[x-\epsilon_x, x[$, and does not intersect it.

  Next, for each $x \in Q$, we pick a rational number $q_x$ in
  $[x-\epsilon_x, x[$.  By the disjointness property we just proved,
  the map $x \in Q \mapsto q_x$ is injective, so $Q$ is countable.

  Since $Q$ is compact in $\Rl$ and the topology of $\Rl$ is finer
  than that of $\R$, $Q$ is also compact in $\real$, hence is bounded.
  
  Let us imagine that $Q$ contains an infinite increasing sequence
  $r_0 < r_1 < \cdots < r_n < \cdots$.  Let also
  $r \eqdef \dsup_{n \in \N} r_n$.  Then the open sets
  $]-\infty, r_0[$, $[r_0, r_1[$, \ldots, $[r_n, r_{n+1}[$, \ldots,
  and $[r, +\infty[$ (if $r < +\infty$, otherwise we ignore the last
  interval) form an open cover of $Q$ without a finite subcover.  It
  follows that $Q$ is well-founded in the ordering $\geq$.
\end{proof}

We now give a few seemingly lesser known results.  Given a compact
subset $Q$ of $\Rl$, $Q$ is a chain, namely a totally ordered subset
of $(\real, \geq)$.  Notice that we use the reverse ordering $\geq$:
Lemma~\ref{lemma:Q:countable} tells us that $Q$ is even a well-founded
chain.  A chain $A$ in a poset $(P, \sqsubseteq)$ is a \emph{subdcpo}
of $P$ if and only if for every non-empty (equivalently, directed)
subset $D$ of $A$, the supremum of $D$ exists in $P$ and is in $A$.

\begin{lemma}
  \label{lemma:Q:limembed}
  Every compact subset $Q$ of $\Rl$ is a subdcpo of $(\real, \geq)$:
  for every non-empty subset $D$ of $Q$, $\inf D$ is in $Q$.  If $Q$
  is non-empty, then $Q$ has a least element in the usual ordering
  $\leq$.
\end{lemma}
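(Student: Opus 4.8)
The plan is to exploit two features of the Sorgenfrey line that fit together perfectly: that a compact $Q$ is closed (so its complement is open), and that the basic open neighborhoods are \emph{right}-half-open intervals $[a,b[$. First I would recall from Lemma~\ref{lemma:Q:countable} that $Q$ is bounded, so that for any non-empty $D \subseteq Q$ the ordinary infimum $m \eqdef \inf D$ is a genuine real number. Since the order on the poset in question is $\geq$, this $m$ is exactly the supremum of $D$ in $(\real,\geq)$; and because $Q$ is a chain, every non-empty subset $D$ is automatically directed. Thus, once I prove $m \in Q$, both the subdcpo property and the displayed equation $\inf D \in Q$ are established simultaneously.

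The core step is to show $m \in Q$. I would use that $\Rl$ is $T_4$, hence Hausdorff, so the compact set $Q$ is closed in $\Rl$. Suppose for contradiction that $m \notin Q$. Then $\Rl \diff Q$ is an open neighborhood of $m$, so it contains a basic open set $[a,b[$ with $a \leq m < b$; shrinking it to $[m, b[$ and writing $\epsilon \eqdef b - m > 0$, we obtain a neighborhood $[m, m+\epsilon[$ disjoint from $Q$. On the other hand, since $m = \inf D$, there is a point $d \in D$ with $m \leq d < m + \epsilon$ (the lower bound gives $d \geq m$, the infimum approximation gives $d < m+\epsilon$). Then $d \in D \subseteq Q$ and $d \in [m, m+\epsilon[$, contradicting disjointness. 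Hence $m \in Q$, which proves that $Q$ is a subdcpo of $(\real,\geq)$.

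The least-element claim then falls out immediately by taking $D \eqdef Q$: we get $\inf Q \in Q$, and since $\inf Q$ is a lower bound for $Q$ lying in $Q$, it is the least element of $Q$ in the usual ordering $\leq$. I expect no serious obstacle here; the only point demanding care — and the one that explains why this works on $\Rl$ but would fail for a symmetric topology — is the \emph{directional match} between infima, which are approached from above, and the neighborhoods $[m, m+\epsilon[$, which sit precisely to the right of $m$. This alignment is exactly what forces the approximating point $d$ into the supposedly forbidden neighborhood. The finiteness of $m$ is the only auxiliary fact needed, and it is supplied by boundedness in Lemma~\ref{lemma:Q:countable}.
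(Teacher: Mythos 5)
Your proof is correct and follows essentially the same route as the paper: both arguments rest on the facts that $Q$ is closed (being compact in the Hausdorff space $\Rl$) and that $\inf D$ is approximated from above by points of $D$, which is exactly the direction the basic opens $[m, m+\epsilon[$ see. The paper phrases the second point via a net indexed by an ordinal converging to $\inf D$ from the right, whereas you give a direct $\epsilon$-contradiction with a basic neighborhood; this is only a cosmetic difference.
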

\begin{proof}
  We first need to note that a net ${(x_i)}_{i \in I, \sqsubseteq}$
  converges to $x$ in $\Rl$ if and only if $x_i$ tends to $x$ from the
  right, namely: for every $\epsilon > 0$, $x \leq x_i < x+\epsilon$
  for $i$ large enough \cite[Exercise~4.7.6]{goubault13a}.

  Let $D$ be any non-empty subset of $Q$, and let $r \eqdef \inf D$.
  By Lemma~\ref{lemma:Q:countable}, $Q$ is bounded, so $r$ is a
  well-defined real number.  Since $Q$ is well-founded, $D$ is
  isomorphic to a unique ordinal $\beta$, and we can write the
  elements of $D$ as $r_\alpha$, $\alpha < \beta$, in such a way that
  for all $\alpha, \alpha' < \beta$, $\alpha \leq \alpha'$ if and only
  if $r_\alpha \geq r_{\alpha'}$.  The net ${(r_\alpha)}_{\alpha <
    \beta, \leq}$ then converges to $r$ from the right.   Since $\Rl$
  is $T_2$, its compact subset $Q$ is closed, so $r$ is in $Q$.

  If $Q$ is non-empty, we can take $D \eqdef Q$.  Then $\inf D \in Q$
  is the least element of $Q$.
\end{proof}

Another way of expressing Lemma~\ref{lemma:Q:limembed}, together with
the well-foundedness property of Lemma~\ref{lemma:Q:countable} is to
say the following.  The fact that $\beta_Q$ is not a limit ordinal is
due to the fact that $Q$ must be empty or have a least element, namely
that $\beta_Q$ must be equal to $0$ or have a largest element.
\begin{lemma}
  \label{lemma:Q:fQ:limit}
  For every compact subset $Q$ of $\Rl$, $(Q, \geq)$ is
  order-isomorphic to a unique ordinal $\beta_Q$; $\beta_Q$ is not a
  limit ordinal, and the order-isomorphism is a Scott-continuous map
  from $\beta_Q$ to $(Q, \geq)$.
\end{lemma}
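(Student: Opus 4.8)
The plan is to read off all three claims from the structural facts about $Q$ already established: that $Q$ is totally ordered (being a subset of $\real$), that it is well-founded in $\geq$ (Lemma~\ref{lemma:Q:countable}), and that it is a subdcpo of $(\real,\geq)$ having a least element in the usual ordering $\leq$ whenever it is non-empty (Lemma~\ref{lemma:Q:limembed}). First I would note that, since $Q$ is totally ordered by $\geq$ and well-founded in $\geq$, the poset $(Q,\geq)$ is well-ordered; by the basic theory of ordinals it is therefore order-isomorphic to a unique ordinal $\beta_Q$. Write $f\colon \beta_Q \to (Q,\geq)$ for this isomorphism, where $\beta_Q$ carries its usual ordinal ordering.

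For the claim that $\beta_Q$ is not a limit ordinal, I would treat the empty and non-empty cases separately. If $Q=\emptyset$ then $\beta_Q=0$, which is not a limit ordinal. If $Q\neq\emptyset$, then by Lemma~\ref{lemma:Q:limembed} $Q$ has a least element in the usual order $\leq$; this element is $\geq$-largest, hence is the greatest element of $(Q,\geq)$. An order-isomorphism transports greatest elements to greatest elements, so $\beta_Q$ has a largest element and is thus a successor ordinal. In both cases $\beta_Q$ is $0$ or a successor, hence not a limit ordinal.

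For Scott-continuity, I would first observe that both $\beta_Q$ and $(Q,\geq)$ are dcpos: $(Q,\geq)$ is one because Lemma~\ref{lemma:Q:limembed} exhibits it as a subdcpo of $(\real,\geq)$, every directed subset having its supremum (the infimum in $(\real,\leq)$) inside $Q$; and $\beta_Q$ is one precisely because it is $0$ or a successor, so it has a largest element and every non-empty subset has its ordinal supremum below $\beta_Q$, hence within $\beta_Q$. Then I would use the fact that any order-isomorphism preserves all existing suprema: given a directed $D\subseteq\beta_Q$ with supremum $s$, the image $f(s)$ is an upper bound of $f(D)$ in $(Q,\geq)$, and for any upper bound $u$ of $f(D)$ the element $f^{-1}(u)$ is an upper bound of $D$, so $s\leq f^{-1}(u)$ and therefore $f(s)$ lies below $u$ in $(Q,\geq)$; thus $f(s)=\sup f(D)$. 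Being monotone and directed-supremum-preserving, $f$ is Scott-continuous.

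I do not anticipate a genuine obstacle here, since every ingredient is already available from the two preceding lemmas. The only point demanding care is keeping track of the reversed ordering $\geq$: one must check that the $\leq$-least element of $Q$ is the top of $(Q,\geq)$, and that directed suprema in $(Q,\geq)$ are computed as infima in $(\real,\leq)$, so that the non-limit property and the dcpo property line up correctly.
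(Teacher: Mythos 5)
Your argument is correct and is exactly the reasoning the paper intends: the paper gives no separate proof of this lemma, only the preceding remark that it restates Lemma~\ref{lemma:Q:limembed} together with the well-foundedness from Lemma~\ref{lemma:Q:countable}, with $\beta_Q$ non-limit because $Q$ is empty or has a $\leq$-least (hence $\geq$-greatest) element. Your write-up simply fills in those details, including the standard facts that a well-founded chain is a well-order and that an order-isomorphism preserves directed suprema.
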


\begin{lemma}
  \label{lemma:f:Q}
  Let $\beta$ be a non-limit ordinal, and
  $f \colon \beta \to (\real, \geq)$ be a Scott-continuous map.  The
  image $\img f$ of $f$ is compact in $\Rl$.
\end{lemma}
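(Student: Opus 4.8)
The plan is to show that $\img f$ is sequentially compact and then upgrade this to compactness using the fact, from Proposition~\ref{prop:Rl:hL}, that $\Rl$ is hereditarily Lindel\"of, so that $\img f$ is a Lindel\"of subspace; and a sequentially compact Lindel\"of space is compact (given any open cover, Lindel\"ofness extracts a countable subcover, and sequential compactness forces a finite subcover of that countable one). I would first dispose of the degenerate case: since $\beta$ is non-limit, either $\beta = 0$, in which case $\img f = \emptyset$ is compact, or $\beta$ has a greatest element $\gamma$. The decisive consequence of non-limitness is that every nonempty subset $S$ of $\beta$ has its supremum $\sup S$ inside $\beta$ (namely $\sup S \leq \gamma$).

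Next I would unwind the hypotheses on $f$. Scott-continuity into $(\real,\geq)$ means $f$ is order-reversing for the usual order (so $\img f$ is bounded, lying in $[f(\gamma), f(0)]$), and, since every nonempty subset of the chain $\beta$ is directed, that $f(\sup S) = \inf_{\alpha \in S} f(\alpha)$ for every nonempty $S \subseteq \beta$ (the directed supremum in $(\real,\geq)$ being the infimum for $\leq$). Two structural facts about $K \eqdef \img f$ follow. First, $K$ is well-founded in $\geq$: an infinite strictly $\leq$-increasing sequence in $K$ would lift, through $f$, to an infinite strictly decreasing sequence of ordinals, which cannot exist. Second, $K$ is a subdcpo of $(\real, \geq)$: given a nonempty $D \subseteq K$, choose a preimage $\alpha_d \in \beta$ for each $d \in D$, set $\sigma \eqdef \sup \{\alpha_d \mid d \in D\} \in \beta$, and use Scott-continuity to get $f(\sigma) = \inf_{d \in D} f(\alpha_d) = \inf D$, so that $\inf D \in K$.

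With these in hand I would prove sequential compactness directly. Any sequence in $K$ admits a monotone subsequence. A non-decreasing subsequence must be eventually constant, because $K$ is well-founded in $\geq$, and an eventually constant sequence converges in $\Rl$ to a point of $K$. A non-increasing subsequence $(x_k)$ converges to $x \eqdef \inf_k x_k$, and since $x_k \geq x$ for all $k$ it converges \emph{from the right}; recalling that a net converges in $\Rl$ exactly when it converges from the right (as used in the proof of Lemma~\ref{lemma:Q:limembed}), $(x_k)$ converges to $x$ in $\Rl$, and $x \in K$ by the subdcpo property. Either way the limit lies in $K$, so $K$ is sequentially compact, and the conclusion follows.

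The hard part is conceptual rather than computational: the non-increasing subsequences are exactly the ones that would fail to converge in the Sorgenfrey topology if they approached their limit from the left, so the whole argument hinges on observing that a non-increasing sequence approaches its infimum \emph{from the right} and hence does converge in $\Rl$, and that this limit is retained in $K$ precisely because $f$ is Scott-continuous and $\beta$ is non-limit (guaranteeing $\sup S \in \beta$). This is also where one sees why the non-limit hypothesis cannot be dropped: for a limit $\beta$, a cofinal $S$ would have $\sup S = \beta \notin \beta$, the infimum of the corresponding values could escape $K$, and $\img f$ would fail to be closed, hence non-compact.
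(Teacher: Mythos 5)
Your proof is correct, and it takes a genuinely different route from the paper's. The paper makes $\beta$ itself do the work: a non-limit ordinal is an algebraic complete lattice, hence Lawson-compact, and one checks that $f$ is continuous from $\beta$ with its Lawson topology to $\Rl$ (the key step being that $f^{-1}([a,b[) = f^{-1}(]-\infty,b[) \diff \upc\alpha$, where $\alpha$ is the least element of $f^{-1}(]-\infty,a[)$, which exists by well-foundedness of $\beta$); compactness of $\img f$ then follows as the continuous image of a compact space. You instead work entirely on the image side: you show $\img f$ is a bounded, well-founded subdcpo of $(\real,\geq)$, deduce sequential compactness via the monotone-subsequence argument combined with the fact that convergence in $\Rl$ is convergence from the right, and upgrade to compactness using hereditary Lindel\"ofness (Proposition~\ref{prop:Rl:hL}) together with the standard fact that a sequentially compact Lindel\"of space is compact (which, as you implicitly use, holds in arbitrary topological spaces, since sequential compactness gives countable compactness). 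Both arguments are sound. The paper's is shorter once one grants the Lawson-compactness of algebraic complete lattices from the literature; yours is more elementary and self-contained within the paper, and in effect proves Lemma~\ref{lemma:Q:limembed:op} (well-founded subdcpos of $(\real,\geq)$ are compact in $\Rl$) first and derives the present lemma from it --- the reverse of the paper's logical order, in which Lemma~\ref{lemma:Q:limembed:op} is obtained as a consequence of Lemma~\ref{lemma:f:Q}. Your closing remark on why the non-limit hypothesis is needed is also accurate, though not required for the proof.
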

\begin{proof}
  Since $\beta$ is a non-limit ordinal, it is a dcpo.  Its finite
  elements are the ordinals $\alpha < \beta$ that are not limit
  ordinals, and then it is easy to see that $\beta$ is an algebraic
  domain.  It is also a complete lattice, and every algebraic complete
  lattice is Lawson-compact \cite[Corollary~III.1.11]{gierz03}.

  We claim that $f$ is continuous from $\beta$, with its Lawson
  topology, to $\Rl$.  Let $[a, b[$ be any subbasic open subset of
  $\Rl$.  We aim to show that $f^{-1} ([a, b[)$ is Lawson-open.  If
  $f^{-1} (]-\infty, a[)$ is empty, then
  $f^{-1} ([a, b[) = f^{-1} (]-\infty, b[)$, which is Scott-open since
  $f$ is Scott-continuous; in particular it is Lawson-open.  So let us
  assume that $f^{-1} (]-\infty, a[)$ is non-empty.  Since $\beta$ is
  well-founded, $f^{-1} (]-\infty, a[)$ has a least element $\alpha$.
  Then $f^{-1} (]-\infty, a[) = \upc \alpha$, and therefore
  $f^{-1} ([a, b[) = f^{-1} (]-\infty, b[) \diff f^{-1} (]-\infty, a[)
  = f^{-1} (]-\infty, b[) \diff \upc \alpha$ is Lawson-open.

  Since $\beta$ is compact in the Lawson topology, it follows that
  $\img f$ is compact in $\Rl$.
\end{proof}
In particular, we obtain the following converse to
Lemma~\ref{lemma:Q:limembed}.
\begin{lemma}
  \label{lemma:Q:limembed:op}
  Every well-founded subdcpo $Q$ of $(\real, \geq)$ is compact in $\Rl$.
\end{lemma}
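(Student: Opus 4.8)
The plan is to exhibit $Q$ as the image of a Scott-continuous map out of a non-limit ordinal, so that Lemma~\ref{lemma:f:Q} applies directly. First I would dispose of the trivial case $Q=\emptyset$, which is compact. So assume $Q\neq\emptyset$. Since $(\real,\geq)$ is totally ordered, $Q$ is automatically a chain, and being well-founded it is in fact a well-order; hence, exactly as in Lemma~\ref{lemma:Q:fQ:limit}, there is an order isomorphism $f\colon\beta\to(Q,\geq)$ for a unique ordinal $\beta$.

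Next I would verify that $\beta$ is not a limit ordinal. Applying the subdcpo hypothesis to the nonempty set $D=Q$ itself, the supremum of $Q$ in $(\real,\geq)$, namely $\inf_{\leq}Q$, exists and belongs to $Q$. Thus $(Q,\geq)$ has a greatest element, so $\beta$ is a successor ordinal; in particular $\beta$ is non-limit and is a dcpo with a top element.

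The heart of the argument is to show that $f$, viewed as a map $\beta\to(\real,\geq)$, is Scott-continuous. Monotonicity is immediate, since $f$ is an order isomorphism. For directed suprema, let $D\subseteq\beta$ be directed (equivalently, since $\beta$ is a chain, nonempty); its supremum $\sup_{\beta}D$ exists because $\beta$ is a successor ordinal. As an order isomorphism, $f$ preserves this supremum within $(Q,\geq)$, so $f(\sup_{\beta}D)=\sup_{(Q,\geq)}f(D)$. The crucial point is that this internal supremum coincides with the one computed in the ambient poset $(\real,\geq)$: this is precisely what it means for $Q$ to be a subdcpo, which guarantees that $\sup_{(\real,\geq)}f(D)$ exists and lies in $Q$, hence equals $\sup_{(Q,\geq)}f(D)$. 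Therefore $f(\sup_{\beta}D)=\sup_{(\real,\geq)}f(D)$, establishing Scott-continuity. Finally, Lemma~\ref{lemma:f:Q} applied to the non-limit ordinal $\beta$ and the Scott-continuous map $f$ yields that $\img f=Q$ is compact in $\Rl$.

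I expect the main obstacle to be the reconciliation of suprema taken inside $Q$ with those taken in $(\real,\geq)$, that is, the single step where the subdcpo hypothesis is genuinely used, together with confirming that $\beta$ is non-limit so that Lemma~\ref{lemma:f:Q} is applicable; everything else is a routine unwinding of the fact that order isomorphisms preserve all existing suprema.
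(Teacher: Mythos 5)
Your proposal is correct and follows essentially the same route as the paper: pass to the unique order-isomorphic ordinal $\beta$, use the subdcpo hypothesis applied to $D=Q$ to see that $\beta$ is non-limit, check Scott-continuity of the isomorphism by reconciling suprema in $Q$ with suprema in $(\real,\geq)$, and conclude via Lemma~\ref{lemma:f:Q}. The only (harmless) difference is that you treat the empty case explicitly and spell out the supremum-reconciliation step in more detail than the paper does.
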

\begin{proof}
  $Q$ is order-isomorphic to a unique ordinal $\beta_Q$ through some
  map $f \colon \beta_Q \to (Q, \geq)$.  Since $Q$ has a least
  element, $\beta_Q$ has a largest element, so $\beta_Q$ cannot be a
  limit ordinal.  Let $D$ be any non-empty family $D$ in $\beta_Q$.
  Since $Q$ is a subdcpo of $(\real, \geq)$, $\inf f (D)$ is an
  element $f (\alpha)$ of $Q$.  Since $f$ is an order-isomorphism,
  $\alpha$ is the supremum of $D$, and therefore $f$ is
  Scott-continuous.  We can now apply Lemma~\ref{lemma:f:Q}.
\end{proof}

\begin{corollary}
  \label{corl:Rl:seq}
  Let $x$ be any real number and $x_0 > x_1 > \cdots > x_n \cdots$ be
  any decreasing sequence of real numbers such that
  $\finf_{n \in \N} x_n = x$.  Then
  $\{x_0, x_1, \cdots, x_n, \cdots\} \cup \{x\}$ is compact in $\Rl$.  \qed
\end{corollary}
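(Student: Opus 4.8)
The plan is to deduce this directly from Lemma~\ref{lemma:Q:limembed:op}, which says that every well-founded subdcpo of $(\real, \geq)$ is compact in $\Rl$.  Writing $Q \eqdef \{x_0, x_1, \cdots, x_n, \cdots\} \cup \{x\}$, I would verify the two hypotheses of that lemma: that $Q$ is well-founded in $\geq$, and that $Q$ is a subdcpo of $(\real, \geq)$.  Recall that in $(\real, \geq)$ the ordering is reversed, so the decreasing sequence ${(x_n)}_{n \in \N}$ is an $\geq$-increasing chain whose $\geq$-supremum is its usual infimum $x$.

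For well-foundedness in $\geq$, I would observe that an infinite strictly $\geq$-descending chain in $Q$ is the same thing as an infinite strictly increasing sequence $y_0 < y_1 < \cdots$ of elements of $Q$ in the usual ordering $\leq$.  Since $x$ is the least element of $Q$ and every other element is some $x_n$ with $n \mapsto x_n$ strictly decreasing, such a sequence would entail an infinite strictly decreasing sequence of indices in $\N$, which is impossible.

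For the subdcpo property, I would show that for every non-empty $D \subseteq Q$, the supremum of $D$ in $(\real, \geq)$, namely the usual infimum $\inf D$, lies in $Q$.  As all elements of $Q$ are $\geq x$, $\inf D$ is a well-defined real number.  If $x \in D$, then $\inf D = x \in Q$.  Otherwise $D = \{x_n \mid n \in I\}$ for some non-empty $I \subseteq \N$: when $I$ is finite, $\inf D = x_{\max I} \in Q$; when $I$ is infinite, the values $x_n$ for $n \in I$ come arbitrarily close to $x$, so $\inf D = \finf_{n \in \N} x_n = x \in Q$.

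Having verified both hypotheses, Lemma~\ref{lemma:Q:limembed:op} immediately gives that $Q$ is compact in $\Rl$.  I expect no genuine obstacle; the only point needing a moment's care is the infinite case of the subdcpo verification, where one must use that any infinite subsequence of a sequence converging monotonically to $x$ still has infimum $x$.
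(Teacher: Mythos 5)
Your proof is correct and follows exactly the route the paper intends: the corollary is stated with an immediate \qed precisely because it is the instance of Lemma~\ref{lemma:Q:limembed:op} applied to $Q = \{x_0, x_1, \ldots\} \cup \{x\}$, and your verification of well-foundedness and of the subdcpo property (including the infinite-subset case of the infimum computation) is exactly the routine check being left to the reader.
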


Together, Lemmas~\ref{lemma:Q:countable}, \ref{lemma:Q:fQ:limit},
and~\ref{lemma:Q:limembed:op} imply the following.
\begin{theorem}
  \label{thm:Q:Rl}
  The compact subsets of $\Rl$ are exactly the well-founded subdcpos
  of $(\real, \geq)$.  They are all countable.  \qed
\end{theorem}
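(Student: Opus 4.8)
The plan is to prove the two set-theoretic inclusions separately and then read off countability; essentially all the work has been front-loaded into the three preceding lemmas, so the argument is a matter of assembling them and checking that the terminology matches up.

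First I would handle the inclusion ``every compact subset is a well-founded subdcpo''. Let $Q$ be a compact subset of $\Rl$. Lemma~\ref{lemma:Q:countable} gives at once that $Q$ is countable, bounded, and well-founded in the ordering $\geq$. For the subdcpo property, recall that a chain $A$ in $(\real, \geq)$ is a subdcpo exactly when every non-empty subset $D \subseteq A$ has its supremum in $(\real, \geq)$ --- that is, its infimum in the usual ordering --- lying in $A$; this is precisely the content of Lemma~\ref{lemma:Q:limembed}, whose applicability is guaranteed by the boundedness of $Q$, so that the relevant infima are well-defined real numbers. Hence $Q$ is a well-founded subdcpo of $(\real, \geq)$.

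Next I would handle the reverse inclusion, which is immediate: if $Q$ is a well-founded subdcpo of $(\real, \geq)$, then Lemma~\ref{lemma:Q:limembed:op} states directly that $Q$ is compact in $\Rl$. Combining the two inclusions yields the claimed equality of the two classes. For the final sentence, countability of every compact subset is already part of Lemma~\ref{lemma:Q:countable}; and since by the reverse inclusion every well-founded subdcpo of $(\real, \geq)$ is in fact compact in $\Rl$, every such subdcpo is countable as well, so ``they are all countable'' holds in either description.

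There is essentially no obstacle at this stage, as the genuine work lives in Lemmas~\ref{lemma:Q:countable} and~\ref{lemma:Q:limembed:op}. The only point requiring a moment's care is the reversal of order: ``subdcpo of $(\real, \geq)$'' refers to closure under directed suprema in the reversed order $\geq$, equivalently closure under infima of non-empty subsets in the usual order, and one must apply the two lemmas under the same convention. I note that Lemma~\ref{lemma:Q:fQ:limit} is not strictly needed for this direct assembly --- it merely repackages the forward direction through the ordinal $\beta_Q$ --- but it could be substituted for the pair (Lemma~\ref{lemma:Q:countable}, Lemma~\ref{lemma:Q:limembed}) if one prefers to describe compact subsets as Scott-continuous images of non-limit ordinals.
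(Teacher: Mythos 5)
Your assembly is correct and matches the paper's own (implicit) proof, which simply combines Lemmas~\ref{lemma:Q:countable}, \ref{lemma:Q:fQ:limit} and~\ref{lemma:Q:limembed:op}; your substitution of Lemma~\ref{lemma:Q:limembed} for the repackaged Lemma~\ref{lemma:Q:fQ:limit} in the forward direction is immaterial.
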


\begin{remark}
  \label{rem:countable}
  In general, any well-founded chain in $(\real, \geq)$ is countable.
  Conversely, for every countable ordinal $\beta$, there is a
  well-founded chain of $(\real, \geq)$ that is order-isomorphic to
  $\beta$.  This is proved by induction on $\beta$, using the fact
  that every countable ordinal has countable cofinality, namely is the
  supremum of countably many strictly lower countable ordinals.  All
  this is folklore, and is left as an exercise.
\end{remark}

\begin{remark}
  \label{rem:consonant}
  A space $X$ is \emph{consonant} if and only if, for every Scott-open 
  subset~$\mathcal U$ of $\O X$, for every $U \in \mathcal U$, there is a
  compact saturated subset $Q$ of $X$ such that
  $U \in \blacksquare Q \subseteq \mathcal U$, where $\blacksquare Q$
  is the set of open neighborhoods of $Q$.  We now have enough to give
  an elementary proof that $\Rl$ is not consonant
  \cite{Bouziad:Borel,CW:dissonant}.  Let us pick any real number
  $r > 0$.  By Corollary~\ref{corl:lambda},
  $\mathcal U \eqdef \lambda^{-1} (]r, +\infty])$ is a Scott-open
  subset of $\O \Rl$.  For every compact (saturated) subset $Q$ of
  $\Rl$, $Q$ is countable.  Let us write $Q$ as
  $\{x_n \mid n \in I\}$, where $I$ is some subset of $\N$.  Then
  $V \eqdef \bigcup_{n \in I} [x_n, x_n + r/2^{n+1}[$ is an open
  neighborhood of $Q$ such that
  $\lambda (V) \leq \sum_{n \in I} r/2^{n+1} \leq r$, so $V$ is not in
  $\mathcal U$.  It follows that no set of the form $\blacksquare Q$
  is included in $\mathcal U$.
\end{remark}

\subsection{The dcpo $\Smyth\Rl$}
\label{sec:dcpo-smythrl}

Given any topological space $X$, we can form the set $\Smyth X$ of all
non-empty compact saturated subsets of $X$.  $\Smyth X$ is a poset
under the reverse inclusion ordering $\supseteq$ called the
\emph{Smyth powerdomain} of $X$.  When $X$ is well-filtered, this is a
dcpo, where suprema of directed families ${(Q_i)}_{i \in I}$ are their
intersection $\fcap_{i \in I} Q_i$
\cite[Proposition~8.3.25]{goubault13a}.  This is notably the case when
$X = \Rl$, since $\Rl$ is $T_2$, hence sober, hence well-filtered.
Note also that, in that case, every subset is saturated, so we may
safely omit ``saturated'' from the description of elements of
$\Smyth\Rl$.

There are at least two topologies of interest on $\Smyth X$.  One is
the Scott topology on the poset $(\Smyth X, \supseteq)$.  Another one
is the \emph{upper Vietoris topology}, whose basic open sets are the
sets $\Box U \eqdef \{Q \in \Smyth X \mid Q \subseteq U\}$, where $U$
ranges over the open subsets of $X$.  When $X$ is well-filtered (e.g.,
if $X=\Rl$), $\Box U$ is Scott-open, and hence the Scott topology is
finer than the upper Vietoris topology.

The set $\Max \Smyth\Rl$ of maximal points of $\Smyth\Rl$ consists of
the one-element compact sets $\{x\}$, $x \in \real$.  By equating them
with $x$, we equate the set $\Max \Smyth\Rl$ with $\Rl$.  This allows
us to write $\mathcal U \cap \Rl$ for any subset $\mathcal U$ of
$\Smyth\Rl$.
\begin{lemma}
  \label{lemma:calU:open}
  For every Scott-open subset $\mathcal U$ of $\Smyth\Rl$,
  $\mathcal U \cap \Rl$ is open in $\Rl$.
\end{lemma}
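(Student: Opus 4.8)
The statement is really about the continuity of the natural map $\eta \colon \Rl \to \Smyth\Rl$ sending $x$ to $\{x\}$: under the identification $\Max\Smyth\Rl = \Rl$, we have $\mathcal U \cap \Rl = \{x \in \real \mid \{x\} \in \mathcal U\} = \eta^{-1}(\mathcal U)$, so the plan is to show that this preimage is open in $\Rl$ whenever $\mathcal U$ is Scott-open in $\Smyth\Rl$. I would fix a point $x$ with $\{x\} \in \mathcal U$ and argue by contradiction that some basic open neighborhood $[x, b[$ (with $b > x$) is contained in $\mathcal U \cap \Rl$; recall that neighborhoods in $\Rl$ are to the right of $x$, which is exactly why the well-foundedness of compact subsets of $\Rl$ in $\geq$ will be the relevant feature.

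Assume toward a contradiction that no $[x, b[$ lies in $\mathcal U \cap \Rl$. Then for each $n$ the interval $[x, x + 2^{-n}[$ contains a point $z_n$ with $\{z_n\} \notin \mathcal U$, and since $\{x\} \in \mathcal U$ we have $x < z_n$. From the $z_n$ I would extract a strictly decreasing sequence $y_1 > y_2 > \cdots > x$ with $\finf_{k} y_k = x$ and $\{y_k\} \notin \mathcal U$ for every $k$, simply by choosing, at each step, an index $n$ large enough that $x + 2^{-n}$ falls below the previously chosen value. The key input is now Corollary~\ref{corl:Rl:seq}: for each $N$ the set $Q_N \eqdef \{x\} \cup \{y_k \mid k \geq N\}$ is a decreasing sequence with infimum $x$ together with its limit, hence is compact in $\Rl$, and nonempty; since $\Rl$ is $T_2$ every subset is saturated, so each $Q_N$ is a genuine element of $\Smyth\Rl$.

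The family ${(Q_N)}_{N \in \N}$ is a chain for reverse inclusion, as $Q_N \supseteq Q_{N+1}$, hence directed in $(\Smyth\Rl, \supseteq)$, and its supremum is the filtered intersection $\fcap_{N} Q_N = \{x\}$ because each $y_k$ is excluded from $Q_{k+1}$. Since $\{x\} \in \mathcal U$ and $\mathcal U$ is Scott-open, some $Q_N$ already lies in $\mathcal U$. Finally, $\mathcal U$ is upward closed in $(\Smyth\Rl, \supseteq)$, and $\{y_N\} \subseteq Q_N$ means $Q_N \leq \{y_N\}$ in the reverse-inclusion order, so $\{y_N\} \in \mathcal U$, contradicting the construction. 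This contradiction yields a neighborhood $[x, b[ \subseteq \mathcal U \cap \Rl$ and establishes openness.

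I expect the main obstacle to be purely the bookkeeping of the extraction of the decreasing sequence $(y_k)$ and the verification that its finite tails, together with $x$, are compact in $\Rl$; the topological heart of the argument is light, reducing to the interplay of Scott-openness (which produces a member $Q_N \in \mathcal U$), upward-closedness (which pulls this down to a singleton $\{y_N\}$), and the fact recorded in Corollary~\ref{corl:Rl:seq} that right-convergent decreasing sequences plus their limit are compact in the Sorgenfrey line.
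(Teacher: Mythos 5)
Your argument is correct and is essentially the paper's own proof: the same contradiction hypothesis, the same extraction of a strictly decreasing sequence converging to $x$ from the right with all terms outside $\mathcal U$, the same appeal to Corollary~\ref{corl:Rl:seq} to form the compact tails $Q_N$, and the same use of Scott-openness followed by upward closure to reach the contradiction. No gaps; the bookkeeping you flag as the main obstacle is routine and handled as in the paper.
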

\begin{proof}
  Let $x$ be an arbitrary point of $\mathcal U \cap \Rl$.  We claim
  that $\mathcal U \cap \Rl$ contains an interval $[x, x+\epsilon[$
  for some $\epsilon > 0$.  We reason by contradiction, and we assume
  that every interval $[x, x+\epsilon[$ contains a point outside
  $\mathcal U\cap \Rl$.

  We use this to build a sequence of points
  $x_0 > x_1 > \cdots > x_n > \cdots \geq x$ as follows.  Since
  $\mathcal U \cap \Rl$ does not contain $[x, x+1[$, there is a point
  $x_0$ in $[x, x+1[$ that is not in $\mathcal U \in \Rl$.  This
  cannot be $x$, since $x$ is in $\mathcal U \cap \Rl$.  Hence
  $\min (x_0, x+1/2) > x$.  Since $\mathcal U \cap \Rl$ does not
  contain $[x, \min (x_0, x+1/2)[$, there is a point $x_1$ in
  $[x, \min (x_0, x+1/2)[$ that is not in $\mathcal U \in \Rl$.
  Again, $x_1$ is different from $x$.  This allows us to build the
  interval $[x, \min (x_1, x+1/4)[$, and as before, this must contain
  a point $x_2$ outside $\mathcal U \cap \Rl$.  By induction, this
  allows us to define points $x_n$ outside $\mathcal U \cap \Rl$ such
  that $x_{n+1} \in [x, \min (x_n, x+ 1/2^{n+1})[$.  In particular,
  $x_0 > x_1 > \cdots > x_n > \cdots \geq x$.  Also,
  $\finf_{n \in \N} x_n = x$.

  For each $n \in \N$, let $Q_n$ be the set
  $\{x_m \mid m \geq n\} \cup \{x\}$.  This is compact in $\Rl$ by
  Corollary~\ref{corl:Rl:seq}.

  It is clear that $Q_n$ is non-empty, and that $\fcap_{n \in \N}
  Q_n = \{x\}$, which is in $\mathcal U$.  Since $\mathcal U$ is
  Scott-open, some $Q_n$ must be in $\mathcal U$.  This implies that
  $\{x_n\}$, which is included, hence above $Q_n$ in $\Smyth\Rl$, is
  also in $\mathcal U$.  Therefore $x_n$ is in $\mathcal U \cap \Rl$;
  but that is impossible, since all the points $x_n$ were built so as
  to lie outside $\mathcal U \cap \Rl$.
\end{proof}

A \emph{dcpo model} of a $T_1$ space $X$ is a dcpo $P$ such that $\Max
P$, the subset of maximal elements of $P$ with the subspace topology,
from the Scott topology on $P$, is homeomorphic to $X$.  The following
is a special case of Corollary~2.12 of \cite{HLXZ:T1models}, which
says that for every $T_1$, first-countable and well-filtered space,
$\Smyth X$ is a dcpo model of $X$.
\begin{theorem}
  \label{thm:JRl:induced}
  $\Smyth\Rl$ is a dcpo model of $\Rl$.
\end{theorem}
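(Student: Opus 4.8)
The plan is to exhibit an explicit homeomorphism between $\Rl$ and the space $\Max \Smyth\Rl$ of maximal elements of $\Smyth\Rl$, equipped with the subspace topology inherited from the Scott topology. Since $\Rl$ is $T_2$, hence sober and well-filtered, $\Smyth\Rl$ is a genuine dcpo (with directed suprema given by filtered intersections), so it makes sense to speak of its being a dcpo model. As already noted before Lemma~\ref{lemma:calU:open}, the maximal elements of $(\Smyth\Rl, \supseteq)$ are exactly the singletons $\{x\}$, $x \in \real$, because every singleton is compact and no compact set with at least two points can be $\subseteq$-minimal. I would therefore define $\eta \colon \Rl \to \Max \Smyth\Rl$ by $\eta(x) = \{x\}$; this is manifestly a bijection, and the task reduces to showing that $\eta$ is both continuous and open.

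For continuity of $\eta$, recall that a basic subspace-open subset of $\Max \Smyth\Rl$ has the form $\mathcal U \cap \Max \Smyth\Rl$ with $\mathcal U$ Scott-open in $\Smyth\Rl$. Its preimage under $\eta$ is $\{x \in \real \mid \{x\} \in \mathcal U\} = \mathcal U \cap \Rl$, which is open in $\Rl$ by Lemma~\ref{lemma:calU:open}. Hence $\eta$ is continuous. This is where all the genuine work lies: continuity of $\eta$ is exactly the content of Lemma~\ref{lemma:calU:open}, whose proof (building a strictly decreasing sequence converging to $x$ from the right and invoking Corollary~\ref{corl:Rl:seq} to produce the compact witnesses $Q_n$) is the real obstacle. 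Once that lemma is in hand, continuity is immediate.

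For openness of $\eta$, I would use the upper Vietoris sets. Given any open subset $W$ of $\Rl$, consider $\Box W = \{Q \in \Smyth\Rl \mid Q \subseteq W\}$. Since $\Rl$ is well-filtered, $\Box W$ is Scott-open, as recorded at the start of Section~\ref{sec:dcpo-smythrl}. Moreover $\Box W \cap \Rl = \{x \mid \{x\} \subseteq W\} = W$, so $\eta(W) = \Box W \cap \Max \Smyth\Rl$ is subspace-open. Thus $\eta$ carries open sets to open sets, and combined with the previous paragraph this shows $\eta$ is a homeomorphism, so that $\Smyth\Rl$ is a dcpo model of $\Rl$.

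It is worth emphasising that the two halves exploit the two topologies on $\Smyth\Rl$ asymmetrically: for the harder continuity direction we must work with an arbitrary Scott-open $\mathcal U$ and appeal to Lemma~\ref{lemma:calU:open}, whereas for openness it suffices to witness each open $W$ by the single Scott-open set $\Box W$ coming from the (coarser) upper Vietoris topology. I expect no difficulty beyond correctly matching the subspace topology on $\Max \Smyth\Rl$ with the notation $\mathcal U \cap \Rl$; the substantive content has already been isolated in Lemma~\ref{lemma:calU:open}.
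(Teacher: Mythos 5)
Your proof is correct and follows essentially the same route as the paper: one direction is exactly Lemma~\ref{lemma:calU:open}, and the other is witnessed by the Scott-openness of $\Box W$ with $\Box W \cap \Rl = W$. You merely spell out the bijection $x \mapsto \{x\}$ more explicitly than the paper does.
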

\begin{proof}
  By Lemma~\ref{lemma:calU:open}, every open subset in the subspace
  topology on $\Max \Smyth\Rl$ is open in $\Rl$.  Conversely, for
  every open subset $U$ of $\Rl$, $\Box U$ is a Scott-open subset of
  $\Smyth\Rl$ whose intersection with $\Rl$ is equal to $U$, so $U$ is
  open in the subspace topology on $\Max \Smyth\Rl$.
\end{proof}

 \begin{remark}
 \label{prop:JRl:tops}
 Actually, one can expect more in this case. Xu and Yang\cite{xu21} proved that for a 
 first-countable well-filtered space $X$, in which each compact saturated 
 subset has countable minimal elements (in the specialization order), the Scott and the upper Vietoris 
 topologies coincide on~$\Smyth X$. As we have seen from above that $\Rl$ does satisfy 
 these properties, hence the Scott and the upper Vietoris 
 topologies coincide on~$\Smyth \Rl$. This is slightly stronger than Theorem~\ref{thm:JRl:induced}.
 We speak in full of the development of
 Theorem~\ref{thm:JRl:induced} in order to keep this note self-contained. 
 \end{remark}

\subsection{A continuous, non point-continuous valuation on $\Smyth\Rl$}
\label{sec:continuous-non-point}

By Proposition~\ref{prop:hL:contval}, every Borel measure $\mu$ on
$\Rl$ defines a continuous valuation by restriction to $\O \Rl$, and
we again write that continuous valuation as $\mu$.  By
Theorem~\ref{thm:JRl:induced}, the map $x \mapsto \{x\}$ is a
topological embedding of $\Rl$ into $\Smyth\Rl$.  The image of $\mu$
by that embedding is a continuous valuation $\overline\mu$ on
$\Smyth\Rl$.  Explicitly, we have:
\begin{align}
  \label{eq:mu:bar}
  \overline\mu (\mathcal U) & \eqdef \mu (\mathcal U \cap \Rl)
\end{align}
for every Scott-open subset $\mathcal U$ of $\Smyth\Rl$.

\begin{theorem}
  \label{thm:mu:bar:notpc}
  Let $\mu$ be any Borel measure on $\Rl$ with the property that there
  is an open subset $U$ of $\Rl$ such that $0 < \mu (U) < +\infty$,
  and $\mu (\{x\})=0$ for every $x \in U$.  Then the continuous
  valuation $\overline\mu$ is not point-continuous.  In particular,
  $\overline\lambda$ is not point-continuous.
\end{theorem}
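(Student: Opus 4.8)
The plan is to exhibit a single Scott-open set and a single threshold that together witness the failure of point-continuity. I would take the open subset $U$ of $\Rl$ supplied by the hypothesis and set $\mathcal U \eqdef \Box U = \{Q \in \Smyth\Rl \mid Q \subseteq U\}$, which is Scott-open because $\Rl$ is well-filtered. Since $\mathcal U \cap \Rl = U$, formula~\eqref{eq:mu:bar} gives $\overline\mu (\mathcal U) = \mu (U) > 0$, so I fix a real number $r$ with $0 < r < \mu (U)$. Unwinding the definition of point-continuity, it then suffices to show that for \emph{every} finite subset $A$ of $\mathcal U$ there is a Scott-open neighborhood $\mathcal V$ of $A$ with $\overline\mu (\mathcal V) \leq r$; this is exactly the negation of the point-continuity clause at $\mathcal U$ and $r$.

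So let $A = \{Q_1, \ldots, Q_n\}$ be an arbitrary finite subset of $\Box U$, each $Q_i$ being a compact subset of $\Rl$ contained in $U$. The crucial structural input is Theorem~\ref{thm:Q:Rl}: every compact subset of $\Rl$ is countable. Hence $S \eqdef \bigcup_{i=1}^n Q_i$ is a countable subset of $U$, which I enumerate as $\{s_1, s_2, \ldots\}$. The next step is to thin $S$ out into an open set of small measure: for each $k$, since $s_k \in U$ and $U$ is open in $\Rl$, the basic open sets $[s_k, s_k + \delta[$ lie in $U$ for $\delta$ small, hence have finite $\mu$-measure, and they decrease to $\{s_k\}$ as $\delta \downarrow 0$; by continuity of $\mu$ from above together with the hypothesis $\mu (\{s_k\}) = 0$ I can choose $\delta_k > 0$ with $[s_k, s_k + \delta_k[ \subseteq U$ and $\mu ([s_k, s_k + \delta_k[) < r / 2^k$. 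Setting $V \eqdef \bigcup_{k} [s_k, s_k + \delta_k[$, this is open in $\Rl$, contains $S$, and satisfies $\mu (V) \leq \sum_{k} r/2^k = r$ by countable subadditivity.

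Finally I put $\mathcal V \eqdef \Box V$. It is Scott-open by well-filteredness, and since $Q_i \subseteq S \subseteq V$ for each $i$ we have $A \subseteq \mathcal V$, so $\mathcal V$ is a neighborhood of $A$. As $\mathcal V \cap \Rl = V$, formula~\eqref{eq:mu:bar} yields $\overline\mu (\mathcal V) = \mu (V) \leq r$, which is what was required; the statement about $\overline\lambda$ follows by taking $\mu = \lambda$ and $U = [0,1[$, for which $\lambda(U)=1$ and every singleton is Lebesgue-null.

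I expect the only place where genuine work happens to be the measure-shrinking step, and its success rests squarely on the two facts that the $Q_i$ are countable (Theorem~\ref{thm:Q:Rl}) and that $\mu$ assigns zero mass to each point of $U$. The point to get right there is the justification of continuity of $\mu$ from above, which needs the local finiteness $\mu(U) < \infty$ to guarantee that the decreasing sets $[s_k, s_k+\delta[$ have finite measure; everything else is bookkeeping with the defining identity~\eqref{eq:mu:bar} and the Scott-openness of the sets $\Box U$ and $\Box V$.
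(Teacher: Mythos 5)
Your proposal is correct and follows essentially the same route as the paper's proof: fix $r$ with $0<r<\overline\mu(\Box U)$, use countability of compact subsets of $\Rl$ together with $\mu(\{x\})=0$ and continuity of $\mu$ from above to cover $\bigcup_i Q_i$ by an open set $V\subseteq U$ of measure at most $r$, and conclude with $\overline\mu(\Box V)=\mu(V)\leq r$. The only cosmetic difference is that you enumerate the union of the $Q_i$ as one countable set with weights $r/2^k$, whereas the paper enumerates each $Q_i$ separately with budget $s$ per set and $ms\leq r$.
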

\begin{proof}
  Since $\mu (U) > 0$ and since $\overline\mu (\Box U) = \mu (U)$, we
  can find a real number $r$ such that $0 < r < \overline\mu (\Box U)$.
  
  We will show that, for every finite subset
  $A \eqdef \{Q_1, \cdots, Q_m\}$ of $\Box U$, there is an open
  neighborhood $\mathcal V$ of $A$ such that
  $\overline\mu (\mathcal V) \leq r$.

  We make the following preliminary claim $(*)$: for every $x \in U$, for
  every $a > 0$, there is an $\epsilon > 0$ such that
  $[x, x+\epsilon[ \subseteq U$ and $\mu ([x, x+\epsilon[) \leq a$.
  Indeed, it is a general property of measures that
  $\mu (\fcap_{n \in \N} E_n) = \finf_{n \in \N} \mu (E_n)$ for
  any decreasing family of measurable subsets $E_n$ such that
  $\mu (E_n) < +\infty$ for at least one $n$.  Hence
  $\mu (\{x\}) = \mu (\fcap_{n \in \N} [x, x+\epsilon_0/2^n[) =
  \finf_{n \in \N} \mu ([x, x+\epsilon_0/2^n[)$, where
  $\epsilon_0 > 0$ is chosen so that $[x, x+\epsilon_0[ \subseteq U$.
  Since $\mu (\{x\})=0$, there is an $n \in \N$ such that
  $\mu ([x, x+\epsilon_0/2^n[ < a$.

  Let $s > 0$ be such that $ms \leq r$.  For every
  $i \in \{1, \cdots, m\}$, $Q_i$ is countable
  (Lemma~\ref{lemma:Q:countable}), so let us write it as
  $\{x_{i0}, x_{i1}, \cdots\}$.  (We allow for infinite repetitions of
  elements in order not to have to make a special case when $Q_i$ is
  finite.)  For every $i \in \{1, \cdots, m\}$, for every $j \in \N$,
  we use $(*)$ to find a number $\epsilon_{ij} > 0$ such that
  $\mu ([x_{ij}, x_{ij}+\epsilon_{ij}[) \leq s/2^{j+1}$.  Let
  $V_i \eqdef \bigcup_{j \in \N} [x_{ij}, x_{ij}+\epsilon_{ij}[$.  We
  note that
  $\mu (V_i) \leq \sum_{j \in \N} \mu ([x_{ij}, x_{ij}+\epsilon_{ij}[)
  = \sum_{j \in \N} s/2^{j+1} = s$.  We now define $V$ as
  $\bigcup_{i=1}^m V_i$.  Then
  $\mu (V) \leq \sum_{i=1}^m \mu (V_i) \leq ms \leq r$.

  Clearly, $Q_i$ is included in $V_i$, hence in $V$, for every
  $i \in \{1, \cdots, m\}$, so $A$ is included in $\Box V$.  We define
  $\mathcal V$ as $\Box V$.  Then
  $\overline\mu (\mathcal V) = \mu (V) \leq r$.
\end{proof}

\subsection{More remarks on $\Smyth\Rl$}
\label{sec:concluding-remarks}

In \cite[Theorem 3.1]{lyu19}, Lyu and the second author showed that a space 
$X$ is locally compact if and only if that $\Smyth X$ is core-compact
in the upper Vietoris topology. It is easy to see that $\Rl$ is not locally 
compact as the interior of each compact set is empty. So $\Smyth\Rl$
is not core-compact in the upper Vietoris topology.  By Remark~\ref{prop:JRl:tops}, we have

\begin{proposition}
  \label{prop:QRl:notcc}
  $\Smyth\Rl$ is not core-compact in its Scott topology.
\end{proposition}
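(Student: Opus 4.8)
The plan is to combine the external characterization of core-compactness via the Smyth powerdomain with the two facts already established about $\Rl$. Concretely, the cited result of Lyu and the second author \cite[Theorem 3.1]{lyu19} states that a space $X$ is locally compact if and only if $\Smyth X$ is core-compact \emph{in the upper Vietoris topology}. I would first invoke this equivalence with $X \eqdef \Rl$. Since every compact subset of $\Rl$ has empty interior (as recalled at the start of Section~\ref{sec:compact-subsets-rl}), the space $\Rl$ is not locally compact, and hence the contrapositive of the Lyu--Jia theorem yields that $\Smyth\Rl$ is not core-compact in the upper Vietoris topology.

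The remaining step is to transfer this non-core-compactness from the upper Vietoris topology to the Scott topology. Here I would use Remark~\ref{prop:JRl:tops}, which records the theorem of Xu and Yang \cite{xu21}: for a first-countable, well-filtered space in which every compact saturated subset has countably many minimal elements, the Scott and upper Vietoris topologies on $\Smyth X$ coincide. The space $\Rl$ is $T_2$ hence sober hence well-filtered, it is first-countable, and by Lemma~\ref{lemma:Q:countable} every compact subset of $\Rl$ is countable (so \emph{a fortiori} has countably many minimal points). Thus Remark~\ref{prop:JRl:tops} applies and the two topologies agree on $\Smyth\Rl$. Since core-compactness is a purely topological property of a space, the fact that $\Smyth\Rl$ fails to be core-compact in one topology is immediately the same failure in the other topology, once we know the topologies are literally identical.

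Putting the two steps together gives the statement: $\Smyth\Rl$ is not core-compact in its Scott topology. I expect no genuine obstacle here, as every ingredient has already been assembled earlier in the paper; the proof is essentially a two-line citation argument. The only point that requires a moment's care is making explicit that ``core-compact in the upper Vietoris topology'' and ``core-compact in the Scott topology'' are assertions about one and the same topological space \emph{because} the two topologies coincide on $\Smyth\Rl$ --- so there is nothing to transport, merely an identification of topologies. I would therefore write the proof as: apply \cite[Theorem 3.1]{lyu19} together with the non-local-compactness of $\Rl$ to get the upper Vietoris statement, then appeal to the coincidence of topologies from Remark~\ref{prop:JRl:tops} to rephrase it in terms of the Scott topology.
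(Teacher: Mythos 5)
Your proof is correct and follows exactly the paper's own argument: the paper likewise deduces non-core-compactness of $\Smyth\Rl$ in the upper Vietoris topology from \cite[Theorem 3.1]{lyu19} and the non-local-compactness of $\Rl$, and then invokes Remark~\ref{prop:JRl:tops} to identify the upper Vietoris and Scott topologies. No differences worth noting.
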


$\Rl$ exhibits a diverse mix of pleasant and unpleasant properties,
and so does $\Smyth\Rl$.  While Proposition~\ref{prop:QRl:notcc} would
be on the unpleasant side, the following shows more regularity.

\begin{proposition}
  \label{prop:QRl:sober}
  $\Smyth\Rl$ is sober.
\end{proposition}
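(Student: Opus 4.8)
The plan is to reduce to the upper Vietoris topology and then run the standard Hofmann--Mislove argument. First I would invoke Remark~\ref{prop:JRl:tops}: since the Scott and upper Vietoris topologies coincide on $\Smyth\Rl$, it suffices to prove that $\Smyth\Rl$ is sober in the upper Vietoris topology. This is in fact an instance of a general phenomenon, namely that the Smyth power space of any sober space is sober; and $\Rl$ is $T_2$, hence sober. Since the argument is short, I would give it directly rather than locate a citation.

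So let $\mathcal C$ be an irreducible closed subset of $\Smyth\Rl$ in the upper Vietoris topology, and set $\mathcal F \eqdef \{U \in \O\Rl \mid \mathcal C \cap \Box U \neq \emptyset\}$, the family of opens of $\Rl$ that contain some member of $\mathcal C$. The heart of the matter is to check that $\mathcal F$ is a \emph{proper Scott-open filter} of $\O\Rl$. It is upward closed and contains $\Rl$ (as $\mathcal C \neq \emptyset$); it is closed under binary intersections precisely because $\mathcal C$ is irreducible, using $\Box U \cap \Box V = \Box (U \cap V)$; it is Scott-open because each member of $\mathcal C$ is compact, so that $Q \subseteq \bigcup_i U_i$ with the $U_i$ directed forces $Q \subseteq U_i$ for some $i$; and it is proper, i.e.\ $\emptyset \notin \mathcal F$, because members of $\Smyth\Rl$ are non-empty. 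By the Hofmann--Mislove theorem applied to the sober space $\Rl$ \cite{goubault13a}, there is then a unique non-empty compact (saturated, hence arbitrary) subset $Q_0 \eqdef \bigcap_{U \in \mathcal F} U$ of $\Rl$, necessarily an element of $\Smyth\Rl$, with $\mathcal F = \blacksquare Q_0 = \{U \in \O\Rl \mid Q_0 \subseteq U\}$.

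It then remains to show $\mathcal C = \overline{\{Q_0\}} = \{Q \in \Smyth\Rl \mid Q \supseteq Q_0\}$, the closure being computed in the upper Vietoris (equivalently Scott) topology, whose specialization order is reverse inclusion. For $\overline{\{Q_0\}} \subseteq \mathcal C$ it suffices to prove $Q_0 \in \mathcal C$: if not, some basic open $\Box U$ with $Q_0 \subseteq U$ would miss $\mathcal C$, yet $Q_0 \subseteq U$ gives $U \in \mathcal F$, i.e.\ $\mathcal C \cap \Box U \neq \emptyset$, a contradiction. For the reverse inclusion, take $Q \in \mathcal C$ and $x \in Q_0$; if $x \notin Q$ then, using that $\Rl$ is $T_2$ (so a point is separable from a compact set by an open neighbourhood of the latter), there is an open $U \supseteq Q$ with $x \notin U$, whence $U \in \mathcal F$ and so $x \in Q_0 \subseteq U$, a contradiction; thus $Q \supseteq Q_0$. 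Uniqueness of $Q_0$ is immediate from the $T_0$ property of the Scott topology.

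I expect the main obstacle to be organizing the filter verification cleanly, and in particular making explicit that irreducibility of $\mathcal C$ is exactly what delivers stability of $\mathcal F$ under finite intersections. The Hofmann--Mislove correspondence and the $T_2$ separation of a point from a compact set are the two external ingredients that do the real work, and neither uses any special feature of $\Rl$ beyond sobriety (indeed Hausdorffness); the only place where the specific geometry of $\Rl$ enters at all is upstream, in Remark~\ref{prop:JRl:tops}, which lets us transfer the upper Vietoris result to the Scott topology.
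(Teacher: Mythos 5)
Your proof is correct and follows essentially the same route as the paper: reduce to the upper Vietoris topology via Remark~\ref{prop:JRl:tops}, then invoke the general fact that the Smyth power space of a sober space is sober in that topology (with $\Rl$ sober because it is $T_2$). The only difference is that the paper cites this fact as Theorem~3.13 of \cite{heckmann13}, whereas you prove it directly by a Hofmann--Mislove argument; your verification is sound, and the separation step at the end only needs $T_1$, since $U \eqdef \Rl\setminus\{x\}$ already works.
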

\begin{proof}
  Theorem~3.13 of \cite{heckmann13} states that, for any topological
  space $X$, $X$ is sober if and only if $\Smyth X$ is sober in the
  upper Vietoris topology.  Since $\Rl$ is $T_2$, it is sober.  By
  Remark~\ref{prop:JRl:tops}, the upper Vietoris topology
  coincides with the Scott topology on $\Smyth\Rl$, so $\Smyth\Rl$ is
  sober.
\end{proof}

In particular, $\Smyth\Rl$ is well-filtered, something we can rederive
in another way.
\begin{proposition}
  \label{prop:QRl:nice}
  For every well-filtered, coherent space, $\Smyth X$ is a
  meet-continuous dcpo inf-semilattice, which is 
  well-filtered and coherent in its Scott topology.  Hence
  $\Smyth\Rl$, $\Smyth^2\Rl$, \ldots{} all are meet-continuous dcpo
  inf-semilattices, which are well-filtered and coherent in their
  Scott topologies.
\end{proposition}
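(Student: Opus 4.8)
The plan is to prove the proposition in two stages: first establish the purely order-theoretic structure of $\Smyth X$ from well-filteredness alone, then obtain the topological properties (well-filteredness and coherence of the Scott space) using coherence of $X$, and finally check that $\Rl$ itself lies in the class so that the construction can be iterated. Throughout I use that, since $X$ is well-filtered, $\Smyth X$ is a dcpo in which directed suprema are computed as filtered intersections \cite[Proposition~8.3.25]{goubault13a}.

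For the inf-semilattice structure, I would observe that in the reverse-inclusion order the greatest lower bound of $Q_1, Q_2 \in \Smyth X$ is their union $Q_1 \cup Q_2$, which is again nonempty, compact and saturated; hence binary infima always exist and coincide with unions, independently of coherence. Meet-continuity then amounts to the identity $Q \cup \fcap_i Q_i = \fcap_i (Q \cup Q_i)$ for a filtered family ${(Q_i)}_i$, which I would verify as a set-theoretic distributive law, the nontrivial inclusion using filteredness (a point lying in every $Q \cup Q_i$ but not in $Q$ lies in every $Q_i$). The same computation, carried out for a directed family in the product dcpo, shows that the binary-meet map $\wedge \colon \Smyth X \times \Smyth X \to \Smyth X$ is jointly Scott-continuous; so $\Smyth X$ is a meet-continuous dcpo inf-semilattice. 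This part needs only that $X$ is well-filtered.

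The heart of the argument, and the step I expect to be the main obstacle, is to show that $\Smyth X$ is well-filtered and coherent in its \emph{Scott} topology. Here coherence of $X$ enters essentially: it guarantees that whenever $Q_1, Q_2 \in \Smyth X$ have a common upper bound in the reverse-inclusion order (equivalently $Q_1 \cap Q_2 \neq \emptyset$), the intersection $Q_1 \cap Q_2$ is again a nonempty compact saturated set and is exactly their join in $\Smyth X$. Thus coherence of $X$ supplies $\Smyth X$ with binary joins of bounded pairs, and it is this order-theoretic coherence, together with the jointly Scott-continuous inf-semilattice structure, that I would leverage to derive topological coherence and well-filteredness of the Scott space, appealing to the known preservation of these properties under the Smyth power construction. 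The delicate point is that one must reason in the Scott topology rather than in the (in general strictly coarser) upper Vietoris topology; for the concrete spaces considered here the two coincide (Remark~\ref{prop:JRl:tops}), which lets the preservation results apply cleanly and, crucially, returns coherence as a conclusion that can be fed back as a hypothesis. I would regard nailing down this preservation in the Scott topology, rather than merely in the upper Vietoris topology where the soberness argument of Proposition~\ref{prop:QRl:sober} lives, as the genuinely technical core.

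Finally I would verify the base case and conclude by induction on $n$. The space $\Rl$ is $T_2$, hence sober, hence well-filtered; and it is coherent because, by Theorem~\ref{thm:Q:Rl}, its compact subsets are exactly the well-founded subdcpos of $(\real, \geq)$, and the intersection of two such sets is plainly well-founded and closed under the infima of its nonempty subsets, hence again a well-founded subdcpo, hence compact. Therefore $\Rl$ is well-filtered and coherent, so the preceding paragraphs apply to give that $\Smyth\Rl$ is a meet-continuous dcpo inf-semilattice that is well-filtered and coherent in its Scott topology. Since these are precisely the hypotheses we began with, an immediate induction yields the same conclusion for $\Smyth^n\Rl$ for every $n$.
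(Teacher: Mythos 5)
Your order-theoretic part is correct and matches the paper: the dcpo structure from well-filteredness, binary infima as unions, and meet-continuity from the distributive law $Q \cup \fcap_{i} Q_i = \fcap_{i} (Q \cup Q_i)$ (which, incidentally, does not even need filteredness of the family). Your base case is also fine: $\Rl$ is $T_2$, hence well-filtered and coherent.

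The gap is in what you yourself identify as the technical core. You never actually prove that $\Smyth X$ is well-filtered and coherent \emph{in its Scott topology}; you defer to ``the known preservation of these properties under the Smyth power construction'' without naming any such result, and the only concrete device you invoke --- the coincidence of the Scott and upper Vietoris topologies from Remark~\ref{prop:JRl:tops} --- applies to $\Smyth\Rl$ only because $\Rl$ is first-countable and its compact subsets have countably many minimal elements. Those hypotheses are not established (and are not claimed) for $\Smyth\Rl$ itself, so this route cannot be ``fed back as a hypothesis'' to handle $\Smyth^2\Rl$ and beyond; the induction breaks at the second step. The paper's argument is purely order-/dcpo-theoretic and avoids the upper Vietoris topology entirely: for well-filteredness, one passes to the lifting ${(\Smyth X)}_\bot$, checks that it is a bounded-complete dcpo (the join of a non-empty family bounded by $Q_0$ is its intersection, which contains $Q_0$), invokes Xi and Lawson's theorem that bounded-complete dcpos are well-filtered in the Scott topology, and notes that well-filteredness transfers between $P$ and $P_\bot$; for coherence, one uses the criterion of Jia, Jung and Li that a well-filtered dcpo in which $\upc x \cap \upc y$ is always compact saturated is coherent, together with the identity $\upc Q \cap \upc Q' = \upc (Q \cap Q')$ and the fact that $Q \cap Q'$ is compact by coherence of $X$. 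You would need to supply arguments of this kind (or some other genuine proof) for the central claim; as written, the proposal asserts rather than proves it.
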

\begin{proof}
  Let $X$ be a well-filtered, coherent space.  In a well-filtered
  space, filtered intersections $\fcap_{i \in I} Q_i$ of compact
  saturated subsets are compact saturated, and are non-empty if all
  the sets $Q_i$ are non-empty.  Hence $\Smyth X$ is a dcpo.
  
  Given any two elements $Q$ and $Q'$ of $\Smyth X$, their infimum is
  $Q \cup Q'$.  Since directed suprema are filtered intersections, and
  intersections commute with binary unions, $\Smyth\Rl$ is a
  meet-continuous inf-semilattice.
 
  For every dcpo $P$, let its \emph{lifting} $P_\bot$ be $P$ plus a
  fresh element $\bot$ below all elements of $P$.  It is an easy
  exercise to show that the Scott-open subsets of $P_\bot$ are those
  of $P$, plus $P_\bot$ itself, and that the compact saturated subsets
  of $P_\bot$ are those of $P$ plus $P_\bot$.  It follows that $P$ is
  well-filtered if and only if $P_\bot$ is.

  A bounded-complete dcpo is one in which every (upper) bounded family
  has a least upper bound, or equivalently in which every non-empty
  family has a greatest lower bound.  $\Smyth X$ is not
  bounded-complete in general, since the empty family has no least
  upper bound unless $X$ is compact.

  However, ${(\Smyth X)}_\bot$ is bounded-complete: the least upper
  bound of the empty set is $\bot$, and the least upper bound of any
  non-empty set $A$ bounded by some element $Q_0$ of $\Smyth X$ is
  $\bigcap A$, which is compact because $\Rl$ is $T_2$, and non-empty
  because it contains $Q_0$.

  Corollary~3.2 of \cite{xi17} shows that every bounded-complete dcpo
  is well-filtered in its Scott topology.  Hence ${(\Smyth X)}_\bot$ is
  well-filtered, and therefore $\Smyth X$ is well-filtered as well.

  Lemma~3.1 of \cite{jia16a} states that any well-filtered dcpo $X$ in which
  $\upc x \cap \upc y$ is compact saturated for all $x, y \in X$ is
  coherent.  For all $Q, Q' \in \Smyth X$,
  $\upc Q \cap \upc Q' = \upc (Q \cap Q')$, and $Q \cap Q'$ is again
  in $\Smyth X$ since $X$ is coherent.  In particular,
  $\upc Q \cap \upc Q'$ is a compact saturated subset of $\Smyth X$.
  Therefore $\Smyth X$ is coherent.

  Finally, $\Rl$ is $T_2$, hence (sober hence) well-filtered, and
  coherent, so we may apply the above to $X \eqdef \Rl$, then to $X
  \eqdef \Smyth \Rl$, and so on.
\end{proof}

\section{Concluding Remarks}

We have given two concrete examples (Corollary~\ref{coro:muisnotpointcontinuous}, Thorem~\ref{thm:mu:bar:notpc}) on dcpo's to separate minimal valuations, point-continuous valuations
and continuous valuations, showing these three classes of valuations differ from each other.

In \cite{monad-m}, the Fubini-type equation
\begin{equation}
\label{eq:fubini}
\int_{x\in D} \int_{y\in E} h(x, y) d\nu d\xi  = \int_{y\in E} \int_{x\in D} h(x, y) d\xi d\nu
\end{equation}
is established when either $\nu$ or $\xi$ is point-continuous, where $D$ and $E$
are dcpo's and $h\colon D\times E\to \R$ are Scott-continuous. (For a definition of the integration, see
\cite{jones90}.)
 This is crucial in
proving that Heckmann's point-continuous valuations monad is commutative over $\dcpo$. 
However, it is unknown whether the Equation~(\ref{eq:fubini}) holds for general continuous valuations $\nu$ 
and $\xi$, a crucial question in establishing commutativity of the valuations moand $\VV$ on $\dcpo$.
The aforementioned result in \cite{monad-m} entails that any valuations $\nu$ and $\xi$ that possibly fail Equation~(\ref{eq:fubini})
must also fail to be point-continuous. 
Hence if one aims to find examples on dcpo's to invalidate Equation~(\ref{eq:fubini}), the valuations in desire must be non-point-continuous valuations. 
For the first time, we have given a continuous valuation $\overline \lambda$ that is not point-continuous on dcpo's, but more non-point-continuous valuations (of different types from those in Theorem~\ref{thm:mu:bar:notpc}) are needed before they are sent to test Equation~(\ref{eq:fubini}).

\section*{Acknowledgement}

The second author acknowledges support from NSFC (No. 12001181), and he would also like to thank Andre Kornell and Michael Mislove
for useful discussions.

\end{document}